\newcommand{\bzero}{\boldsymbol{0}}
\newcommand{\bone}{\boldsymbol{1}}
\newcommand{\bmu}{\boldsymbol{\mu}}
\newcommand{\bSigma}{\boldsymbol{\Sigma}}
\newcommand{\bI}{\bf{I}}
\newcommand {\ctn}{\cite}
\newtheorem{lemma}{Lemma}[section]
\newtheorem{theorem}{Theorem}[section]
\begin{document}
\title{\textbf{On Geometric Ergodicity of Additive and Multiplicative Transformation Based Markov Chain Monte Carlo
in High Dimensions}}
\author{ Kushal Kr. Dey$^{\dag}$ , Sourabh Bhattacharya$^{\ddag, +}$ }
\date{}
\maketitle
\begin{center}
$^{\dag}$  University of Chicago \\
$^{\ddag}$ Indian Statistical Institute\\
$+$ Corresponding author:  \href{mailto: bhsourabh@gmail.com}{bhsourabh@gmail.com}
\end{center}

\begin{abstract}
Recently \ctn{Dutta13} introduced a novel Markov Chain Monte Carlo methodology that can simultaneously update 
all the components of high dimensional parameters using simple deterministic transformations of 
a one-dimensional random variable drawn from any arbitrary distribution
defined on a relevant support. The methodology, which the authors refer to as Transformation-based
Markov Chain Monte Carlo (TMCMC), greatly enhances computational speed and acceptance rate
in high-dimensional problems. 
Two significant transformations associated with TMCMC are additive and multiplicative transformations. 
Combinations of additive and multiplicative transformations are also of much interest.
In this work we investigate geometric ergodicity associated with additive and multiplicative TMCMC,
along with their combinations, assuming that the target distribution is multi-dimensional and belongs to the
super-exponential family; we also illustrate their efficiency in practice with simulation
studies.
\\[2mm]
{\bf Keywords}: {\it Acceptance Rate; Geometric Ergodicity; High Dimension; Mixture; Proposal Distribution; 
Transformation-based Markov Chain Monte Carlo.}


\end{abstract}

\section{Introduction}

It is well-known that in high dimensions traditional Markov Chain Monte Carlo (MCMC) methods,
such as the Metropolis-Hastings algorithm,
face several challenges, with respect to computational complexity, as well as with 
convergence issues. Indeed, Bayesian computation often requires inversion of high-dimensional matrices
in each MCMC iteration, causing enormous computational burden. Moreover, such high-dimensional problems
may converge at an extremely slow rate, because of the complicated posterior dependence among the parameters.
This implies the requirement of an extremely large number of iterations, but since even individual iterations
may be computationally burdensome, traditional MCMC methods do not seem to be ideally suited for Bayesian analysis
of complex, high-dimensional problems.

In an effort to combat the problems \ctn{Dutta13} proposed a novel methodology
that can update all the parameters simultaneously in a single block using 
simple deterministic bijective transformations of 
a one-dimensional random variable (or any other low-dimensional random variables) 
drawn from some arbitrary distribution.
The idea effectively reduces the high-dimensional random parameter to a one-dimensional parameter,
thus dramatically improving computational speed and acceptance rate. Details are provided in \ctn{Dutta13}.

Among the deterministic, bijective transformations, \ctn{Dutta13} recommend the additive and the multiplicative
transformations. Here it is important to mention that the multiplicative transformation is designed to update
parameters on the real line, not just on $(0,\infty)$, and thus, can not be represented as 
the log-additive transformation.
In Sections \ref{subsec:additive_tmcmc} and \ref{subsec:multiplicative_tmcmc} we provide brief overviews 
of additive and multiplicative TMCMC, respectively. In Section \ref{subsec:add_mult_tmcmc} we briefly
explain additive-multiplicative TMCMC, which is a combination of additive and multiplicative TMCMC.

This paper deals with geometric ergodicity (or geometric rate of convergence) of the 
TMCMC chain (both additive and multiplicative,
along with their mixtures of two kinds) 
to the multi-dimensional stationary distribution. 
The geometric ergodicity property, apart from theoretically ensuring convergence of the underlying Markov chain
to the stationary distribution at a geometric rate,
also ensures asymptotic stability of a regular 
family of stochastic estimates through 
the application of the central limit theorem (see \ctn{Meyn93}, Chapter 17, and \ctn{jones01}, Section 5.3). 
The geometric ergodicity of the Random Walk Metropolis Hastings (RWMH) chain is already well documented 
(see \ctn{Mengersen96}, 
\ctn{Roberts96}, \ctn{Jarner00}). Some extensions of these 
results to chains with polynomial rates of convergence and specific forms of target densities 
(for instance, heavy tailed families) are also available in the literature (\ctn{Jarner02}, \ctn{Jarner07}). 
In this paper we present conditions that guarantee geometric ergodicity of the TMCMC chain 
corresponding to both additive and multiplicative moves, when the target distribution is multi-dimensional. 
Crucially, we assume that the target distribution belongs to the super-exponential family. Note that
the super-exponential assumption has also been crucially used by \ctn{Jarner00} for proving geometric ergodicity
of RWMH for multi-dimensional target distributions.

While dealing with multiplicative TMCMC,
we encounter a technical problem, which is bypassed by forming an appropriate mixture of 
additive and multiplicative moves, to which we refer as ``essentially fully" multiplicative TMCMC. 
We also consider a usual mixture of additive and multiplicative moves. We establish 
geometric ergodicity of both kinds of mixtures and demonstrate with simulation studies
that the usual mixture outperforms RWMH, additive TMCMC, as well as ``essentially fully" multiplicative TMCMC.

In Section \ref{sec:geo_additive}, we give conditions 
for geometric ergodicity of additive TMCMC. The approach to establishing geometric ergodicity of the TMCMC chains 
associated with multiplicative TMCMC is more complicated and is covered in detail in 
Section \ref{sec:geo_multiplicative}. In 
Section \ref{sec:simulation}, we illustrate the practical implications of our theoretical results by 
conducting simulation studies, where we numerically compare convergence issues 
of the TMCMC approach with that of RWMH, especially in high dimensions. 
In Section \ref{sec:non_super_exponential}, we discuss extension of our approach to situations where
the high-dimensional target densities are not in the super-exponential family but can be dealt with
using special techniques, in particular, a diffeomorphism based method developed by \ctn{Johnson12}, and conduct detailed
simulation studies in such set-up, demonstrating that TMCMC very significantly outperforms RWMH
that set-up.
Concluding remarks are provided in Section \ref{sec:conclusions}.

\subsection{Additive TMCMC}
\label{subsec:additive_tmcmc}

Suppose that we are simulating from a $d$ dimensional space (usually $\mathbb{R}^{d}$), and suppose we are currently at a point 
$x= (x_{1}, \ldots, x_{d})$.
Let us define $d$ random variables $b_{1}, \ldots, b_{d}$, such that, for $i=1,\ldots,d$, 
\begin{equation}
b_{i} =\left\{\begin{array}{ccc} +1 & \mbox{with probability} & p_i; \\
 -1 & \mbox{with probability} & 1-p_i.
 \end{array}\right.
 \label{eq:b_add}
\end{equation}
The additive TMCMC uses moves of the following type: 
\begin{equation*}
(x_{1}, \ldots, x_{d}) \rightarrow (x_{1}+ b_{1}\epsilon, \ldots, x_{d}+b_{d}\epsilon), 
\end{equation*}
where $\epsilon\sim g^{(1)}=q^{(1)}(\cdot)I_{\{\epsilon>0\}}$. Here $q^{(1)}(\cdot)$ is an arbitrary density with support $\mathbb R_+$, the
positive part of the real line, and
for any set $A$, $I_{A}$ denotes the indicator function of $A$. We define $T^{(1)}_b(x,\epsilon)=(x_1+b_1\epsilon,\ldots,x_d+b_d\epsilon)$ 
to be the additive transformation of $x$ corresponding to the `move-type' $b$.
In this work,
we shall assume that $p_i=1/2$ for $i=1,\ldots,d$.
Note that the Jacobian of the additive transformations is one. 

Thus, a single $\epsilon$ is simulated from $q(\cdot)I_{\{\epsilon>0\}}$, which is then either added to, or subtracted
from each of the $d$ coordinates of $x$ with probability $1/2$. Assuming that the target distribution is proportional to 
$\pi$, the new move $T^{(1)}_b(x,\epsilon)$, corresponding to the move-type $b$,
is accepted with probability
\begin{equation}
\alpha=\min\left\{1,\frac{\pi(T^{(1)}_b(x,\epsilon))}{\pi(x)}\right\}.
\label{eq:acc_additive}
\end{equation}

The path diagram for additive TMCMC that displays the possible regions to which 
our chain can move to starting from a fixed point, is presented in Figure~\ref{fig:addTMCMC}.
\begin{figure}
\centering
\includegraphics[width=7cm,height=7cm]{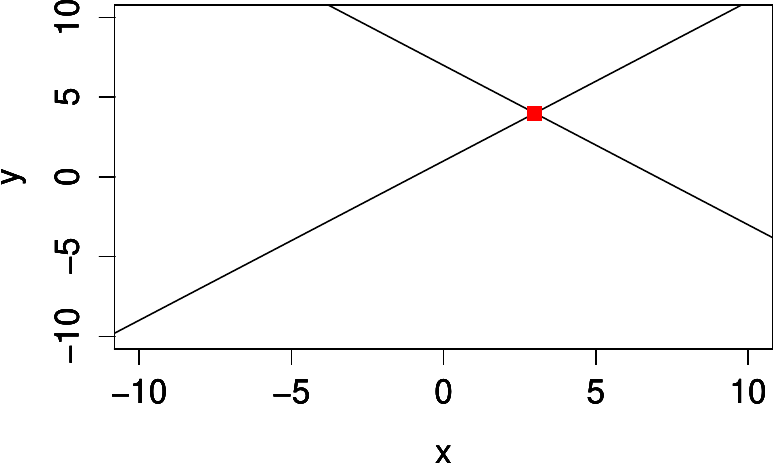}
\caption{Path diagram for Additive TMCMC in one step from a fixed point denoted by the red patch in the
middle.}
\label{fig:addTMCMC}
\centering
\includegraphics[width=7cm,height=7cm]{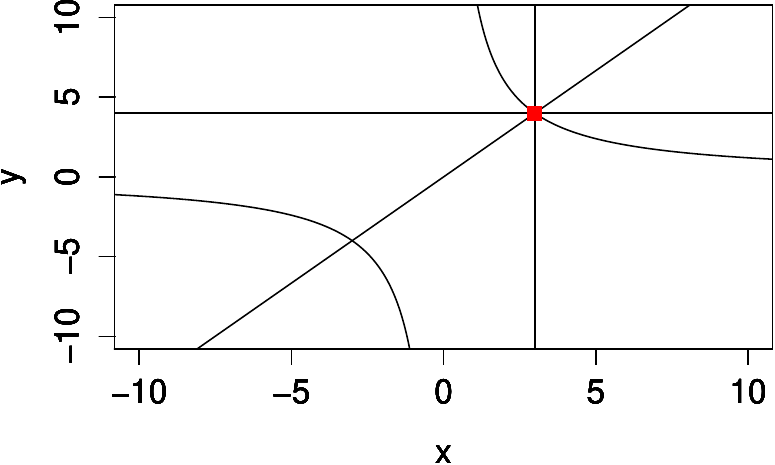}
\caption{Path diagram for the first version of Multiplicative TMCMC in one step from a 
fixed point denoted by the red patch.}
\label{fig:fullMultiplicative}
\centering
\includegraphics[width=7cm,height=7cm]{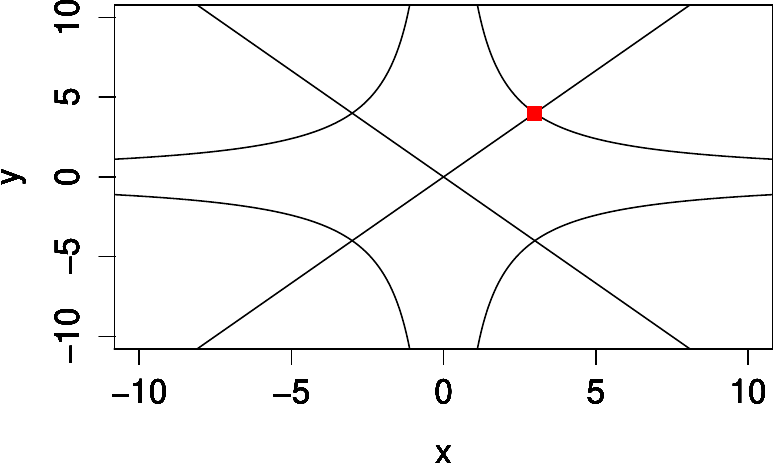}
\caption{Path diagram for the second version of Multiplicative TMCMC in one step 
from a fixed point denoted by the red patch.}
\label{fig:fullMultiplicative2}
\end{figure}

In this paper we show, under appropriate and reasonably general assumptions on $\pi$, that additive 
TMCMC with $p_i=1/2;~i=1,\ldots,d$, is geometrically ergodic for any finite dimension $d$.

\subsubsection{Discussion on non-uniform move-type probabilities for additive TMCMC}
\label{subsubsec:additive_non_uniform_move_type}
For simplicity of illustration, let us assume that $p_i=p;~i=1,\ldots,d$. Also, let 
$Y=\sum_{i:b_i=1}b_i$. Then $Y\sim \mbox{Binomial}\left(d,p\right)$.
The acceptance probability is then given by
\begin{align}
\alpha&=\min\left\{1,\left(\frac{p}{1-p}\right)^Y\left(\frac{1-p}{p}\right)^{(d-Y)}
\frac{\pi(T^{(1)}_b(x,\epsilon))}{\pi(x)}\right\}\notag\\
&=\min\left\{1,\left(\frac{p}{1-p}\right)^{(2Y-d)}
 \frac{\pi(T^{(1)}_b(x,\epsilon))}{\pi(x)}\right\}
\label{eq:acc_additive2}
\end{align}
Now, as $d\rightarrow\infty$, $(2Y-d)\stackrel{a.s}{\sim} (2dp-d)=d(2p-1)$, where, for any 
two random sequences $\left\{m_d;~d=1,2,\ldots\right\}$ and 
$\left\{n_d;~d=1,2,\ldots\right\}$, $m_d\stackrel{a.s.}{\sim} n_d$ 
indicates $\underset{d\rightarrow\infty}{\lim}\frac{m_d}{n_d}=1$, almost surely.
Hence, for $p\neq 1/2$, $\left(\frac{p}{1-p}\right)^{(2Y-d)}\stackrel{a.s.}{\rightarrow}\infty$.

Now note that, for additive TMCMC with a single $\epsilon$, as $d\rightarrow\infty$, 
the ratio $\frac{\pi(T^{(1)}_b(x,\epsilon))}{\pi(x)}$
is expected to converge to zero at a very slow rate. Indeed, it follows from the supplement of \ctn{Dutta13} 
that under the strong log-concavity assumption on $\pi$, 
the acceptance rate with these non-uniform move-type probabilities satisfies the following inequalities
as $d\rightarrow\infty$:
\begin{equation}
\left\{2\Phi\left(\sqrt{-\frac{2}{dM_d}\log\frac{1-\psi_2}{c_d}}\right)-1\right\}\leq 
AR_p\leq \left\{2\Phi\left(\sqrt{-\frac{2}{dM_d}\log\frac{\psi_1}{c_d}}\right)-1\right\},
\label{eq:AR_non_uniform}
\end{equation}
where $0<\psi_1,\psi_2<1$, $M_d=O\left(d^t\right);~t>2$, and $c_d = \left(\frac{p}{1-p}\right)^{d(2p-1)}$. 
For $p=1/2$, we obtain the following asymptotic inequality proved in the supplement of \ctn{Dutta13} 
\begin{equation}
\left\{2\Phi\left(\sqrt{-\frac{2}{dM_d}\log (1-\psi_2)}\right)-1\right\}\leq 
AR_{\frac{1}{2}}\leq \left\{2\Phi\left(\sqrt{-\frac{2}{dM_d}\log \psi_1}\right)-1\right\},
\label{eq:AR_uniform}
\end{equation}
which is a special case of (\ref{eq:AR_non_uniform}).
It is shown in the supplement of \ctn{Dutta13} that for $p=1/2$,
as $d\rightarrow\infty$, the acceptance rate of additive TMCMC tends to zero at a much slower rate compared to 
that of the normal random walk Metropolis-Hastings algorithm.
In fact, it is easy to see that $AR_p\rightarrow 0$ as $d\rightarrow\infty$ for any $p\in (0,1)$, and
quite importantly, it holds that
$\frac{AR_p}{AR_{\frac{1}{2}}}\rightarrow\infty$ as $d\rightarrow\infty$.
In other words, for high-dimensional target distributions, 
the additive TMCMC based acceptance rate can be further improved with non-uniform
move probabilities. 

But an increase in acceptance rate does not necessarily lead to faster convergence of the underlying 
Markov chain. Hence, although higher acceptance rates are to be expected of additive TMCMC for non-uniform
move-type probabilities in high dimensions, faster rates of convergence may not still be achieved. 
We reserve the investigation of the effects of non-uniform move-type probabilities on convergence rate
for our future research.

\subsection{Multiplicative TMCMC}
\label{subsec:multiplicative_tmcmc}

Again suppose that we are simulating from a $d$ dimensional space (say, $\mathbb R^d$), and that we are currently at a point 
$x= (x_{1}, \ldots, x_{d})$.
Let us now modify the definition of the random variables $b_{1}, \ldots, b_{d}$, such that, for $i=1,\ldots,d$, 
\begin{equation}
b_{i} =\left\{\begin{array}{ccc} +1 & \mbox{with probability} & p_i; \\
0 & \mbox{with probability} & q_i;\\
 -1 & \mbox{with probability} & 1-p_i-q_i.
 \end{array}\right.
 \label{eq:b_mult}
\end{equation}

Let $\epsilon\sim g^{(2)}=q^{(2)}(\cdot)I_{\{|\epsilon|\leq 1\}}$. If $b_i=+1$, then $x_i\rightarrow x_i\epsilon$, if $b_i=-1$,
then $x_i\rightarrow x_i/\epsilon$ and if $b_i=0$, then $x_i\rightarrow x_i$, that is, $x_i$ remains unchanged.
Let the transformed coordinate be denoted by $x^*_i$.
Also, let $J(b,\epsilon)$ denote the Jacobian of the transformation $(x,\epsilon)\mapsto (x^*,\epsilon)$. 
We denote $x^*$ by $T^{(2)}_b(x,\epsilon)$, the multiplicative transformation 
$(x,\epsilon)\mapsto (x^*,\epsilon)$ associated with the move-type $b$. 

For example, if $d=2$, then for $b=(1,1)$, $T^{(2)}_b(x,\epsilon)=(x_1\epsilon,x_2\epsilon)$ and the Jacobian is $\epsilon^2$, 
for $b=(-1,-1)$, $T^{(2)}_b(x,\epsilon)=(x_1/\epsilon,x_2/\epsilon)$ and $|J(b,\epsilon)|=\epsilon^{-2}$. 
For $b=(1,-1)$, $b=(-1,1)$, and $b=(0,0)$, $T^{(2)}_b(x,\epsilon)=(x_1\epsilon,x_2/\epsilon)$, $(x_1/\epsilon,x_2\epsilon)$, 
and $(x_1,x_2)$, respectively, and in all these three instances, 
$|J(b,\epsilon)|=1$. 
For $b=(1,0)$ and $b=(0,1)$, 
$T^{(2)}_b(x,\epsilon)=(x_1\epsilon,x_2)$ and $T^{(2)}_b(x,\epsilon)=(x_1,x_2\epsilon)$, respectively, and in both these cases
$|J(b,\epsilon)|=|\epsilon|$. For $b=(-1,0)$ or $b=(0,-1)$, $T^{(2)}_b(x,\epsilon)=(x_1/\epsilon,x_2)$ and
$(x_1,x_2/\epsilon)$, respectively, and the Jacobian is $|\epsilon|^{-1}$ in both these cases.

In general, the transformation of the $i$-th coordinate is given by
$x_i\epsilon^{b_i}$ and the Jacobian is given by $|\epsilon|^{\sum_{i=1}^db_i}$.

The path diagram for multiplicative TMCMC that displays the possible range of 
values that our chain can move to starting from a fixed point is presented in Figure~\ref{fig:fullMultiplicative}.

We envisage another version of multiplicative TMCMC where we first generate 
$\epsilon\sim g^{(3)}=q^{(3)}(\cdot)I_{\{0<\epsilon\leq 1\}}$, and then make the transformation
$x_i\rightarrow c_ix_i\epsilon^{b_i}$, where $c_i$ takes the values $1$ and $-1$ with probabilities
$r_i$ and $1-r_i$, respectively, where $0<r_i<1$. In this case it is permissible to set $q_i$, 
the probability of $b_i=0$, to zero. Observe that the Jacobian remains the same
as in the first version of multiplicative TMCMC.
The path diagram for this version is shown in Figure~\ref{fig:fullMultiplicative2}.


Since the theory of geometric ergodicity remains essentially the same for both versions of multiplicative TMCMC,
we consider only the first version for the theoretical treatment. 
For our purpose, we assume that $p_i=q_i=1/3;~i=1,\ldots,d$.
Then assuming that the target distribution is proportional to 
$\pi$, the new move $T^{(2)}_b(x,\epsilon)$ is accepted
with probability
\begin{equation}
\alpha=\min\left\{1,\frac{\pi(T^{(2)}_b(x,\epsilon))}{\pi(x)}|J(b,\epsilon)|\right\}.
\label{eq:acc_multiplicative}
\end{equation}

\subsubsection{Discussion on non-uniform move-type probabilities for multiplicative TMCMC}
\label{subsubsec:multiplicative_non_uniform_move_type}

For simplicity of illustration, let us assume that $p_i=p$ and $q_i=q$, for $i=1,\ldots,d$. Also, let 
$Y=\sum_{i:b_i=1}b_i$, and $Z=\sum_{i:b_i=0}b_i$. Then $Y\sim \mbox{Binomial}\left(d,p\right)$,
and $Z\sim \mbox{Binomial}\left(d,q\right)$.
The acceptance probability is then given by
\begin{align}
\alpha&=\min\left\{1,\left(\frac{p}{1-p-q}\right)^Y\left(\frac{1-p-q}{p}\right)^{(d-Y-Z)}
\frac{\pi(T^{(1)}_b(x,\epsilon))}{\pi(x)}\right\}\notag\\
&=\min\left\{1,\left(\frac{p}{1-p-q}\right)^{(2Y+Z-d)}
 \frac{\pi(T^{(1)}_b(x,\epsilon))}{\pi(x)}\right\}
\label{eq:acc_multiplicative2}
\end{align}
As $d\rightarrow\infty$, $(2Y+Z-d)\stackrel{a.s}{\sim} d(2p+q-1)$.
If $2p+q>1$, then $p>(1-q)/2$, so that $1-p-q<(1-q)/2$. Hence, $p/(1-p-q)>1$, implying that
$\left(\frac{p}{1-p-q}\right)^{(2Y+Z-d)}\stackrel{a.s.}{\rightarrow}\infty$. If, on the other hand
$2p+q<1$, then $p<(1-q)/2$, and $1-p-q>(1-q)/2$, implying $p/(1-p-q)<1$. Again, this implies
$\left(\frac{p}{1-p-q}\right)^{(2Y+Z-d)}\stackrel{a.s.}{\rightarrow}\infty$. 
In contrast with additive TMCMC, for multiplicative TMCMC the asymptotic form of the acceptance rate is not available yet,
even for strongly log-concave target distributions. However, it is clear that for $2p+q\neq 1$,
the acceptance rate is asymptotically much higher than that associated with $2p+q=1$.

Although for convenience of presentation we prove geometric ergodicity of multiplicative TMCMC assuming 
$p_i=p=1/3$ and $q_i=q=1/3$;$~i=1,\ldots,d$, the steps of our proofs remain
the same for any other $0<p,q<1$, satisfying $2p+q=1$ (note that for such $p,q$, $p+q=(1+q)/2<1$ is automatically satisfied).
We reserve the cases $2p+q\neq 1$ for our future investigation.

Apart from additive and multiplicative TMCMC we also consider appropriate geometric ergodic mixtures 
of additive and multiplicative TMCMC, which not only help 
bypass a somewhat undesirable theoretical assumption regarding the high-dimensional target density $\pi$, 
but as simulation studies demonstrate, appropriate mixtures of additive and multiplicative 
TMCMC also ensure faster convergence compared to individual additive TMCMC and individual
multiplicative TMCMC.

\subsection{Additive-Multiplicative TMCMC}
\label{subsec:add_mult_tmcmc}
\ctn{Dutta13} described another TMCMC algorithm that uses the additive transformation
for some coordinates of $x$ and the multiplicative transformation for the remaining coordinates. 
\ctn{Dutta13} refer to this as additive-multiplicative TMCMC. Let the target density $\pi$ be
supported on $\mathbb R^d$. Then, if the additive transformation is used for the $i$-th coordinate, 
we update $x_i$
to $x_i+b_i\epsilon_1$, where $b_i$ is defined by (\ref{eq:b_add}), and $\epsilon\sim g^{(1)}$.
On the other hand, if for any coordinate $x_j$, the multiplicative transformation is used,
then we simulate $b_j$ following (\ref{eq:b_mult}), simulate $\epsilon_2\sim g^{(2)}$,
and update $x_j$ to either $x_j\epsilon_2$ or $x_j/\epsilon_2$ accordingly as $b_j=+1$ or $-1$.
If $b_j=0$, then we leave $x_j$ unchanged. The new proposal is accepted with probability 
having the same form as (\ref{eq:acc_multiplicative}). Note that unlike the cases of additive
TMCMC and multiplicative TMCMC, which use a single $\epsilon$ to update all the $d$ coordinates
of $x$, here we need two $\epsilon$'s: $\epsilon_1$ and $\epsilon_2$, to update the $d$ coordinates.

The proof of geometric ergodicity of additive-multiplicative TMCMC is almost the same as
that of multiplicative TMCMC, and hence we omit it from this paper.

\subsection{Geometric ergodicity}


Let $P$ be the transition kernel of a $\psi$-irreducible, aperiodic, positive Harris recurrent Markov 
chain with the stationary distribution $\pi$. 
Then the chain is geometrically ergodic if there exist a function $V \geq 1$ which is finite at least at one point, 
and constants $0<\rho<1$ and $M<\infty$ satisfying

\begin{equation}\label{eq:geo}
\|P^{n}(x,\cdot)- \pi(\cdot)\|_{TV} \leq MV(x)\rho^{n}  \hspace{0.5 cm} \forall n \geq 1,
\end{equation}
where $\|\mu\|_{TV}=\underset{g: |g| \leq V}{\sup} \mu(g)$ denotes the \emph{total variation norm}.
A standard way of checking geometric ergodicity is a result that involves small sets and the `geometric drift condition'. 
A set $E$ is called small if there exists $m>0$, $\delta>0$ and a probability measure $\nu$ such that
for $x\in E$,
\begin{equation}
P^m(x,\cdot)\geq\delta\nu(\cdot).
\end{equation}

$P$ is said to have geometric drift to a small set $E$ if there is a function $V \geq 1$, finite for at least one point,
and constants $\lambda <1 $ and $\zeta < \infty $ so that

\begin{equation}\label{eq:drift}
PV(x) \leq \lambda V(x) + \zeta I_{E}(x), 
\end{equation}
where $PV(x)= \int{V(y)P(x,y)dy}$ is the expectation of $V$ after one transition 
given that one starts at the point $x$, and $I_{E}(x)=1$ if $x\in E$ and $0$ otherwise, is the indicator function. 
Theorems 14.0.1 and 15.0.1 in \ctn{Meyn93} establish the fact that if $P$ has a geometric drift to a small set $E$, 
then under certain regularity conditions, $P$ is $\pi$-almost everywhere geometrically ergodic and the 
converse is also true.

We now provide necessary and sufficient conditions in favour of (\ref{eq:drift}); the result can
be thought of as an adaptation of Lemma 3.5 of \ctn{Jarner00}.

\begin{lemma}\label{Lemma 1}
Assume that the Markov transition kernel $P$ is associated with additive, multiplicative, or
additive-multiplicative TMCMC.
If there exists $V$ such that $V \geq 1$ and finite on bounded support, such that the following hold
\begin{equation}\label{eq:1con}
\underset{\|x\| \rightarrow \infty}{\lim\sup}~{\frac{PV(x)}{V(x)}}  <  1\quad\mbox{and} 
\end{equation}
\begin{equation}{\label{eq:infcon}}
{\frac{PV(x)}{V(x)}} <  \infty  \hspace{1 cm} \forall x,
\end{equation}
then $V$ satisfies the geometric drift condition (\ref{eq:drift}) and hence the chain must be geometrically
ergodic. Also, if for some $V$ finite, the geometric drift condition is satisfied, then the above 
conditions must also hold true.
\end{lemma}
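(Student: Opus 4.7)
My plan is to establish the two implications separately, using the standard decomposition of $\mathbb R^d$ into a region where the drift hypothesis gives a multiplicative contraction and a compact region where $PV$ is controlled in absolute terms. The forward implication is the substantive direction; the converse is almost a one-line manipulation once we have (\ref{eq:drift}).

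For the forward direction I would first use (\ref{eq:1con}) to pick a constant $\lambda$ strictly between $\limsup_{\|x\|\to\infty}PV(x)/V(x)$ and $1$, and a radius $R>0$ such that $PV(x)\leq\lambda V(x)$ for every $x$ with $\|x\|>R$. Setting $E=\{x:\|x\|\leq R\}$, the drift inequality is already in force on $E^c$. On $E$ itself I would use the hypothesis that $V$ is finite on bounded sets together with (\ref{eq:infcon}) and continuity of the additive, multiplicative, or additive-multiplicative TMCMC kernel in $x$ to argue that $x\mapsto PV(x)$ is bounded on the compact set $E$ by some finite $M$. Taking $\zeta=M$ (or $\zeta=\sup_{x\in E}(PV(x)-\lambda V(x))$ if one prefers the tight version) gives $PV(x)\leq\lambda V(x)+\zeta I_E(x)$ for all $x$, which is (\ref{eq:drift}). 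It then remains to verify that $E$ is a small set for $P$ in the sense of the preceding display; for the three TMCMC kernels under consideration this follows from a one- or multi-step minorization argument. For additive TMCMC the proposal density $q^{(1)}$ bounded below on a subinterval of $\mathbb R_+$, together with the fact that the $2^d$ sign configurations $b$ make every direction accessible and that the Metropolis--Hastings acceptance ratio is bounded below on $E$ by continuity of $\pi$, gives $P^m(x,\cdot)\geq \delta\nu(\cdot)$ for $x\in E$. The multiplicative and mixed cases use the same template after accounting for the Jacobian $|\epsilon|^{\sum b_i}$.

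For the converse, assume (\ref{eq:drift}) holds with $V\geq 1$, $\lambda<1$, $\zeta<\infty$ and small set $E$. Dividing through by $V(x)\geq 1$ gives
\begin{equation*}
\frac{PV(x)}{V(x)}\leq\lambda+\zeta\frac{I_E(x)}{V(x)}\leq\lambda+\zeta,
\end{equation*}
which yields (\ref{eq:infcon}). Since $E$ is small it is in particular bounded, so for all $\|x\|$ sufficiently large one has $I_E(x)=0$ and therefore $PV(x)/V(x)\leq\lambda<1$, which yields (\ref{eq:1con}).

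The step I expect to be the main obstacle is verifying that the compact ball $E$ is small for the multiplicative (and mixed additive-multiplicative) TMCMC kernel. Because the multiplicative proposal $x_i\mapsto x_i\epsilon^{b_i}$ contracts or dilates relative to the current state rather than translating it, the classical Jarner--Hansen argument for random walks does not transplant verbatim; one must either (i) exploit a lower bound on the joint density of the multiplicative proposal on sets bounded away from the coordinate hyperplanes, or (ii) route the minorization through the additive component present in the mixture, as is done in the ``essentially fully'' multiplicative construction introduced earlier in the paper. Either route is routine in spirit but bookkeeping-heavy, and I would carry it out case by case for the three kernels in parallel with the overall drift argument above.
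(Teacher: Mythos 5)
Your proposal follows essentially the same route as the paper's proof: the converse by dividing the drift inequality by $V(x)\geq 1$ and using boundedness of the small set, and the forward direction by choosing a contraction constant and radius $R$ from (\ref{eq:1con}), bounding $PV$ on the resulting compact set via (\ref{eq:infcon}), and deferring the smallness of that set to a separate minorization argument. You also correctly anticipate the one genuine subtlety — that for the multiplicative kernel the small set must be a compact set bounded away from the coordinate hyperplanes (the paper's $E^*\subset\mathbb R^d\backslash\mathcal V$, established in its Appendix) rather than the full ball — so the plan matches the paper in both structure and substance.
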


\begin{proof}
Assume that for some $V$ finite and $V \geq 1$, the geometric drift condition (\ref{eq:drift}) is satisfied. 
Now, dividing both sides by $V(x)$, we get
$$ \frac{PV(x)}{V(x)} \leq \lambda + \zeta \frac{I_{E}(x)}{V(x)}. $$
Since $V$ is finite, then given that $V \geq 1$, we have
$$ \frac{PV(x)}{V(x)} \leq  \lambda + \zeta \hspace { 0.2 cm}  <  \infty. $$
Also if $\|x\| \rightarrow \infty $ then as $E$ is a bounded small set, $I_{E}(x) \rightarrow 0$, and hence 
$$ \underset{\|x\| \rightarrow \infty}{\lim\sup}~{\frac{PV(x)}{V(x)}} \leq \lambda < 1. $$

For the converse, let us fix a value $\gamma <1$. Let $R$ be particularly large so that if $\|x\| > R$, then 
$$ {\frac{PV(x)}{V(x)}} < \gamma\quad\mbox{if}\quad \|x\|> R  \hspace{0.2 cm}  
\implies PV(x)<  \gamma V(x)\quad\mbox{if}\quad \|x\| > R.  $$

Since
$$ PV(x) \leq \frac{PV(x)}{V(x)} V(x), $$
and since $\frac{PV(x)}{V(x)}$ is finite by hypothesis (\ref{eq:infcon}) and the function 
$V$ is also finite on any bounded set,
this implies that $PV(x)$ is finite on $E=\{x:\|x\| \leq R \}$, which is closed and bounded.

%

Take $\zeta$ to be the maximum value (which must be finite) that $PV(x)$ can attain on the set $E$. 
In the supplement of \ctn{Dutta13} it is shown that for additive TMCMC, sets of the form 
$E=\{x:\|x\| \leq R \}$ are small.
Defining
\begin{equation}
\mathcal V=\{(v_1,\ldots,v_d)\in\mathbb R^d: v_i=0~ \mbox{for at least one}~i\in\{1,\ldots,d\}\},
\label{eq:set_V}
\end{equation}
in the Appendix we will show that compact subsets of 
$\mathbb R^d\backslash\mathcal V$, which we denote by
$E^*$, are small 
for multiplicative TMCMC; the same result also holds for additive-multiplicative TMCMC. 
Hence, for all $x$, if $\mathbb E$ is either $E$ or $E^*$,
$$  PV(x) \leq \gamma V(x)+ \zeta I_{\mathbb E}(x). $$
This proves the lemma.
\end{proof}

So, in order to check geometric ergodicity, it is enough to prove (\ref{eq:1con}) and (\ref{eq:infcon}) 
for the given chain.

\newcommand{\A}{A^{I_{k}}}
\newcommand{\R}{ R^{I_{k}}}
\newcommand{\xf}{{x_{I_{k}}}(\epsilon)}
\newcommand*{\bigleft}{\mbox{\Huge $[$}}
\newcommand*{\bigright}{\mbox{\Huge $]$}}
\newcommand*{\bigcurlleft}{\mbox{\huge $\{$}}
\newcommand*{\bigcurlright}{\mbox{\huge $\}$}}

\section{Geometric ergodicity of additive TMCMC}
\label{sec:geo_additive}

We shall now provide necessary and sufficient conditions for geometric ergodicity for additive TMCMC for a broad class of distributions. 
This proof follows on the lines of
\ctn{Jarner00} and has been suitably modified for our additive TMCMC case. First, we define the notion of super-exponential densities.

A density $\pi$ is said to be super-exponential if it is positive with continuous first derivative and satisfies 

\begin{equation}
\underset{\|x\| \rightarrow \infty}{\lim} n(x)' \nabla \log \pi(x) = -\infty,
\end{equation}
where $n(x)$ denotes the unit vector $ \frac{x}{\|x\|} $. This would imply that for any $K > 0$, $\exists$ $R > 0$ such that

\begin{equation}\label{eq:exp}
\frac{\pi(x+cn(x))}{\pi(x)} \leq e^{-cK}; \hspace{1 cm} \|x\|\geq R, c\geq 0.
\end{equation}

In words, the above definition entails that $\pi$ is decaying at a rate faster than exponential along any direction. 
It is very easy to check that the Gaussian (univariate as well as multivariate for any variance covariance matrix) 
or the Gamma distributions (univariate or independent multivariate) indeed satisfy these conditions. 

Let the acceptance region and the (potential) rejection region corresponding to the move-type $b$ be defined by
$A^{(1)}(b,x)=\{\epsilon:\pi(T^{(1)}_b(x,\epsilon))\geq \pi(x)\}$ and 
$R^{(1)}(b,x)=\{\epsilon:\pi(T^{(1)}_b(x,\epsilon))<\pi(x)\}$, respectively.
Also, let $A^{(1)}(x)=\cup_{b_1,\ldots,b_d}A^{(1)}(b,x)$ and $R^{(1)}(x)=\cap_{b_1,\ldots,b_d}R^{(1)}(b,x)$ denote the overall acceptance
region and the overall potential rejection region, respectively.

Let $Q^{(1)}(x,B)$ denote the probability corresponding to the additive TMCMC proposal 
of reaching the Borel set $B$ from $x$ in one step. Let $P^{(1)}$ denote the Markov transition kernel associated with
additive TMCMC.
Then the following theorem establishes geometric ergodicity of additive TMCMC in the super-exponential set-up.

\begin{theorem}\label{theorem:geo_additive}

If the target density $\pi$ is super-exponential and has contours that are nowhere piecewise parallel to 
$\{x:|x_1|=|x_2|=\cdots = |x_d|\}$, 
then the additive TMCMC chain satisfies geometric drift if and only if

\begin{equation}
\underset{\|x\| \rightarrow \infty}{\lim\inf}~ Q^{(1)}(x, A^{(1)}(x)) > 0.
\label{eq:liminf_Q_additive}
\end{equation}


\end{theorem}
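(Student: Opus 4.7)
The plan is to apply Lemma~\ref{Lemma 1} with the test function $V(x)=\pi(x)^{-s}$ for some $s\in(0,1)$ (e.g.\ $s=1/2$, with a harmless additive constant to enforce $V\geq 1$). Condition \eqref{eq:infcon} follows at every fixed $x$ from continuity and positivity of $\pi$ together with integrability of $\pi(T^{(1)}_b(x,\epsilon))^{-s}$ against $g^{(1)}$, so the substantive task is to deduce \eqref{eq:1con} from \eqref{eq:liminf_Q_additive} and conversely. A direct expansion of the additive TMCMC Metropolis kernel, together with $V(T^{(1)}_b(x,\epsilon))/V(x)=r_b(\epsilon)^{-s}$ for $r_b(\epsilon)=\pi(T^{(1)}_b(x,\epsilon))/\pi(x)$, gives the identity
\begin{equation*}
\frac{P^{(1)}V(x)}{V(x)} \;=\; 1 - I_A(x) + I_R(x),
\end{equation*}
where
\begin{equation*}
I_A(x)=\sum_b P(b)\int_{A^{(1)}(b,x)}\!\bigl(1-r_b(\epsilon)^{-s}\bigr)g^{(1)}(\epsilon)\,d\epsilon,\qquad
I_R(x)=\sum_b P(b)\int_{R^{(1)}(b,x)}\!\bigl(r_b(\epsilon)^{1-s}-r_b(\epsilon)\bigr)g^{(1)}(\epsilon)\,d\epsilon.
\end{equation*}
Both integrands are non-negative (the second because $r^{1-s}\geq r$ for $r\in[0,1]$ and $s\in(0,1)$), and $I_A(x)\leq Q^{(1)}(x,A^{(1)}(x))$ follows from $1-r_b(\epsilon)^{-s}\leq 1$ on $A^{(1)}(b,x)$.

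The necessity direction is then the easy half. If \eqref{eq:liminf_Q_additive} failed, one could extract a sequence $x_n$ with $\|x_n\|\to\infty$ along which $Q^{(1)}(x_n,A^{(1)}(x_n))\to 0$; then $I_A(x_n)\to 0$ while $I_R(x_n)\geq 0$, which would force $P^{(1)}V(x_n)/V(x_n)\geq 1-Q^{(1)}(x_n,A^{(1)}(x_n))\to 1$, contradicting \eqref{eq:1con} and hence the geometric drift via Lemma~\ref{Lemma 1}.

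For sufficiency, I would prove separately that $I_R(x)\to 0$ and that $\liminf_{\|x\|\to\infty}I_A(x)>0$, which together with the identity above yield $\limsup P^{(1)}V(x)/V(x)<1$. The super-exponential hypothesis \eqref{eq:exp}, applied along each of the $2^d$ diagonal move directions $b/\sqrt{d}$, controls both pieces: on $R^{(1)}(b,x)$ it forces $r_b(\epsilon)\leq e^{-\epsilon K_{\|x\|}}$ with $K_{\|x\|}\to\infty$, so $r_b(\epsilon)^{1-s}-r_b(\epsilon)\to 0$ pointwise in $\epsilon$ (outside a vanishing level-set boundary), and bounded convergence delivers $I_R(x)\to 0$; while on $A^{(1)}(b,x)$ symmetric application of \eqref{eq:exp} forces $r_b(\epsilon)$ to grow exponentially once $\|x\|$ is large, so that $1-r_b(\epsilon)^{-s}$ is bounded below by some $c_0>0$ on the bulk of $A^{(1)}(b,x)$. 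The nowhere-piecewise-parallel contour hypothesis is used in both estimates, since it rules out the pathological case in which a level surface of $\pi$ locally aligns with a diagonal move direction $(\pm 1,\ldots,\pm 1)/\sqrt{d}$---a situation that would keep $r_b(\epsilon)$ close to $1$ over a non-negligible range of $\epsilon$ and destroy the pointwise exponential control just described. The main obstacle is exactly this quantitative step: translating the abstract lower bound $\liminf Q^{(1)}(x,A^{(1)}(x))>0$ into $\liminf I_A(x)\geq c_0\liminf Q^{(1)}(x,A^{(1)}(x))>0$, which requires showing that a positive fraction of the accepting proposal mass lives strictly away from the surface $\{\pi=\pi(x)\}$. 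The coarse move-type geometry of additive TMCMC---only $2^d$ diagonal directions are allowed, in contrast with the spherically symmetric proposal of \ctn{Jarner00}---is precisely what makes the contour non-parallelism indispensable here, and I expect the bulk of the formal proof to consist of a careful $\|x\|\to\infty$ asymptotic analysis of $r_b(\epsilon)$ using \eqref{eq:exp} together with this geometric non-degeneracy.
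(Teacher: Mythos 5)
Your sufficiency half is essentially the paper's argument in a cleaner packaging: the identity $P^{(1)}V(x)/V(x)=1-I_A(x)+I_R(x)$ with $V=\pi^{-s}$ is correct (and in fact tidier than the paper's display (\ref{eq:sum_ratios_additive}), since you integrate over the move-type-specific regions $A^{(1)}(b,x)$, $R^{(1)}(b,x)$), and your plan --- kill $I_R$ and bound $I_A$ below by a positive fraction of $Q^{(1)}(x,A^{(1)}(x))$, using the radial belt $C_{\pi(x)}(\delta)$ of (\ref{eq:radial_cone}) and the non-parallelism hypothesis to make the belt's proposal mass negligible --- is exactly what the paper does. One technical caution: the bound $r_b(\epsilon)\leq e^{-\epsilon K}$ ``along each diagonal move direction'' is not a direct consequence of (\ref{eq:exp}); super-exponentiality controls decay only along radial directions $n(x)$, so the exponent you actually get is the \emph{radial} overshoot past the contour $C_{\pi(x)}$ (which is at least $\delta$ off the belt, but can be much smaller than $\epsilon$ when the move runs nearly tangent to the contour). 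The paper's route --- locate the radial intersection point $z\in C_{\pi(x)}$ with $\pi(z)=\pi(x)$ and apply (\ref{eq:exp}) from $z$ to get (\ref{eq:super_exp}) --- is the step you would need to make this rigorous; your ``outside a vanishing level-set boundary'' caveat gestures at it but the stated pointwise bound is not what the hypothesis delivers.

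The genuine gap is in your necessity direction. Showing that $Q^{(1)}(x_n,A^{(1)}(x_n))\to 0$ forces $P^{(1)}V(x_n)/V(x_n)\to 1$ only contradicts condition (\ref{eq:1con}) \emph{for the particular choice} $V=\pi^{-s}$. Geometric drift (\ref{eq:drift}) is an existential statement over drift functions, and the converse clause of Lemma \ref{Lemma 1} only says that whichever $V$ witnesses the drift must satisfy (\ref{eq:1con}); ruling out $V=\pi^{-s}$ does not rule out every other $V$. The paper closes this hole differently: it shows that failure of (\ref{eq:liminf_Q_additive}) forces $\underset{\|x\|\rightarrow\infty}{\lim\sup}\,P^{(1)}(x,\{x\})=1$ (the chain becomes arbitrarily ``sticky'' at remote states, using (\ref{eq:super_exp}) to show the rejection probability off the belt is close to $1$), and then invokes Theorem 5.1 of \ctn{Roberts96}, which asserts that no geometrically ergodic chain can have holding probabilities tending to $1$. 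Some such external ingredient quantifying over \emph{all} drift functions is unavoidable here, and your proposal as written does not supply it.
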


\begin{proof}

Following the notation of \ctn{Jarner00}, let $C_{\pi(x)}$ be the contour of the density $\pi$ 
corresponding to the value $\pi(x)$. 
We define the radial cone 
$C_{\pi(x)}(\delta)$ around $ C_{\pi(x)}$ to be

\begin{equation}
C_{\pi(x)} (\delta) = \left \{ y + sn(y) : y \in C_{\pi(x)}, -\delta < s < \delta\right \}.
\label{eq:radial_cone}
\end{equation}
See Figure 1 of \ctn{Jarner00} for visualizing these regions in two dimensions. 

By (\ref{eq:liminf_Q_additive}) there exists a $\eta > 0$ such that 

\begin{equation}
\underset{\|x\| \rightarrow \infty} {\lim\sup} \hspace{0.2 cm} Q^{(1)} (x, R^{(1)}(x)) \leq 1-2\eta^{\frac{1}{2}}.
\label{eq:limsup_Q_additive}
\end{equation}

Take the belt length $\delta$ such that the probability that a move from $x$, the starting point, falls within this $\delta$ belt is 
less than $\eta$. 
That it is possible can be seen as follows. Note that there exists a compact set $E$ such that 
\begin{equation}
Q^{(1)}(x,E^{c}) <\frac{\eta}{2}. 
\label{eq:Q1}
\end{equation}
So, for given $\delta$, if we can ensure that our proposal distribution satisfies 
\begin{equation}
Q^{(1)}(x, C_{\pi(x)}(\delta) \cap E) < \frac{\eta}{2},
\label{eq:Q2}
\end{equation}
then we are done. 
Note that for any point on the contour, the probability that the additive TMCMC moves result 
in a value within $C_{\pi(x)}(\delta)$ is bounded above by $2c\delta$, for some finite $c$ (since this probability
is $2\int_0^{\delta}g(\epsilon)d\epsilon\leq 2c\delta$, as  $g(\epsilon)\leq c$ on $(0,\delta)$, for $0<c<\infty$) 
and thus can be made as small as desired by choosing $\delta$ sufficiently small.
The above argument is easy to visualize in two dimensions as depicted in  
Figure \ref{fig:addTMCMC} and Figure 1 of \ctn{Jarner00} -- for any point
in the first quadrant part of the contour, the probability that the outer and inner TMCMC moves given, respectively, 
by $(+\epsilon, +\epsilon)$ and $(-\epsilon,-\epsilon)$, land within $C_{\pi(x)}(\delta)$, is bounded above by $2c\delta$.
The same argument applies to the other three quadrants.
For the other moves, note that since the contours (intersected with $E$) are nowhere piecewise parallel to 
$\{x:|x_1|=\cdots = |x_d|\}$, 
the moves can fall in only finite number of regions of $C_{\pi(x)}(\delta)\cap E$.
Infinitely many regions can be ruled out because of the intersection with $E$, which is compact. 
If that was the case, then this infinite collection of interesting points would 
have a limit point in $E$, which is not possible as the points are isolated. 

Now, there exists $R_{\eta}$ so that for any point $y$ outside the $\delta$ bound around $x$ and in the rejection region, 
it holds that
\begin{equation}
\frac{\pi(y)}{\pi(x)} < \eta; \hspace {0.5 cm} \|x\| > R_{\eta}.
\label{eq:super_exp}
\end{equation}
This can be seen by taking the shortest line from $y$ to the origin; suppose it intersects 
(after extending if needed) the contour $C_{\pi(x)}$ at $z$. There will be two such values of $z$, 
and we choose the one that is nearest to $x$. 
Then, by (\ref{eq:exp}) and the fact that $ \pi(x)$ is the same as $\pi(z)$ (since $x$ and $z$ are on the same contour), 
we obtain (\ref{eq:super_exp}). 
To ensure that this $z$ indeed satisfies $ \|z\| > R_{\eta}$, consider the set $E$, which is 
the set where effectively all the moves fall. 
Join each point in $E$ to the origin by a straight line and extend it if needed to intersect the contour;
consider those points of intersections which are closest to $x$. 
The points of intersections yield a segment, $D(x)$, of the contour, which contains $x$ and is bounded and closed. 
Now since this set is bounded, 
we can always choose $x$ with large enough norm so that all the points in $E$ associated with $D(x)$ 
have norms greater than $R_{\eta}$. Since $z$ is one of such points, we are done.

On the other hand, if $y$ is outside the $\delta$ bound around $x$ but falls in the acceptance region,
then by the same arguments it holds that
\begin{equation}
\frac{\pi(x)}{\pi(y)} < \eta. 
\label{eq:super_exp2}
\end{equation}

Now, with $ V(x) = \frac{c}{\sqrt{\pi(x)}}$ for some $c > 0$ chosen appropriately, 
it holds that

\begin{eqnarray}
&&\frac{P^{(1)}V(x)}{V(x)}\nonumber\\
&= & \frac{1}{2^{d}} \sum_{b_{1},\cdots, b_{d}} 
\int _{A^{(1)}(x)} {\left [ \frac{\pi(x_{1},\ldots,x_{d})}{\pi(x_{1}+b_{1}\epsilon,\ldots, x_{d}+b_{d}\epsilon)} \right ]^{\frac{1}{2}} 
g^{(1)}(\epsilon)d\epsilon} \nonumber\\
&& + \frac{1}{2^{d}} \sum_{b_{1},\cdots, b_{d}} \int _{R^{(1)}(x)}{ 
\left [ 1- \frac{\pi(x_{1}+b_{1}\epsilon,\ldots,x_{d}+b_{d}\epsilon)}{\pi(x_{1},\ldots,x_{d})} 
+ \left \{ \frac{\pi(x_{1}+b_{1}\epsilon,\ldots, x_{d}+b_{d}\epsilon)}{\pi(x_{1},\ldots,x_{d})} \right \}^{\frac{1}{2}} \right ] 
g^{(1)}(\epsilon)d\epsilon}\nonumber\\
\label{eq:sum_ratios_additive}
\end{eqnarray}

We split the integrals over $A^{(1)}(x)$ and that over $R^{(1)}(x)$ into two parts -- 
within $C_{\pi(x)}(\delta)$ and outside $C_{\pi(x)}(\delta)$. 
%
Since $\frac{\pi(x_{1},\ldots,x_{d})}{\pi(x_{1}+b_{1}\epsilon,\ldots, x_{d}+b_{d}\epsilon)}<1$ on $A^{(1)}(x)$, it follows
from (\ref{eq:Q1}) and (\ref{eq:Q2}) that
\begin{align}
\int _{A^{(1)}(x)\cap C_{\pi(x)}(\delta)} 
{\left [ \frac{\pi(x_{1},\ldots,x_{d})}{\pi(x_{1}+b_{1}\epsilon,\ldots, x_{d}+b_{d}\epsilon)} \right ]^{\frac{1}{2}}
g^{(1)}(\epsilon)d\epsilon}
&<\frac{\eta}{2}.
\label{eq:part1}
\end{align}

Now note that 
\begin{align}
\left\vert\frac{\|x_n+b\epsilon\|^2}{\|x_n\|^2}-1\right\vert 
&\leq\frac{2\epsilon\left|b'x_n\right|}{\|x_n\|^2}+\frac{d\epsilon^2}{\|x_n\|^2}\notag\\
&\leq \frac{2\epsilon}{\|x_n\|}+\frac{d\epsilon^2}{\|x_n\|^2}.\label{eq:as_conv}\\
& (\mbox{since}~ \left|b'x_n\right|\leq\sqrt{\|b\|^2\|x_n\|^2}=\|x_n\|).\notag
\end{align}

Let $\mathcal N_{\epsilon}$ denote a null set associated with the probability distribution of $\epsilon$.
Then for all $\omega\in\mathcal N^c_{\epsilon}$ such that $\epsilon(\omega)\in E$, for any compact set
$E$, (\ref{eq:as_conv}) goes to zero. That is, for $\omega\in\mathcal N^c_{\epsilon}\cap\epsilon^{-1}(E)$,
$\frac{\|x_n+b\epsilon\|}{\|x_n\|}\rightarrow 1$.
Thus, for $n>N_0(\eta_2(\omega))$ for some $N_0(\eta_2(\omega))$ depending upon $\eta_2(\omega)$ such that 
$\|x_{n}\|>\frac{R_{\eta}}{1-\eta_2(\omega)}>R_{\eta}$,
since $1+\eta_2(\omega)>\frac{\|x_{n}+b\epsilon\|}{\|x_{n}\|}>1-\eta_2(\omega)$ for 
$\omega\in\mathcal N^c_{\epsilon}\cap\epsilon^{-1}(E)$, we have 
\begin{equation}
\|x_{n}+b\epsilon\|>(1-\eta_2(\omega))\|x_{n}\|>R_{\eta}.
\label{eq:p0}
\end{equation}
Note that for any given $\zeta>0$, we can choose $E_{\zeta}$ such that $Q^{(1)}(x,E^c_{\zeta,x,b})<\zeta$, 
for any $x$ and $b$, where $E^c_{\zeta,x,b}=\{x+b\epsilon:\epsilon\in E^c_{\zeta}\}$. 
Thus, we can choose $\zeta>0$ such that
\begin{equation}
Q^{(1)}(x,A^{(1)}(x)\cap C^c_{\pi(x)}(\delta)\cap E^c_{\zeta,x,b})<\frac{\eta^{\frac{1}{2}}}{2}Q^{(1)}(x,A^{(1)}(x)).
\label{eq:p1}
\end{equation}

Now, 
it follows from (\ref{eq:p0}) and (\ref{eq:super_exp2}) that 
for given $\eta>0$, we can choose $R_{\eta}$ such that for $\|x\|>R_{\eta}$, 
$\frac{\pi(x_{1},\ldots,x_{d})}{\pi(x_{1}+b_{1}\epsilon,\ldots, x_{d}+b_{d}\epsilon)}<\frac{\eta}{4}$.
Hence,
\begin{align}
\int _{A^{(1)}(x)\cap C^c_{\pi(x)}(\delta)\cap E_{\zeta,x,b}} 
{\left [ \frac{\pi(x_{1},\ldots,x_{d})}{\pi(x_{1}+b_{1}\epsilon,\ldots, x_{d}+b_{d}\epsilon)} \right ]^{\frac{1}{2}}
g^{(1)}(\epsilon)d\epsilon}
&<\frac{\eta^{\frac{1}{2}}}{2}Q^{(1)}(x,A^{(1)}(x)).
\label{eq:part2_1}
\end{align}
Also, since $\frac{\pi(x_{1},\ldots,x_{d})}{\pi(x_{1}+b_{1}\epsilon,\ldots, x_{d}+b_{d}\epsilon)}<1$ on
$A^{(1)}(x)$, it follows from (\ref{eq:p1}) that
\begin{align}
\int _{A^{(1)}(x)\cap C^c_{\pi(x)}(\delta)\cap E^c_{\zeta,x,b}} 
{\left [ \frac{\pi(x_{1},\ldots,x_{d})}{\pi(x_{1}+b_{1}\epsilon,\ldots, x_{d}+b_{d}\epsilon)} \right ]^{\frac{1}{2}}
g^{(1)}(\epsilon)d\epsilon}
&<\frac{\eta^{\frac{1}{2}}}{2}Q^{(1)}(x,A^{(1)}(x)).
\label{eq:part2_2}
\end{align}
Thus, for $\|x\|>R_{\eta}$, it follows from (\ref{eq:part2_1}) and (\ref{eq:part2_2}) that  
\begin{align}
\int _{A^{(1)}(x)\cap C^c_{\pi(x)}(\delta)} 
{\left [ \frac{\pi(x_{1},\ldots,x_{d})}{\pi(x_{1}+b_{1}\epsilon,\ldots, x_{d}+b_{d}\epsilon)} \right ]^{\frac{1}{2}}
g^{(1)}(\epsilon)d\epsilon}
&<\eta^{\frac{1}{2}}Q^{(1)}(x,A^{(1)}(x)).
\label{eq:part2}
\end{align}

Now note that on $R^{(1)}(x)$, $1- \frac{\pi(x_{1}+b_{1}\epsilon,\ldots,x_{d}+b_{d}\epsilon)}{\pi(x_{1},\ldots,x_{d})}<1$, so that  
\begin{equation}
\int_{R^{(1)}(x)}\left[1- \frac{\pi(x_{1}+b_{1}\epsilon,\ldots,x_{d}+b_{d}\epsilon)}{\pi(x_{1},\ldots,x_{d})}\right]
g^{(1)}(\epsilon)d\epsilon<Q^{(1)}(x,R^{(1)}(x)).
\label{eq:part3}
\end{equation}
For the integral $\int_{R^{(1)}(x)}\frac{\pi(x_{1}+b_{1}\epsilon,\ldots, x_{d}+b_{d}\epsilon)}{\pi(x_{1},\ldots,x_{d})}
g^{(1)}(\epsilon)d\epsilon$,
breaking up $R^{(1)}(x)$ into $R^{(1)}(x)\cap C_{\pi(x)}(\delta)$ and $R^{(1)}(x)\cap C^c_{\pi(x)}(\delta)$
we obtain, in exactly the same way as (\ref{eq:part1}) and (\ref{eq:part2}), the following:
\begin{equation}
\int_{R^{(1)}(x)}\left[\frac{\pi(x_{1}+b_{1}\epsilon,\ldots, x_{d}+b_{d}\epsilon)}{\pi(x_{1},\ldots,x_{d})}\right]^{\frac{1}{2}}
g^{(1)}(\epsilon)d\epsilon
<\frac{\eta}{2}+\eta^{\frac{1}{2}}Q^{(1)}(x,R^{(1)}(x)),
\label{eq:part4}
\end{equation}
for $ \|x\| > R_{\eta}$. Combining (\ref{eq:part1}), (\ref{eq:part2}), (\ref{eq:part3}) and (\ref{eq:part4}) we obtain

\begin{eqnarray}
\frac{P^{(1)}V(x)}{V(x)} &< & \eta+\eta^{\frac{1}{2}} Q^{(1)}(x, A^{(1)}(x)) + \left (1+\eta^{\frac{1}{2}} \right ) Q^{(1)} (x, R^{(1)}(x)) \nonumber \\
& =& \eta +\eta^{\frac{1}{2}} + Q^{(1)} (x, R^{(1)}(x)). \nonumber \\
\end{eqnarray}
Using (\ref{eq:limsup_Q_additive}), we obtain
\begin{eqnarray}
\underset{\|x\| \rightarrow \infty} {\lim\sup}~\frac{PV(x)}{V(x)}&< & \eta +\eta^{\frac{1}{2}} 
+ \underset{\|x\| \rightarrow \infty} {\lim\sup}~Q^{(1)} (x, R^{(1)}(x))\nonumber\\
& =& 1- \eta^{\frac{1}{2}} +\eta\nonumber\\
&<& 1.\nonumber
\end{eqnarray}
Thus, (\ref{eq:1con}) is satisfied. Since all the ratios in the integrals of (\ref{eq:sum_ratios_additive}) are less than 1,
it is clear that $P^{(1)}V(x)/V(x)<\infty$ for all $x$, satisfying (\ref{eq:infcon}). This proves geometric
ergodicity of additive TMCMC.

Now we prove that if additive TMCMC is geometrically ergodic, then (\ref{eq:liminf_Q_additive}) is satisfied. 
In fact, we prove that if (\ref{eq:liminf_Q_additive}) is not satisfied, that is, if
$\underset{\|x\| \rightarrow \infty} {\lim\sup}~Q^{(1)} (x, R^{(1)}(x))=1$, then 
$\underset{\|x\| \rightarrow \infty} {\lim\sup}~P^{(1)} (x, \{x\})=1$. Indeed,
it follows from Theorem 5.1 of \ctn{Roberts96} that the latter condition 
implies that $P^{(1)}$ is not geometrically ergodic.

We can choose a compact set $E$ such that $Q^{(1)}(x,E^c)<\eta$ and can choose $\delta$ small
enough such that $\underset{\|x\| \rightarrow \infty} {\lim\sup}~Q^{(1)} (x,C_{\pi(x)}(\delta)\cap E)\leq\eta$. 
This and the fact (\ref{eq:super_exp}) 
imply that 
\begin{eqnarray}
\underset{\|x\| \rightarrow \infty} {\lim\sup}~P^{(1)}(x,\{x\}) 
&\geq & 
\underset{\|x\| \rightarrow \infty} {\lim\sup}~
\frac{1}{2^{d}} \sum_{b_{1},\cdots, b_{d}}\int _{R^{(1)}(x)}{ 
\left [ 1- \frac{\pi(x_{1}+b_{1}\epsilon,\ldots,x_{d}+b_{d}\epsilon)}{\pi(x_{1},\ldots,x_{d})} 
\right ] 
g^{(1)}(\epsilon)d\epsilon}\nonumber\\
& \geq & 
\underset{\|x\| \rightarrow \infty} {\lim\sup}~
\frac{1}{2^{d}} \sum_{b_{1},\cdots, b_{d}} \int _{R^{(1)}(x)\cap E\cap [C_{\pi(x)}(\delta)]^c}{ 
\left [ 1- \frac{\pi(x_{1}+b_{1}\epsilon,\ldots,x_{d}+b_{d}\epsilon)}{\pi(x_{1},\ldots,x_{d})} 
\right ] 
g^{(1)}(\epsilon)d\epsilon}\nonumber\\
&\geq & (1-\eta)
\underset{\|x\| \rightarrow \infty} {\lim\sup}~
Q(x,R^{(1)}(x)\cap E\cap [C_{\pi(x)}(\delta)]^c)\nonumber\\
&\geq & (1-\eta)(1-2\eta).\notag
\end{eqnarray}
Since $\eta>0$ is arbitrary, the proof is complete.
\end{proof}

\begin{figure}
\centering
\includegraphics[width=12cm,height=7cm]{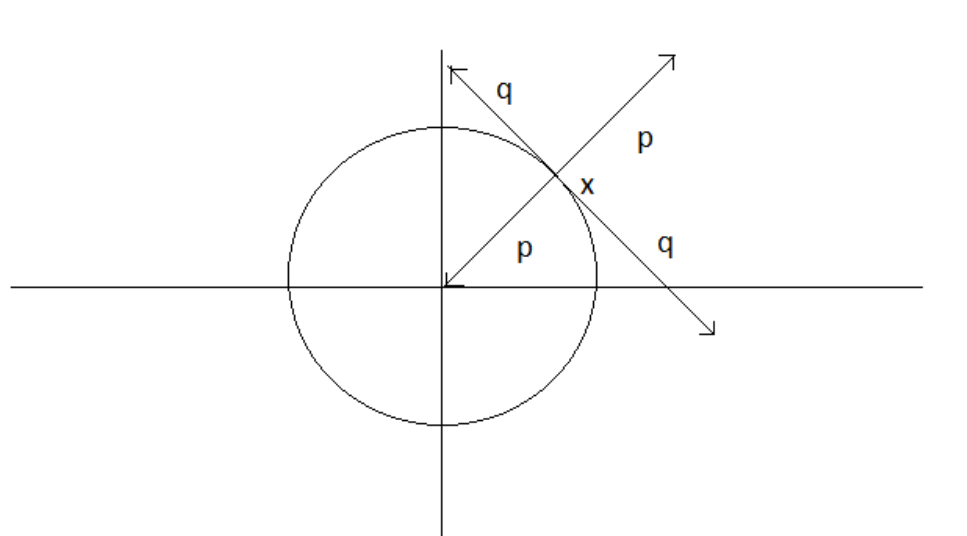}
\caption{A contour of a spherically symmetric distribution. Here $x$ is the current state lying
on the contour (first quadrant), and the four directions that can be taken by the next move of additive TMCMC, are
displayed. Here $p=q=1/2$ are the move-type probabilities.}
\label{fig:norm}
\end{figure}
Note that for spherically symmetric super-exponential distributions (for example standard Gaussian), 
the conditions of Theorem \ref{theorem:geo_additive} naturally hold. For instance, the fact that no part of the contour is 
parallel to $\{x:|x_1|=|x_2|=\cdots = |x_d|\}$ is quite obvious. 
To check that $\underset{\|x\| \rightarrow \infty}{\lim\inf}~Q^{(1)}(x, A^{(1)}(x)) > 0$, 
first perceive (see Figure \ref{fig:norm}) that at any point in the first quadrant, the inward 
direction stays in the acceptance region if the magnitude of the inward direction does not
exceed the diameter of the contour containing $x$. However, the inward direction  
can land in the rejection region on the other side of the contour if the magnitude of the inward
direction exceeds the diameter of the contour $C_{\pi(x)}$. Since $\|x\|$ is the radius of
$C_{\pi(x)}$, in order to ensure that the inward move falls in $A^{(1)}(x)$ with high probability 
when $\|x\|$ is large, 
we must choose the proposal density $g^{(1)}(\epsilon)$ in such a way that too large step sizes compared
to $\|x\|$, have small probabilities. 
Thus, for our purpose, first let $M_{\eta}$ be such that 
$ \int_{0}^{M_{\eta}}{g^{(1)}(\epsilon)d\epsilon} > 1-\eta$. 
Now choose $x$ such that $\|x\| > 3 M_{\eta}$ (radius of $C_{\pi(x)}$ is greater than $3 M_{\eta})$. 
Then $ Q^{(1)}(x,A^{(1)}(x)) > \frac{1-\eta}{4} > 0$. Now consider any sequence ${x_
{n}}$ with $ \|x_{n}\| \rightarrow \infty$, where $x_{n}$ has norm greater than $ 3 M_{\eta}$
for all but finite $n$. 
Then along this sequence, 
the limit of $ Q^{(1)}(x, A^{(1)}(x))$ is greater than $ \frac{1-\eta}{4}$. 
Thus $\underset{\|x\| \rightarrow \infty}{\lim\inf}~Q^{(1)}(x, A^{(1)}(x)) > 0$ condition is satisfied.

Note that the constraint that no part of the contour can be piecewise parallel to $\{x:|x_1|=\cdots = |x_d|\}$ does not really 
cause too much of a problem because the only common distribution that satisfies this property is the Laplace 
distribution and it is not super-exponential.

\section{Geometric ergodicity of multiplicative TMCMC}
\label{sec:geo_multiplicative}

In the one-dimensional case, geometric ergodicity of multiplicative TMCMC has been established by \ctn{Dutta12},
assuming that the target density is regularly varying in an appropriate sense.
Here we extend the result to arbitrary dimensions, of course without the aid of the 
regularly varying assumption, since such an assumption is not well-defined in high dimensions.
Note however, that since vectors $v\in\mathcal V$ (where $\mathcal V$ is defined in (\ref{eq:set_V})), 
can not belong to small sets associated with multiplicative TMCMC, to prove
geometric ergodicity we also need to show that $\underset{\|x-v\|\rightarrow 0}{\lim\sup}~P^{(2)}V(x)/V(x)<1$
for all $v\in\mathcal V$. This seems to be too demanding a requirement.
In the one-dimensional case, $0$ is the only point which can not belong to small sets, and the proof
of geometric ergodicity in this case requires showing $\underset{|x|\rightarrow 0}{\lim\sup}~P^{(2)}V(x)/V(x)<1$.
This has been established by \ctn{Dutta12}, however, the technique of his proof could not assist us in our
complicated, high-dimensional case.

If one has the liberty to assume, in the high-dimensional case, that 
there is an arbitrarily small, compact neighborhood $\mathbb N_0$,
of $\bzero=(0,0,\ldots,0)'$, which has zero probability under the target density $\pi$,
then the proof of 
$\underset{\|x-v\|\rightarrow 0}{\lim\sup} P^{(2)}V(x)/V(x)<1$ for all $v\in\mathcal V$
is not required.
Although for practical purposes this is not a very stringent assumption, from the theoretical standpoint this
is somewhat disconcerting. In the next subsections we introduce two different kinds of 
geometric ergodic mixtures of additive and multiplicative TMCMC kernels that do not require
the undesirable assumption $\pi(\mathbb N_0)=0$. The first mixture we introduce is essentially multiplicative
TMCMC in a sense to be made precise subsequently, whereas the second mixture is a straightforward
convex combination of additive and multiplicative TMCMC kernels.

\subsection{A new mixture-based Markov transition kernel with ``essentially full" weight
on multiplicative TMCMC}
\label{subsec:new_kernel}

We break up $\pi$
into a mixture of 
two densities: $\pi_1$, supported on $\mathbb N_0$, 
and $\pi_2$, supported on $\mathbb N^c_0$.
That is, we write
\begin{align}
\pi(x)&=\pi(\mathbb N_0)\frac{\pi(x)}{\pi(\mathbb N_0)}I\{x\in\mathbb N_0\}
+\pi(\mathbb N^c_0)\frac{\pi(x)}{\pi(\mathbb N^c_0)}I\{x\in\mathbb N^c_0\}\notag\\
&=\pi(\mathbb N_0)\pi_1(x)+\pi(\mathbb N^c_0)\pi_2(x),\label{eq:mixture}
\end{align}
where 
\begin{align}
\pi_1(x)&=\frac{\pi(x)}{\pi(\mathbb N_0)}I\{x\in\mathbb N_0\}\quad\mbox{and}\quad
\pi_2(x)=\frac{\pi(x)}{\pi(\mathbb N^c_0)}I\{x\in\mathbb N^c_0\}.\label{eq:pi_split}
\end{align}
Clearly, $\pi_2(\mathbb N_0)=0$. In fact, as we elaborate below, the above mixture representation transfers the 
requirement $\pi(\mathbb N_0)=0$
to $\pi_2(\mathbb N_0)=0$.

Now consider the following Markov chain: for any $x\in\mathbb R^d$ and $A\in\mathcal B(\mathbb R^d)$,
with $\mathcal B(\mathbb R^d)$ being the Borel $\sigma$-field of $\mathbb R^d$,
\begin{equation}
P(x,A)=\pi(\mathbb N_0)P^{(1)}(x,A)+\pi(\mathbb N^c_0)P^{(2)}(x,A),
\label{eq:mc1}
\end{equation}
where $P^{(1)}(x,\cdot)$ and $P^{(2)}(x,\cdot)$ are Markov transition kernels corresponding
to additive TMCMC converging to $\pi_1$ and multiplicative TMCMC converging to $\pi_2$, respectively. 
We choose the proposal density $g^{(2)}$ 
for multiplicative TMCMC such that there is a one-dimensional, arbitrarily small neighborhood of $0$ which receives
zero probability under $g^{(2)}$. We denote the one-dimensional neighborhood of $0$ by $\mathcal N_0$.
We also assume that there exist 
arbitrarily small neighborhoods $\mathcal N_{+1}$ and $\mathcal N_{-1}$
of $+1$ and $-1$ respectively, which receive zero probability under $g^{(2)}$. 

In order to implement the mixture kernel $P$, we can separately run two 
chains -- one is additive TMCMC converging to $\pi_1$ and another is
multiplicative TMCMC converging to $\pi_2$, both chains starting at the
same initial value $x_0$. Since both the chains are positive Harris recurrent on 
$\mathbb R^d\backslash\mathcal V$ 
($P^{(1)}$ is positive Harris recurrent on $\mathbb R^d$ and $P^{(2)}$ is positive Harris recurrent on 
$\mathbb R^d\backslash\mathcal V$)
convergence to both
$\pi_1$ and $\pi_2$ occurs for the initial value $x_0~(\neq\bzero)$, even though the supports of
$\pi_1$ and $\pi_2$ are disjoint. In practice, it will be convenient to choose $x_0$
from the boundary between $\mathbb N_0$ and $\mathbb N^c_0$. Thus, for any initial value
$x_0$, we will have an additive TMCMC chain $\{x^{(k)}_1;k=0,1,2,\ldots\}$ converging to $\pi_1$
and another multiplicative TMCMC chain $\{x^{(k)}_2;k=0,1,2,\ldots\}$ converging to $\pi_2$,
with $x^{(0)}_1=x^{(0)}_2=x_0$. 

Finally, for each $k=1,2,\ldots$, we select and store
$x^{(k)}_1$ with probability $\pi(\mathbb N_0)$ and $x^{(k)}_2$ with probability $[1-\pi(\mathbb N_0)]$. 
Thus, for $k>1$, the chain $\left\{P^{(1)}\right\}^k$ depends only on $x^{(k-1)}_1$, and not
on $x^{(k-1)}_2$. Similarly, $\left\{P^{(2)}\right\}^k$ depends only on $x^{(k-1)}_2$
and not on $x^{(k-1)}_1$. 

Thus, the mixture $P$ uses additive TMCMC to simulate only from $\pi_1$, and uses
multiplicative TMCMC to simulate only from $\pi_2$. Since $\pi(\mathbb N_0)$ is negligibly small,
the mixture $P$ gives ``essentially full" weight to multiplicative TMCMC.

If we can prove that $P^{(2)}$ is geometrically ergodic for $\pi_2$, then because $P^{(1)}$
is geometrically ergodic for $\pi_1$ (in fact, uniformly ergodic for $\pi_1$ 
since the support $\mathbb N_0$ of $\pi_1$ is compact), it will follow that $P$ 
itself is geometrically ergodic. See Appendix \ref{sec:P_geo} for a proof
of this statement.

Note that $\pi(\mathbb N_0)$ is unknown and needs to be estimated for implementing $P$.
In Appendix \ref{sec:implement_P} we present an importance sampling based idea regarding this,
also demonstrating why the estimated probability is expected to yield the same TMCMC samples as
the exact value of $\pi(\mathbb N_0)$.

The mixture kernel $P$ given by (\ref{eq:mc1}) is designed to give almost full weight to
multiplicative TMCMC. It is also possible to consider a more conventional mixture of 
additive and multiplicative TMCMC, which is also geometrically ergodic but combines the 
good features of both the algorithms
to yield a more efficient TMCMC sampler, and does not require estimation of $\pi(\mathbb N_0)$.
In the next subsection 
we discuss this in detail, also elucidating
how (\ref{eq:mc1}) differs from the combination of
additive and multiplicative TMCMC in a traditional mixture set-up.

\subsection{Combination of additive and multiplicative TMCMC in a traditional mixture set-up}
\label{subsec:usual_mixture_kernel}

Instead of (\ref{eq:mc1}), we could define a mixture of the form
\begin{equation}
P^*(x,A)=p~P^{(1)}(x,A)+(1-p)~P^{(2)}(x,A),
\label{eq:mc2}
\end{equation}
where $0<p<1$ is any choice of mixing probability. The transition kernels $P^{(1)}(x,\cdot)$
and $P^{(2)}(x,\cdot)$, as before, are additive and multiplicative TMCMC, respectively,
but here each of them converges to the target density $\pi$, unlike the case of (\ref{eq:mc1}) where
$P^{(1)}(x,\cdot)$ converged to $\pi_1$ and $P^{(2)}(x,\cdot)$ converged to $\pi_2$.

For the implementation of $P^*$, one can first simulate $u\sim U(0,1)$; 
if $u<p$, then additive TMCMC will be implemented, else multiplicative TMCMC should be used. 
Thus, unlike the case of (\ref{eq:mc1}), we have a single chain $\{x^{(k)};k=0,1,2,\ldots\}$
converging to $\pi$. Note also, that $P^*$ implements 
both additive and multiplicative TMCMC on the entire support of $\pi$. In contrast, 
$P$, given by (\ref{eq:mc1}), implements additive TMCMC only for $\pi_1$, which is supported on
$\mathbb N_0$ and implements multiplicative TMCMC only for $\pi_2$, which is supported on $\mathbb N^c_0$.

In Section \ref{sec:geo_additive} we have already shown, for $V(x)=c/\sqrt{\pi(x)}$, that 
$\underset{\|x\|\rightarrow\infty}{\lim\sup}~\frac{P^{(1)}V(x)}{V(x)}<1$ and 
that the ratio $\frac{P^{(1)}V(x)}{V(x)}$ is finite for all $x$.
For the same function $V$ if we can also prove that 
$\underset{\|x\|\rightarrow\infty}{\lim\sup}~\frac{P^{(2)}V(x)}{V(x)}<1$, and that the ratio
$\frac{P^{(2)}V(x)}{V(x)}$ is finite for all $x$,
then it follows that the mixture $P^*$ is also geometrically ergodic. Indeed, 
\begin{align}
\underset{\|x\|\rightarrow\infty}{\lim\sup}~\frac{P^*V(x)}{V(x)} &\leq 
p~\underset{\|x\|\rightarrow\infty}{\lim\sup}~\frac{P^{(1)}V(x)}{V(x)}
+(1-p)~\underset{\|x\|\rightarrow\infty}{\lim\sup}~\frac{P^{(2)}V(x)}{V(x)}\notag\\
&<p+(1-p)=1,\notag
\end{align}
and $v\in\mathcal V$ can be a limit point of small sets corresponding to $P^*$, since
$P^*(x,A)\geq pP^{(1)}(x,A)$ for all $x$ and all $A\in\mathcal B (\mathbb R^d)$, and all compact sets of 
$\mathbb R^d$ are small sets of $P^{(1)}$.

\subsection{Distinctions between the roles of $P^{(2)}$ in $P$ and $P^*$}
\label{subsec:distinctions}

\subsubsection{Geometric ergodicity of $P$ requires geometric ergodicity of $P^{(2)}$}
\label{subsubsec:geo_P}
The proof of geometric ergodicity of the essentially fully multiplicative mixture $P$ will follow
if we can show that $P^{(2)}$ is geometrically ergodic for $\pi_2$, where $\pi_2(\mathbb N_0)=0$ by construction.
Theorem \ref{theorem:geo_multiplicative} provides necessary and sufficient conditions for
geometric ergodicity of $P^{(2)}$ under the super-exponential set-up, assuming $\pi_2(\mathbb N_0)=0$,
and that the proposal density $g^{(2)}$ gives zero probability to arbitrarily small compact neighborhoods of
$0$, $-1$ and $+1$. 
Since it is always possible to construct a proposal density $g^{(2)}$ with the requisite properties,
the strategy of forming the mixture $P$ is not restrictive, given the super-exponential set-up.

\subsubsection{Geometric ergodicity of $P^*$ does not require geometric ergodicity of $P^{(2)}$ or the restriction 
$\pi(\mathbb N_0)=0$}
\label{subsubsec:geo_P_star}
Geometric ergodicity of the traditional mixture $P^*$, on the other hand, follows only if 
$\underset{\|x\|\rightarrow\infty}{\lim\sup}~\frac{P^{(2)}V(x)}{V(x)}<1$ 
and $\frac{P^{(2)}V(x)}{V(x)}$ is finite for all $x$; it does not require $\pi(\mathbb N_0)=0$, $\pi$
being the invariant target distribution for both $P^{(1)}$ and $P^{(2)}$ of $P^*$. That the former two
conditions hold under the aforementioned assumptions, without the restriction $\pi(\mathbb N_0)=0$, can be easily seen
in the proof of Theorem \ref{theorem:geo_multiplicative}. 
The implication is that, super-exponentiality of $\pi$ and the aforementioned properties
of the proposal distribution $g^{(2)}$ guarantee geometric ergodicity of $P^*$, 
even though $P^{(2)}$ does not individually converge to its invariant distribution $\pi$ 
because the proposal density $g^{(2)}$ assigns 
zero probability to some arbitrarily small compact neighborhood of $\bzero$ (in the first place, the fact that
$P^*$ converges to $\pi$ is clear because of irreducibility, aperiodicity and positive Harris recurrence of $P^*$). 
As before, assumption of such proposal density $g^{(2)}$ is not restrictive,
and hence the strategy of forming the mixture $P^*$ is not restrictive either, given the super-exponential set-up.

In the next section we introduce our theorem characterizing geometric ergodicity of $P^{(2)}$.

\subsection{Geometric ergodicity of $P^{(2)}$ for $\pi_2$}
\label{subsec:geo_multiplicative}

For multiplicative TMCMC, for a given move-type $b$, we define the acceptance region and the 
potential rejection region by
$A^{(2)}(b,x)=\{\epsilon:\frac{\pi(T^{(2)}_b(\epsilon))}{\pi(x)}|J(b,\epsilon)|\geq 1\}$ and 
$R^{(2)}(b,x)=\{\epsilon:\frac{\pi(T^{(2)}_b(\epsilon))}{\pi(x)}|J(b,\epsilon)|< 1\}$, respectively.
The overall acceptance region and the overall potential rejection region are
$A^{(2)}(x)=\cup_{b_1,\ldots,b_d}A^{(2)}(b,x)$ and $R^{(2)}(x)=\cap_{b_1,\ldots,b_d}R^{(2)}(b,x)$, respectively.
We also define
$A^*(b,x)=\{\epsilon:\frac{\pi(T^{(2)}_b(\epsilon))}{\pi(x)}\geq 1\}$ and 
$R^*(b,x)=\{\epsilon:\frac{\pi(T^{(2)}_b(\epsilon))}{\pi(x)}< 1\}$, respectively.

Let $Q^{(2)}(x,B)$ denote the probability corresponding to the multiplicative TMCMC proposal 
of reaching the Borel set $B$ from $x$ in one step.

Then the following theorem characterizes geometric ergodicity of multiplicative TMCMC under 
the super-exponential set-up. For our convenience, we slightly abuse notation by referring to $\pi_2$
as $\pi$.

\begin{theorem}\label{theorem:geo_multiplicative}

Suppose that $\pi$, the target density, is super-exponential and has contours that are nowhere piecewise parallel 
to $\{x:|x_1|=\cdots =|x_d|\}$; also assume that there is an arbitrarily small compact neighborhood $\mathbb N_0$ such that
$\pi(\mathbb N_0)=0$. If there exist compact neighborhoods $\mathcal N_0$, $\mathcal N_{+1}$ and $\mathcal N_{-1}$
(all arbitrarily small) of $0$, $+1$ and $-1$, respectively such that
$g^{(2)}$ gives zero probability to $\mathcal N_0$, $\mathcal N_{+1}$ and $\mathcal N_{-1}$,   
then the multiplicative TMCMC chain satisfies geometric drift if and only if

\begin{equation}
\underset{\|x\| \rightarrow \infty}{\lim\inf}~ Q^{(2)}(x, A^{(2)}(x)) > 0.
\label{eq:liminf_Q_multiplicative}
\end{equation}


\end{theorem}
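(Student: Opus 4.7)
The plan is to adapt the proof of Theorem \ref{theorem:geo_additive} to the multiplicative kernel $P^{(2)}$, retaining the drift function $V(x)=c/\sqrt{\pi(x)}$. With the Jacobian-adjusted acceptance probability $\alpha=\min(1,\pi(T^{(2)}_b(x,\epsilon))|J(b,\epsilon)|/\pi(x))$, the drift ratio has the explicit representation
\begin{align*}
\frac{P^{(2)}V(x)}{V(x)}=\frac{1}{3^d}\sum_{b_1,\ldots,b_d}&\int_{A^{(2)}(b,x)}\sqrt{\frac{\pi(x)}{\pi(T^{(2)}_b(x,\epsilon))}}\,g^{(2)}(\epsilon)\,d\epsilon\\
&+\frac{1}{3^d}\sum_{b_1,\ldots,b_d}\int_{R^{(2)}(b,x)}\left\{1-\frac{\pi(T^{(2)}_b(x,\epsilon))|J(b,\epsilon)|}{\pi(x)}+|J(b,\epsilon)|\sqrt{\frac{\pi(T^{(2)}_b(x,\epsilon))}{\pi(x)}}\right\}g^{(2)}(\epsilon)\,d\epsilon,
\end{align*}
which is the multiplicative analogue of (\ref{eq:sum_ratios_additive}). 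The hypotheses on $g^{(2)}$ are crucial: vanishing on neighborhoods of $0,+1,-1$ forces $|\epsilon|\in[\epsilon_0,1-\epsilon_0]$ on the support of $g^{(2)}$ for some $\epsilon_0>0$, so $|J(b,\epsilon)|=|\epsilon|^{\sum_i b_i}$ is uniformly bounded by $M:=\epsilon_0^{-d}$ and the multiplicative displacement $T^{(2)}_b(x,\epsilon)-x$ has magnitude comparable to $\|x\|$ for every $b\neq\bzero$. Using $\pi(T)|J|\geq\pi(x)$ on $A^{(2)}(b,x)$ one gets $\sqrt{\pi(x)/\pi(T)}\leq\sqrt M$, and on $R^{(2)}(b,x)$ both $|J|\sqrt{\pi(T)/\pi(x)}\leq\sqrt M$ and $1-\pi(T)|J|/\pi(x)\leq 1$, whence $P^{(2)}V(x)/V(x)\leq 2\sqrt M+1<\infty$ for all $x$, verifying (\ref{eq:infcon}).

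To establish (\ref{eq:1con}), I fix $\eta>0$ sufficiently small and use (\ref{eq:liminf_Q_multiplicative}) to get $\underset{\|x\|\rightarrow\infty}{\lim\sup}~Q^{(2)}(x,R^{(2)}(x))\leq 1-2\eta^{1/2}$, choose a compact $E$ with $Q^{(2)}(x,E^c)<\eta/(2\sqrt M)$, and choose $\delta>0$ small so that $Q^{(2)}(x,C_{\pi(x)}(\delta)\cap E)<\eta/(2\sqrt M)$; the latter follows, as in the additive case, because for each $b\neq\bzero$ the multiplicative curve $\epsilon\mapsto T^{(2)}_b(x,\epsilon)$ is nowhere tangent to $C_{\pi(x)}$ on $E$ (by the nowhere-piecewise-parallel hypothesis together with $\epsilon$ bounded away from $\pm 1$), and hence its preimage of the $\delta$-cone consists of finitely many $\epsilon$-intervals of total $g^{(2)}$-measure $O(\delta)$. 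Splitting each integral into the cone part and its complement, the cone contributions are each bounded by $\sqrt M\cdot\eta/(2\sqrt M)=\eta/2$. On the complement $C_{\pi(x)}^c(\delta)$, super-exponentiality supplies, for any prescribed $K$, an $R_\eta$ such that for $\|x\|>R_\eta$ and $T=T^{(2)}_b(x,\epsilon)\notin C_{\pi(x)}(\delta)$, letting $z$ denote the radial intersection of the line through $T$ and the origin with $C_{\pi(x)}$ (so $\|z\|\to\infty$ because $\|T\|$ is comparable to $\|x\|$), either $\pi(T)/\pi(x)\leq e^{-\delta K}$ (outward case) or $\pi(x)/\pi(T)\leq e^{-\delta K}$ (inward case). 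Taking $K$ so large that $e^{-\delta K}<\eta/(4M^2)$, the outward case on $A^{(2)}\cap C_{\pi(x)}^c(\delta)$ becomes impossible (it would force $|J|\geq\pi(x)/\pi(T)>M$), so only the inward case survives there, giving $\sqrt{\pi(x)/\pi(T)}<\eta^{1/2}/2$; analogously $|J|\sqrt{\pi(T)/\pi(x)}<\eta^{1/2}/2$ on $R^{(2)}\cap C_{\pi(x)}^c(\delta)$ in either sub-case. Summing all contributions and using $Q^{(2)}(A^{(2)})+Q^{(2)}(R^{(2)})=1$ with the trivial bound $1-\pi(T)|J|/\pi(x)<1$ on $R^{(2)}$, I expect to obtain
\[
\underset{\|x\|\rightarrow\infty}{\lim\sup}~\frac{P^{(2)}V(x)}{V(x)}\leq\eta+\eta^{1/2}+\underset{\|x\|\rightarrow\infty}{\lim\sup}~Q^{(2)}(x,R^{(2)}(x))\leq 1-\eta^{1/2}+\eta<1,
\]
establishing (\ref{eq:1con}) and hence, via Lemma \ref{Lemma 1}, the forward implication.

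For the converse, I mirror the end of the proof of Theorem \ref{theorem:geo_additive}: if (\ref{eq:liminf_Q_multiplicative}) fails then $\underset{\|x\|\rightarrow\infty}{\lim\sup}~Q^{(2)}(x,R^{(2)}(x))=1$, and super-exponentiality together with $|J|\leq M$ force $\pi(T)|J|/\pi(x)<\eta$ outside $E\cap C_{\pi(x)}(\delta)$, whence $\underset{\|x\|\rightarrow\infty}{\lim\sup}~P^{(2)}(x,\{x\})\geq(1-\eta)(1-2\eta)$ for every $\eta>0$, i.e.\ equals $1$, precluding geometric ergodicity by Theorem 5.1 of \ctn{Roberts96}. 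The main technical obstacle I anticipate is the interplay of the Jacobian with the cone-and-super-exponential argument: $|J|$ permits acceptance of moves that decrease $\pi$, so I must use $|J|\leq M$ to eliminate the outward arm of $A^{(2)}\cap C_{\pi(x)}^c(\delta)$ and pick $K$ large enough to absorb the factor $M$ when bounding the $|J|\sqrt{\pi(T)/\pi(x)}$ contribution on $R^{(2)}\cap C_{\pi(x)}^c(\delta)$. Simultaneously, the assumption that $|\epsilon|$ is bounded away from $\pm 1$ is what grants transversality of the multiplicative curves to $C_{\pi(x)}$ (producing the $O(\delta)$ cone-measure bound), while $|\epsilon|$ bounded away from $0$ is precisely what caps the Jacobian; neither hypothesis on $g^{(2)}$ is cosmetic.
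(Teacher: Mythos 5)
Your argument is correct in substance and follows the paper's overall strategy (the drift function $V=c/\sqrt{\pi}$, the radial cone $C_{\pi(x)}(\delta)$, super-exponential decay outside the cone, Lemma \ref{Lemma 1} for the forward direction, and Theorem 5.1 of Roberts and Tweedie for the converse), but you execute the Jacobian control quite differently from the paper, and arguably more cleanly. The paper decomposes each acceptance/rejection integral further against the Jacobian-free regions $A^*(b,x)$ and $R^*(b,x)$, observes that $A^{(2)}(b,x)\cap R^*(b,x)$ is either empty or of the form $\{\epsilon:|\epsilon|^k\leq \pi(T_b(x,\epsilon))/\pi(x)<1\}$, bounds its proposal probability, and invokes a separate zero-probability-set argument (the set $\mathcal S$) to keep $\pi(x)/\pi(T_b(x,\epsilon))$ bounded there. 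You instead extract a single uniform bound $|J(b,\epsilon)|\leq M=\epsilon_0^{-d}$ from the support restriction on $g^{(2)}$ and use it three ways: $\pi(x)/\pi(T)\leq |J|\leq M$ on $A^{(2)}(b,x)$ gives both the finiteness bound $P^{(2)}V(x)/V(x)\leq 2\sqrt{M}+1$ (sharper and more transparent than the paper's justification of (\ref{eq:infcon})) and the cone-part estimate; choosing $K$ with $e^{\delta K}>M$ makes the accepted-but-$\pi$-decreasing set \emph{empty} outside the cone rather than merely improbable; and the identity $|J|\sqrt{\pi(T)/\pi(x)}=\sqrt{|J|}\sqrt{|J|\pi(T)/\pi(x)}\leq\sqrt{|J|}$ on $R^{(2)}(b,x)$ handles both radial sub-cases of the rejection integral at once. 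Your constant bookkeeping is slightly off (two cone contributions of $\eta$ each require the margin $1-2\eta^{1/2}$ you posited, or a larger initial margin as in the paper's $1-5\gamma^{1/2}$), but this is cosmetic.

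The one place your sketch is genuinely loose is the converse. The assertion that super-exponentiality and $|J|\leq M$ force $\pi(T_b(x,\epsilon))|J(b,\epsilon)|/\pi(x)<\eta$ everywhere outside $E\cap C_{\pi(x)}(\delta)$ fails on the inward portion of $R^{(2)}(x)$, i.e.\ on $R^{(2)}(x)\cap A^*(b,x)$, where $\pi(T)/\pi(x)$ is large and only the product with the (necessarily tiny) Jacobian is below $1$ --- there the rejection probability $1-\pi(T)|J|/\pi(x)$ need not be close to $1$. The paper repairs exactly this by showing that $Q^{(2)}(x,R^{(2)}(x)\cap A^*(b^*,x)\cap[C_{\pi(x)}(\delta)]^c)\leq\eta$ and restricting the lower bound for $P^{(2)}(x,\{x\})$ to $R^{(2)}(x)\cap R^*(b^*,x)\cap[C_{\pi(x)}(\delta)]^c$, which carries mass at least $1-2\eta$ in the limit; your final factor $(1-\eta)(1-2\eta)$ is consistent with this but the intermediate justification needs that extra step.
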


\begin{proof}

As before, let $C_{\pi(x)}$ be the contour of the density $\pi$ corresponding to the value $\pi(x)$, and
let the radial cone around $ C_{\pi(x)}$ be $C_{\pi(x)} (\delta)$ given by (\ref{eq:radial_cone}).
By (\ref{eq:liminf_Q_multiplicative}) there exists $\gamma > 0$ such that 
\begin{equation}
\underset{\|x\| \rightarrow \infty} {\lim\sup} \hspace{0.2 cm} Q^{(2)} (x, R^{(2)}(x)) \leq 1-5\gamma^{\frac{1}{2}}.
\label{eq:limsup_Q_multiplicative}
\end{equation}

Once again, we take the belt of length $\delta$ such that the probability that a move from $x$ 
falls within this $\delta$ belt is 
less than $\gamma$. This holds since the neighborhoods $\mathcal N_{+1}$ and 
$\mathcal N_{-1}$ of $+1$ and $-1$ receive zero probabilities
under the proposal density $g^{(2)}$.
The remaining arguments are similar as in the proof of Theorem \ref{theorem:geo_additive}. 
Hence, as before,
there exists $R_{\gamma}$ so that for any point $y$ outside the $\delta$ bound around $x$, (\ref{eq:super_exp}) holds. 

As before, let $ V(x) = \frac{c}{\sqrt{\pi(x)}}$, where $c > 0$ is chosen appropriately. 
Then it holds that
\begin{eqnarray}
&&\frac{P^{(2)}V(x)}{V(x)}\notag\\
&= & \frac{1}{3^{d}} \sum_{b_{1},\cdots, b_{d}} 
\int _{A^{(2)}(x)} {\left [ \frac{\pi(x_{1},\ldots,x_{d})}{\pi(T_b(x,\epsilon))} \right ]^{\frac{1}{2}} 
g^{(2)}(\epsilon)d\epsilon} \nonumber\\
&& + \frac{1}{3^{d}} \sum_{b_{1},\cdots, b_{d}} \int _{R^{(2)}(x)}{ 
\left [ 1- \frac{\pi(T_b(x,\epsilon))}{\pi(x_{1},\ldots,x_{d})}|J(b,\epsilon)| 
+ \left \{ \frac{\pi(T_b(x,\epsilon))}{\pi(x_{1},\ldots,x_{d})} \right \}^{\frac{1}{2}}|J(b,\epsilon)| \right ] 
g^{(2)}(\epsilon)d\epsilon}.\notag\\
\label{eq:exact_expression}\\
&\leq & \frac{1}{3^{d}} \sum_{b_{1},\cdots, b_{d}} 
\left\{\sum_{b_1,\ldots,b_d}
\int _{A^{(2)}(b,x)} {\left [ \frac{\pi(x_{1},\ldots,x_{d})}{\pi(T_b(x,\epsilon))} \right ]^{\frac{1}{2}} 
g^{(2)}(\epsilon)d\epsilon}\right\} \nonumber\\
&& + \frac{1}{3^{d}} \sum_{b_{1},\cdots, b_{d}} \int _{R^{(2)}(x)}{ 
\left [ 1- \frac{\pi(T_b(x,\epsilon))}{\pi(x_{1},\ldots,x_{d})}|J(b,\epsilon)| 
+ \left \{ \frac{\pi(T_b(x,\epsilon))}{\pi(x_{1},\ldots,x_{d})} \right \}^{\frac{1}{2}}|J(b,\epsilon)| \right ] 
g^{(2)}(\epsilon)d\epsilon}.\nonumber\\
\label{eq:sum_ratios_multiplicative}
\end{eqnarray}
We now break up the integrals on $A^{(2)}(b,x)$ as sums of the integrals on 
$A^{(2)}(b,x)\cap A^*(b,x)$ and $A^{(2)}(b,x)\cap R^*(b,x)$. Also, we break up the integrals
on $R^{(2)}(x)$ as sums of integrals on $R^{(2)}(x)\cap A^*(b,x)$ and $R^{(2)}(x)\cap R^*(b,x)$.
Since $R^{(2)}(x)=\cap_{b_1,\ldots,b_d}R^{(2)}(b,x)$, these involve the intersections 
$R^{(2)}(b,x)\cap A^*(b,x)$ and $R^{(2)}(b,x)\cap R^*(b,x)$, respectively.

Note that, since $|J(b,\epsilon)|$ is of the form $|\epsilon |^k$, for $k=-d,\ldots,-1,0,1,\ldots,d$, 
and $|\epsilon |\leq 1$ (almost surely), $A^{(2)}(b,x)\cap R^*(b,x)$ is either the null set $\emptyset$
(when $k= -d,-d+1,\ldots,-1,0$), or of the form $A^{(2)}(b,x)\cap R^*(b,x)$$=\{\epsilon: |\epsilon|^k\leq\frac{\pi(T^{(2)}_b(x,\epsilon))}{\pi(x)}<1\}$,
for $k=1,2,\ldots,d$. Hence, for $\|x\|>R_{\gamma}$, by (\ref{eq:super_exp}),
$Q^{(2)}(x,A^{(2)}(b,x)\cap R^*(b,x)\cap [C_{\pi(x)}(\delta)]^c)$
$\leq Q^{(2)}(|\epsilon|^k\leq\frac{\pi(T^{(2)}_b(x,\epsilon))}{\pi(x)}<\gamma)<\gamma/2$,
and for $\delta$ sufficiently small, 
$Q^{(2)}(x,A^{(2)}(b,x)\cap R^*(b,x)\cap C_{\pi(x)}(\delta))<\gamma/2$. Moreover, on $A^{(2)}(b,x)\cap R^*(b,x)\cap C_{\pi(x)}(\delta)$,
$\frac{\pi(x_{1},\ldots,x_{d})}{\pi(T_b(x,\epsilon))}$ is bounded by a finite constant. 
By hypothesis, $\mathcal N_0$ has zero probability
under $g^{(2)}$. 
This implies that the set
\begin{equation*}
\mathcal S=\{|\epsilon |\leq 1:\exists ~b~\mbox{ and set}~ \mathcal S_{\epsilon}~\mbox{such that for}~
x\in \mathcal S_{\epsilon},~ 
\frac{\pi(x_{1},\ldots,x_{d})}{\pi(T_b(x,\epsilon))}>K,~\forall~ K>0\}
\label{eq:S}
\end{equation*}
has zero probability under $g^{(2)}$.
Hence $\frac{\pi(x_{1},\ldots,x_{d})}{\pi(T_b(x,\epsilon))}$ is almost surely bounded even on
$A^{(2)}(b,x)\cap R^*(b,x)\cap C^c_{\pi(x)}(\delta)$.

Hence, by the above arguments, for $\|x\|>R_{\gamma}$,
and for sufficiently small $\xi>0$,
\begin{align}
&\int _{A^{(2)}(b,x)\cap R^*(b,x)} \left [ \frac{\pi(x_{1},\ldots,x_{d})}{\pi(T_b(x,\epsilon))} \right ]^{\frac{1}{2}}
g^{(2)}(\epsilon)d\epsilon\notag\\
&=\int _{A^{(2)}(b,x)\cap R^*(b,x)\cap C_{\pi(x)}(\delta)} \left [ \frac{\pi(x_{1},\ldots,x_{d})}{\pi(T_b(x,\epsilon))} \right ]^{\frac{1}{2}}
g^{(2)}(\epsilon)d\epsilon\notag\\
&+\int _{A^{(2)}(b,x)\cap R^*(b,x)\cap [C_{\pi(x)}(\delta)]^c} \left [ \frac{\pi(x_{1},\ldots,x_{d})}{\pi(T_b(x,\epsilon))} \right ]^{\frac{1}{2}}
g^{(2)}(\epsilon)d\epsilon\notag\\
&<\xi/2\notag.
\end{align}
By similar (in fact, somewhat simpler) arguments, it follows that for $\|x\|>R_{\gamma}$,
\begin{align}
\int _{A^{(2)}(b,x)\cap A^*(b,x)} \left [ \frac{\pi(x_{1},\ldots,x_{d})}{\pi(T_b(x,\epsilon))} \right ]^{\frac{1}{2}}
g^{(2)}(\epsilon)d\epsilon
&<\frac{\xi}{2}+\xi^{\frac{1}{2}}Q^{(2)}(x,A^{(2)}(b,x)\cap A^*(b,x)\cap [C_{\pi(x)}(\delta)]^c)\notag\\
&<\frac{\xi}{2}+\xi^{\frac{1}{2}}Q^{(2)}(x,A^{(2)}(x)).
\label{eq:upper_bound1}
\end{align}
The arguments required are somewhat simpler because on $A^{(2)}(b,x)\cap A^*(b,x)$, the ratio 
$\frac{\pi(x_{1},\ldots,x_{d})}{\pi(T_b(x,\epsilon))}$ is bounded above by 1.
Hence, the first part of the expression for $P^{(2)}V(x)/V(x)$ given by (\ref{eq:sum_ratios_multiplicative}) is
less than $3^d\left(\xi+\xi^{\frac{1}{2}}Q^{(2)}(x,A^{(2)}(x))\right)$.
Formally, for $\|x\|>R_{\gamma}$,
\begin{equation}
\frac{1}{3^{d}} \sum_{b_{1},\cdots, b_{d}} 
\left\{\sum_{b_1,\ldots,b_d}\int _{A^{(2)}(b,x)} {\left [ \frac{\pi(x_{1},\ldots,x_{d})}{\pi(T_b(x,\epsilon))} 
\right ]^{\frac{1}{2}} 
g^{(2)}(\epsilon)d\epsilon}\right\} <3^d\left(\xi+\xi^{\frac{1}{2}}Q^{(2)}(x,A^{(2)}(x))\right).
\label{eq:upper_bound2}
\end{equation}
For sufficiently small $\xi>0$ we can choose $\eta>3^{2d}\xi$ so that
\begin{equation}
3^d\left(\xi+\xi^{\frac{1}{2}}Q^{(2)}(x,A^{(2)}(x))\right)
<\eta+\eta^{\frac{1}{2}}Q^{(2)}(x,A^{(2)}(x)).
\label{eq:new_upper_bound}
\end{equation}
In the second part of the expression for $P^{(2)}V(x)/V(x)$, note that on $R^{(2)}(x)$,  
$$1- \frac{\pi(T_b(x,\epsilon))}{\pi(x_{1},\ldots,x_{d})}|J(b,\epsilon)|<1,$$
so that 
\begin{equation}
\int_{R^{(2)}(x)}\left[1- \frac{\pi(T_b(x,\epsilon))}{\pi(x_{1},\ldots,x_{d})}|J(b,\epsilon)|\right]g^{(2)}(\epsilon)d\epsilon
<Q^{(2)}(x,R^{(2)}(x)),
\label{eq:upper_bound3}
\end{equation}
and
\begin{align}
&\int_{R^{(2)}(x)}\left\{ \frac{\pi(T_b(x,\epsilon))}{\pi(x_{1},\ldots,x_{d})} \right \}^{\frac{1}{2}}|J(b,\epsilon)|  
g^{(2)}(\epsilon)d\epsilon\label{eq:uppbound1}\\
&=\int_{R^{(2)}(x)\cap A^*(b,x)}\left\{ \frac{\pi(T_b(x,\epsilon))|J(b,\epsilon)|}{\pi(x_{1},\ldots,x_{d})} \right \}^{\frac{1}{2}}
|J(b,\epsilon)|^{\frac{1}{2}}  
g^{(2)}(\epsilon)d\epsilon\label{eq:upper_bound4_1st}\\
&+\int_{R^{(2)}(x)\cap R^*(b,x)}\left\{ \frac{\pi(T_b(x,\epsilon))}{\pi(x_{1},\ldots,x_{d})} \right \}^{\frac{1}{2}}|J(b,\epsilon)|  
g^{(2)}(\epsilon)d\epsilon.
\label{eq:upper_bound4}
\end{align}
Note that on $R^{(2)}(x)\cap A^*(b,x)$, $ \frac{\pi(T_b(x,\epsilon))|J(b,\epsilon)|}{\pi(x_{1},\ldots,x_{d})}<1$, and
by our choice of the proposal density $g^{(2)}$, $\mathcal N_0$ has zero probability under $g^{(2)}$,
so that the Jacobians $|J(b,\epsilon)|$ are bounded above by a finite constant, say $K$; we choose
$K>1$. Hence, the first integral (\ref{eq:upper_bound4_1st})
in the break-up of the integral (\ref{eq:uppbound1}) is bounded above by $K Q^{(2)}(x,R^{(2)}(x)\cap A^*(b,x))$.
Now, $Q^{(2)}(x,R^{(2)}(x)\cap A^*(b,x))=Q^{(2)}(x,R^{(2)}(x)\cap A^*(b,x)\cap C_{\pi(x)}(\delta))
+Q^{(2)}(x,R^{(2)}(x)\cap A^*(b,x)\cap [C_{\pi(x)}(\delta)]^c)$, and we can achieve 
$Q^{(2)}(x,R^{(2)}(x)\cap A^*(b,x)\cap C_{\pi(x)}(\delta))<\gamma^*/4$, for sufficiently small $\gamma^*$.

The sets of the form $R^{(2)}(b,x)\cap A^*(b,x)$ are again empty sets or of the form
$\{\epsilon: |\epsilon|^k\leq\frac{\pi(T^{(2)}_b(x,\epsilon))}{\pi(x)}<1\}$; $k=1,2,\ldots,d$. 
Hence, the sets $R^{(2)}(x)\cap A^*(b,x)$ are also either empty sets
or intersections with sets of the form  
$\{\epsilon: |\epsilon|^k\leq\frac{\pi(T^{(2)}_b(x,\epsilon))}{\pi(x)}<1\}$; $k=1,2,\ldots,d$.
Hence, for $\|x\|>R_{\gamma}$, we can achieve $Q^{(2)}(x,R^{(2)}(x)\cap A^*(b,x)\cap [C_{\pi(x)}(\delta)]^c)<\gamma^*/4$.
In other words, for $\|x\|>R_{\gamma}$,
\begin{align}
&\int_{R^{(2)}(x)\cap A^*(b,x)}\left\{ \frac{\pi(T_b(x,\epsilon))|J(b,\epsilon)|}{\pi(x_{1},\ldots,x_{d})} \right \}^{\frac{1}{2}}
|J(b,\epsilon)|^{\frac{1}{2}}  
g^{(2)}(\epsilon)d\epsilon\notag\\
& <K Q^{(2)}(x,R^{(2)}(x)\cap A^*(b,x)\cap C_{\pi(x)}(\delta))
+ Q^{(2)}(x,R^{(2)}(x)\cap A^*(b,x)\cap [C_{\pi(x)}(\delta)]^c)\notag\\
&< K\gamma^*/2.
\label{eq:upp1}
\end{align}

Now consider the second integral (\ref{eq:upper_bound4}) in the break-up of the integral (\ref{eq:uppbound1}). 
We have
\begin{align}
&\int_{R^{(2)}(x)\cap R^*(b,x)}\left\{ \frac{\pi(T_b(x,\epsilon))}{\pi(x_{1},\ldots,x_{d})} \right \}^{\frac{1}{2}}|J(b,\epsilon)|  
g^{(2)}(\epsilon)d\epsilon\notag\\
&=\int_{R^{(2)}(x)\cap R^*(b,x)\cap C_{\pi(x)}}\left\{ \frac{\pi(T_b(x,\epsilon))}{\pi(x_{1},\ldots,x_{d})} \right \}^{\frac{1}{2}}|J(b,\epsilon)|  
g^{(2)}(\epsilon)d\epsilon\label{eq:uppe1}\\
&+\int_{R^{(2)}(x)\cap R^*(b,x)\cap [C_{\pi(x)}]^c}\left\{ \frac{\pi(T_b(x,\epsilon))}{\pi(x_{1},\ldots,x_{d})} \right \}^{\frac{1}{2}}|J(b,\epsilon)|  
g^{(2)}(\epsilon)d\epsilon.\label{eq:uppe2}
\end{align}
Note that on $R^*(b,x)$, $\frac{\pi(T_b(x,\epsilon))}{\pi(x_{1},\ldots,x_{d})}<1$. Hence, the first integral 
(\ref{eq:uppe1}) in the above break-up is bounded above by 
$K Q^{(2)}(x,R^{(2)}(x)\cap R^*(b,x)\cap C_{\pi(x)}(\delta))$, which, in turn, is
bounded above by $K\gamma^*/2$. 
For $\|x\|>R_{\gamma}$, the second integral (\ref{eq:uppe2}) is bounded above by 
$K{\gamma^*}^{\frac{1}{2}}Q^{(2)}(x,R^{(2)}(x)\cap R^*(b,x)\cap [C_{\pi(x)}(\delta)]^c)$, which, in turn, is bounded above by
$K{\gamma^*}^{\frac{1}{2}}Q^{(2)}(x,R^{(2)}(x))$.
In other words,
\begin{align}
&\int_{R^{(2)}(x)\cap R^*(b,x)}\left\{ \frac{\pi(T_b(x,\epsilon))|J(b,\epsilon)|}{\pi(x_{1},\ldots,x_{d})} \right \}^{\frac{1}{2}}
|J(b,\epsilon)|^{\frac{1}{2}}  
g^{(2)}(\epsilon)d\epsilon
< K\frac{\gamma^*}{2}+ K{\gamma^*}^{\frac{1}{2}}Q^{(2)}(x,R^{(2)}(x)).
\label{eq:upp2}
\end{align}
Combining (\ref{eq:upp1}) and (\ref{eq:upp2}) we obtain that (\ref{eq:uppbound1}) is bounded above by
$K\gamma^*+ K{\gamma^*}^{\frac{1}{2}}Q^{(2)}(x,R^{(2)}(x))$.
With sufficiently small $\gamma^*$ we have, for $\eta> K^2\gamma^*$, 
\[
 K\gamma^*+ K{\gamma^*}^{\frac{1}{2}}Q^{(2)}(x,R^{(2)}(x))
<
\eta+\eta^{\frac{1}{2}}Q^{(2)}(x,R^{(2)}(x)).
\]
Combining this with (\ref{eq:upper_bound3}) we get the following upper bound for the second term of
(\ref{eq:sum_ratios_multiplicative}):
\begin{align}
& \frac{1}{3^{d}} \sum_{b_{1},\cdots, b_{d}} \int _{R^{(2)}(x)}{ 
\left [ 1- \frac{\pi(T_b(x,\epsilon))}{\pi(x_{1},\ldots,x_{d})}|J(b,\epsilon)| 
+ \left \{ \frac{\pi(T_b(x,\epsilon))}{\pi(x_{1},\ldots,x_{d})} \right \}^{\frac{1}{2}}|J(b,\epsilon)| \right ] 
g^{(2)}(\epsilon)d\epsilon}\notag\\
&<\eta+\left(1+\eta^{\frac{1}{2}}\right)Q^{(2)}(x,R^{(2)}(x)).
\label{eq:upper_bound5}
\end{align}

Combining (\ref{eq:upper_bound2}) and (\ref{eq:upper_bound5}), we obtain, for $\eta<\gamma$
(so that $\max\{3^{2d}\xi, K^2\gamma^*\}<\eta<\gamma$),
\begin{align}
\underset{\|x\| \rightarrow \infty} {\lim\sup}~
\frac{P^{(2)}V(x)}{V(x)} 
&\leq 2\eta+\eta^{\frac{1}{2}}+\underset{\|x\| \rightarrow \infty} {\lim\sup}~Q^{(2)}(x,R^{(2)}(x))\notag\\
&< 2\eta+\eta^{\frac{1}{2}}+1-5\eta^{\frac{1}{2}}\quad\mbox{(by (\ref{eq:limsup_Q_multiplicative}) and the fact that 
$\eta<\gamma$})\notag\\
&=1-(2\eta)^{\frac{1}{2}}+2\eta\notag\\
&<1.\notag
\end{align}

Hence (\ref{eq:1con}) holds. To see that condition (\ref{eq:infcon}) holds, in (\ref{eq:exact_expression})
observe that 
all the ratios in the integrands are bounded above by 1, while the terms
$|J(b,\epsilon)|^{\frac{1}{2}}$ are almost surely bounded above by our choice of the proposal density $g^{(2)}$. 
Hence, $P^{(2)}V(x)/V(x)$ 
is finite for every $x$.

Now we prove that if multiplicative TMCMC is geometrically ergodic, then (\ref{eq:liminf_Q_multiplicative}) is satisfied. 
As before, we prove that if $\underset{\|x\| \rightarrow \infty} {\lim\sup}~Q^{(2)} (x, R^{(2)}(x))=1$, then 
$\underset{\|x\| \rightarrow \infty} {\lim\sup}~P^{(2)} (x, \{x\})=1$. 
Again, we choose a compact set $E$ such that $Q^{(2)}(x,E^c)\leq\eta$ and choose $\delta>0$
small enough such that 
$\underset{\|x\| \rightarrow \infty}{\lim\sup}~Q^{(2)}(x,C_{\pi(x)}(\delta)\cap E)\leq\eta$.
Since $R^{(2)}(x)\cap A^*(b,x)\cap [C_{\pi(x)}(\delta)]^c$ is either null set or intersection with
sets of the form
$\{\epsilon:|\epsilon|^k\leq \frac{\pi(x)}{\pi(T_b(x,\epsilon))}<1\}$, for $k=1,2,\ldots,d$, it follows from (\ref{eq:super_exp})
that for any fixed $b^*$, if $\|x\|>R_{\eta}$, 
\[
Q^{(2)}(x,R^{(2)}(x)\cap A^*(b^*,x)\cap [C_{\pi(x)}(\delta)]^c)
\leq Q^{(2)}(x,\{\epsilon:|\epsilon|^k\leq \frac{\pi(x)}{\pi(T_{b^*}(x,\epsilon))}<\eta\})\leq\eta.
\]
Hence, 
\[
\underset{\|x\| \rightarrow \infty}{\lim\sup}~Q^{(2)}
(x, R^{(2)}(x)\cap A^*(b^*,x)\cap [C_{\pi(x)}(\delta)]^c)\leq\eta.
\]
Since 
$\underset{\|x\| \rightarrow \infty} {\lim\sup}~Q^{(2)} (x, R^{(2)}(x))=1$, the above imply 
that 
\[
\underset{\|x\| \rightarrow \infty}{\lim\sup}~Q^{(2)}
(x, R^{(2)}(x)\cap R^*(b^*,x)\cap [C_{\pi(x)}(\delta)]^c)>1-2\eta.
\]
Moreover, since $|J(b,\epsilon)|$ are almost surely bounded by the choice of our proposal density,
assume that there exists $0< K<\infty$ such that 
$|J(b,\epsilon)|< K$ almost surely with respect to $g^{(2)}$. 

These and the fact (\ref{eq:super_exp}) 
imply that 
\begin{eqnarray}
\underset{\|x\| \rightarrow \infty}{\lim\sup}~P^{(1)}(x,\{x\}) 
&=& \underset{\|x\| \rightarrow \infty}{\lim\sup}~\frac{1}{3^{d}} 
\sum_{b_{1},\cdots, b_{d}}\int _{R^{(2)}(x)}{ 
\left [ 1- \frac{\pi(T_b(x,\epsilon))}{\pi(x)}|J(b,\epsilon)| \right ] g^{(2)}(\epsilon)d\epsilon}\nonumber\\
& \geq & 
\underset{\|x\| \rightarrow \infty}{\lim\sup}~
\frac{1}{3^{d}} \sum_{b_{1},\cdots, b_{d}} \int _{R^{(2)}(x)\cap R^*(b^*,x)\cap [C_{\pi(x)}(\delta)]^c}{ 
\left [ 1- \frac{\pi(T_b(x,\epsilon))}{\pi(x)}|J(b,\epsilon)| \right ] g^{(2)}(\epsilon)d\epsilon}\nonumber\\ 
&\geq & (1-\eta K)
\underset{\|x\| \rightarrow \infty}{\lim\sup}~
Q(x,R^{(2)}(x)\cap R^*(b^*,x)\cap [C_{\pi(x)}(\delta)]^c)\nonumber\\
&\geq & (1-\eta K)(1-2\eta).
\label{eq:non_geo_multiplicative}
\end{eqnarray}
Since $\eta>0$ is arbitrary, the proof is complete.

\end{proof}
That it is easy to ensure geometric ergodicity of multiplicative TMCMC in super-exponential cases
can be seen as follows. Select a move-type $b^*$ such that $|J(b^*,\epsilon)|=|\epsilon |$.
Then $A(b^*,x)=\{\epsilon:\frac{\pi(T_{b^*}(x,\epsilon))}{\pi(x)}|\epsilon |\geq 1\}$, and
$A^*(b^*,x)=\{\epsilon:\frac{\pi(T_{b^*}(x,\epsilon))}{\pi(x)}\geq 1\}$. Then, since $|\epsilon |\leq 1$
almost surely,
\begin{equation}
A(b^*,x)\cap A^*(b^*,x)=\left\{\epsilon:\frac{\pi(x)}{\pi(T_{b^*}(x,\epsilon))}<|\epsilon |\leq 1\right\}.
\label{eq:eqn1}
\end{equation}
If, for $\eta>0$, $\|x\|>R_{\eta}$, then by (\ref{eq:super_exp}), 
\begin{equation}
\frac{\pi(x)}{\pi(T_{b^*}(x,\epsilon))}<\eta.
\label{eq:eqn2}
\end{equation}
Equations (\ref{eq:eqn1}) and (\ref{eq:eqn2}) imply that for any given $\xi>0$ it is possible to choose $\eta>0$ 
such that for $\|x\|>R_{\eta}$, it holds that
\begin{equation}
Q^{(2)}(x,A^{(2)}(b^*,x)\cap A^*(b^*,x))>1-\xi.
\label{eq:eqn3}
\end{equation}
Hence, for $\|x\|>R_{\eta}$, we obtain using (\ref{eq:eqn3}),
\begin{align}
Q^{(2)}(x,A^{(2)}(x))&\geq Q^{(2)}(x,A^{(2)}(x)\cap A^*(b^*,x))\notag\\
&\geq Q^{(2)}(x,A^{(2)}(b^*,x)\cap A^*(b^*,x))\notag\\
&>1-\xi.\notag
\end{align}
Hence, (\ref{eq:liminf_Q_multiplicative}) holds, ensuring geometric ergodicity.

As we remarked earlier, we omit the proof of geometric ergodicity of additive-multiplicative TMCMC, since it
is almost the same as that of multiplicative TMCMC, provided above.

\section{Illustration with simulation studies}
\label{sec:simulation}

There are several considerations in defining the accuracy or the efficiency of any MCMC-based approach. 
First, one important aspect is that the chain must have reasonably high acceptance rate. 
This has been an important consideration in our proposing TMCMC. 
It is to be noted that geometric ergodicity only tells us that convergence 
of our chain to the target density occurs at a geometric rate. 
However, if the value of $\rho$, the geometric rate in (\ref{eq:geo}) is close to 1, then the 
algorithm in question, 
in spite of being geometrically ergodic, need not be efficient in practice. 
To test how efficient our TMCMC algorithms actually are in absolute terms and also relative to standard MCMC 
approaches, we need to define a measure of closeness of the $n$-th order kernel $P^{n}(x,\cdot)$ 
with respect to the target density $\pi(\cdot)$, assuming that the latter can be empirically evaluated. 
The Kolmogorov Smirnov (K-S) distance seems to be a suitable candidate in this regard, and the one
that we adopt for our purpose. Corresponding to each MCMC algorithm, we consider $N$ replicates of the 
chain starting from the same initial value, so that at each iteration $t$, we obtain a set of
$N$ many realizations of the chain. We then compute the empirical distribution of these $N$ values 
and measure the K-S distance between the empirical distribution and the target distribution $\pi$. 
For the chain to be efficient, it must have K-S distance close to 0 after the chain has run for a 
large number of iterations (that is, when $t$ is large). Moreover, the burn-in period is expected to be small
for efficient MCMC algorithms.

\subsection{First simulation experiment comparing RWMH and additive TMCMC}
\label{subsec:sim1}

Table \ref{table:table1} presents the results of a simulation experiment comparing
the performances of RWMH and additive TMCMC (Add-TMCMC) chains for different dimensions,
where, for our purpose we consider the target density $\pi$ to be the multivariate
normal distribution with mean vector $\bzero$ and covariance matrix $\bold{I}$, the identity matrix.
For RWMH we consider two distinct scales for the normal random walk proposal for each
of the coordinates -- the optimal scale 2.4, and a sub-optimal scale 6. We consider the same scaling for
additive TMCMC as well.
Indeed, as shown in \ctn{Dey13}, for both additive TMCMC and RWMH, the optimal scaling parameter 
is very close to 2.4 but the optimal acceptance rate of additive TMCMC is around 0.439, which is 
significantly higher than 0.234, the optimal acceptance rate of the RWMH approach 
(\ctn{Roberts96}, \ctn{Roberts1997}). Moreover, the results of simulation experiments
reported in \ctn{Dey13} demonstrate superior performance of additive TMCMC 
over RWMH in terms of higher acceptance rates irrespective of dimensions and optimal or sub-optimal
scale choices.

Referring to Table \ref{table:table1}, since the K-S statistic is computed after burn in, 
the differences between additive TMCMC and RWMH in terms 
of the K-S distance do not appear to be pronounced in low dimensions, but for dimensions 100 and 200, 
the differences seem to be more pronounced, indicating somewhat better performance of TMCMC.


Figure \ref{fig:figex1_30} displays the
K-S distances corresponding to RWMH and additive TMCMC when the target is a 30-dimensional normal distribution.
It is clearly seen that additive TMCMC converges much faster than RWMH. In fact, the figures indicate that
additive TMCMC takes around just 150 iterations to converge when the scale is optimal, and around 200
iterations when the scale is sub-optimal. On the other hand, in the case of optimal scaling, RWMH takes
around 300 iterations to converge and for sub-optimal scaling it takes around 450 iterations.
The mixing issue is quite pronounced in higher dimensions. Indeed, as seen
in Figure \ref{fig:figex1}, 
the K-S distances associated with additive TMCMC are almost uniformly smaller than those associated with
RWMH, particularly when the scaling is sub-optimal. In fact, in the sub-optimal case it seems that
additive TMCMC has converged within the first 2,000 iterations, whereas RWMH does not seem to show any sign of convergence
even after 20,000 iterations (the K-S distances are significantly larger than those of additive TMCMC).

\begin{table}[h]
\centering
\caption{ 
Performance evaluation of RWMH and additive TMCMC (Add-TMCMC) chains for different dimensions. 
}
\vspace{0.3 in}
\begin{tabular}{|p{0.4in}|c|c|c|c|c|}
\hline
\multirow{2}{*}{Dim.} & \multirow{2}{*}{\backslashbox{Scaling}{Criteria}} & \multicolumn{2}{|c|}{$\begin{array}{c} Acceptance \\ rate ($\%$) \end{array} $}  & \multicolumn{2}{|c|}{\emph{Avg. K-S dist.}} \\ \cline{3-6}
& & RWMH & Add-TMCMC & RWMH & Add-TMCMC\\ \hline

\multirow{3}{*}{2} & 2.4 & 34.9 & 44.6  & 0.1651 & 0.1657 \\ 
 & 6 & 18.66 & 29.15 & 0.1659 & 0.1655 \\ 
\multirow{3}{*}{5} & 2.4 (opt) & 28.6 & 44.12  & 0.1659 & 0.1664 \\ 
& 6 & 2.77 & 20.20  & 0.1693 & 0.1674 \\ 
\multirow{3}{*}{10} & 2.4 (opt) & 26.05 & 44.18  & 0.1652 & 0.1677 \\ 
& 6 & 1.19 & 20.34 & 0.1784 & 0.1688 \\ 
\multirow{2}{*}{100} & 2.4 (opt) & 23.3 & 44.1  & 0.1594 & 0.1571 \\ 
& 6 & 0.32 & 20.6  & 0.1687 & 0.1645 \\ 
\multirow{2}{*}{200} & 2.4 (opt) & 23.4 & 44.2  & 0.1596 & 0.1435 \\ 
& 6 & 0.38 & 20.7   &  0.1622 & 0.1484 \\ \hline
\end{tabular} 
\label{table:table1}
\end{table}

\begin{figure}
\centering
\subfigure [RWMH vs Add-TMCMC (scale = 2.4)]{ \label{fig:K-S1_30}
\includegraphics[trim= 0cm 9cm 0cm 9cm, clip=true, width=12cm,height=6cm]{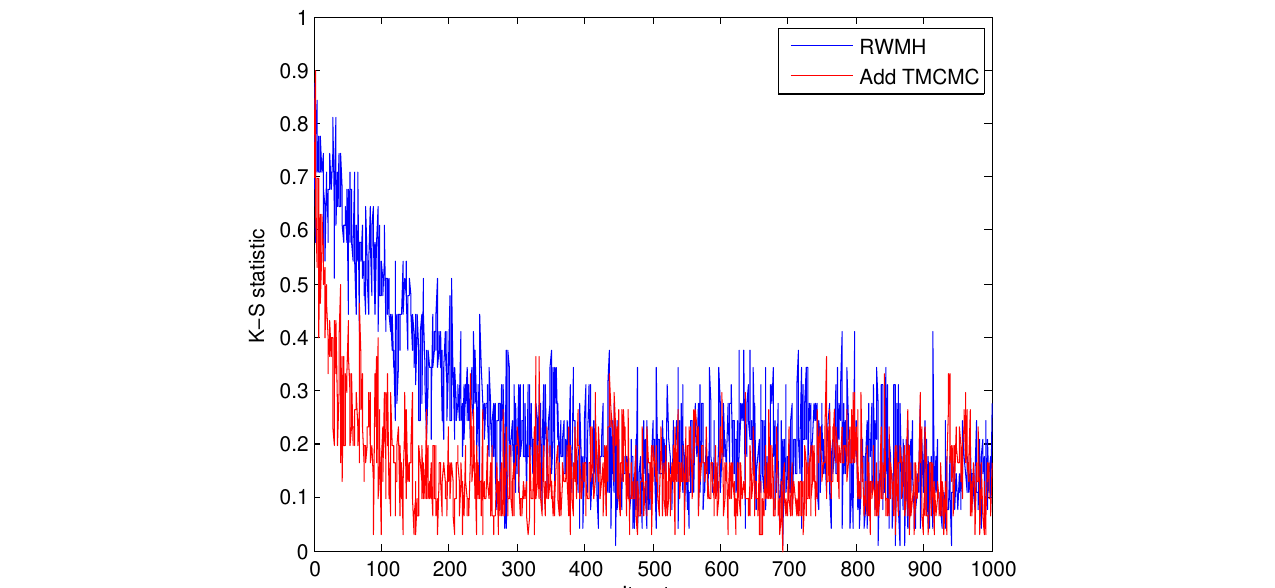}}\\
\subfigure [RWMH vs Add-TMCMC (scale = 6)]{ \label{fig:K-S2_30}
\includegraphics[trim= 0cm 9cm 0cm 9cm, clip=true, width=12cm,height=6cm]{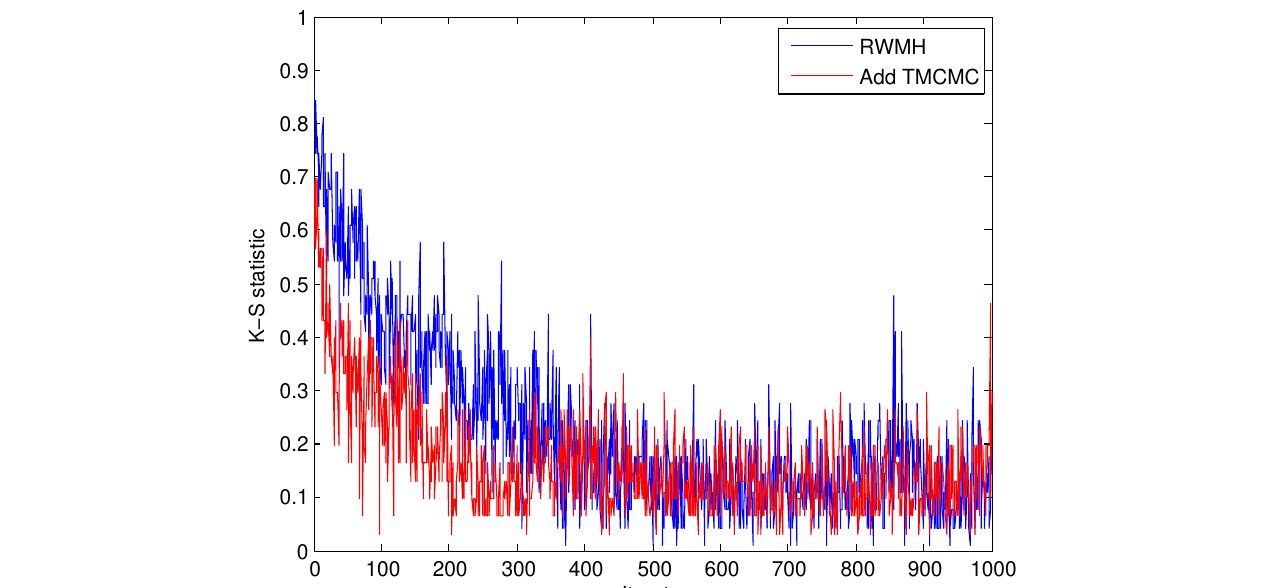}}
\caption{Comparisons between K-S distances associated with additive TMCMC and RWMH for dimension = 30.}
\label{fig:figex1_30}
\end{figure}

\begin{figure}
\centering
\subfigure [RWMH vs Add-TMCMC (scale = 2.4)]{ \label{fig:K-S1}
\includegraphics[trim= 0cm 9cm 0cm 9cm, clip=true, width=12cm,height=6cm]{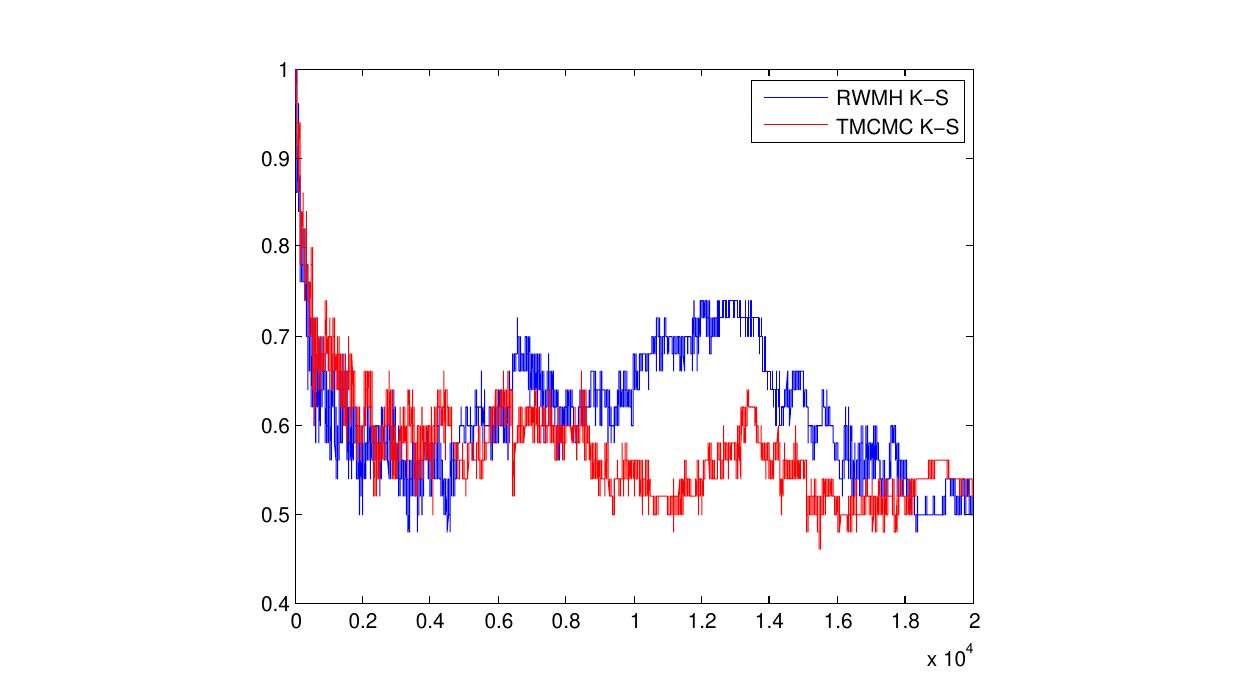}}\\
\subfigure [RWMH vs Add-TMCMC (scale = 6)]{ \label{fig:K-S2}
\includegraphics[trim= 0cm 9cm 0cm 9cm, clip=true, width=12cm,height=6cm]{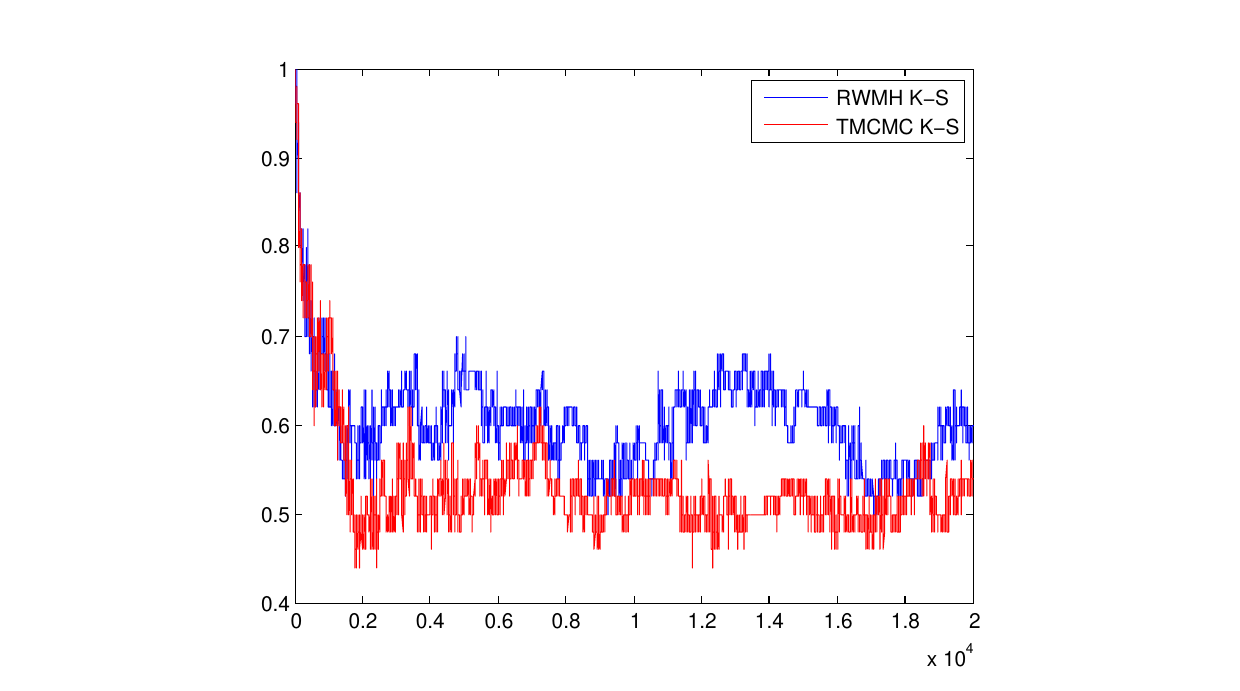}}
\caption{Comparisons between K-S distances associated with additive TMCMC and RWMH for dimension = 100.}
\label{fig:figex1}
\end{figure}

\subsection{Performance comparison with ``essentially fully" multiplicative TMCMC}
\label{subsec:sim2}

In this case, we choose our neighborhood $ \mathbb{N}_{0}$ in (\ref{eq:mc1}) to be $[-0.1,0.1]^{d}$, 
where $d$ is the dimension of the space. The method of  estimation of the mixing probability 
$\pi(\mathbb {N}_{0})$ is discussed in detail in Appendix \ref{sec:implement_P}; however, in our simulation example,
this probability is simply $\prod_{i=1}^d\left[2\Phi(0.1)-1\right]$, $\Phi$ denoting the cumulative distribution function
of $N(0,1)$.
This chain is basically as close as we can get to a fully multiplicative TMCMC chain on 
$\mathbb{R}^{d}$ ensuring that the geometric drift condition holds. 

For our experiment, the scale of the additive TMCMC part of the mixture remains the same
as before, that is, we consider the optimal scale 2.4, and the sub-optimal scale 6. 
We assume the proposal density $g^{(2)}$ is defined on a set of the form $[-l_{2},-l_{1}] \cup [l_{1},l_{2}] $ such that
the interval $[l_{1},l_{2}]$ is a proper subset of $[0,1]$ minus small neighborhoods of 0 and 1. 
The distribution of the step $\epsilon$ is taken to be a mixture normal random variable such that 
$\epsilon\sim \frac{1}{2} N(\mu,\sigma^{2})I_{[l_{1},l_{2}]} +  \frac{1}{2} N(-\mu,\sigma^{2})I_{[-l_{2},-l_{1}]}$ 
with mean 
$\mu \in [l_{1},l_{2}]$ and variance $\sigma^2$.  
In our simulation experiment we assumed $l_{1} =0.05$ and $l_{2}=0.95$ and optimal performance 
was observed when the mean $\mu$ is in the range $0.35$ to $0.45$, which is around halfway from 
both $l_{1}$ and $l_{2}$. 

Table \ref{table:table2} provides a comparison of the performances between
RWMH and essentially full multiplicative TMCMC with respect to acceptance rate
and average K-S distance.
Note that unlike additive TMCMC, we find here that 
the acceptance rate for essentially fully multiplicative TMCMC is poor compared to RWMH. Moreover, 
the K-S distances also suggest that 
RWMH is closer to the target distribution 
compared to essentially fully multiplicative TMCMC for most of the iterations considered. However, on inspection it is observed that the 
K-S distance initially drops faster for the latter compared to RWMH; see Figure \ref{fig:figex2}. 
As shown by \ctn{Dutta12}, multiplicative TMCMC in one-dimensional situations are appropriate
for certain heavy-tailed distributions. But in our current simulation study associated
with high dimensions and a thin-tailed density, (essentially fully) multiplicative TMCMC
did not seem to perform satisfactorily, although theoretically it is geometrically ergodic.

\begin{table}[h]
\centering
\caption{Performance evaluation of RWMH and essentially fully multiplicative TMCMC (Mult-TMCMC) chains 
for different dimensions.} 
\vspace{0.3 in}
\begin{tabular}{|p{0.4in}|c|c|c|c|c|}
\hline
\multirow{2}{*}{Dim} & \multirow{2}{*}{\backslashbox{Scaling}{Criteria}} & \multicolumn{2}{|c|}{$\begin{array}{c} Acceptance \\ rate ($\%$) \end{array} $}  & \multicolumn{2}{|c|}{\emph{Avg. K-S dist.}} \\ \cline{3-6}
& & RWMH & Mult-TMCMC & RWMH & Mult-TMCMC\\ \hline

\multirow{3}{*}{10} & 2.4 (opt) & 26.05 & 16.86 & 0.1652 & 0.2097 \\ 
& 6 & 1.19  & 6.32 & 0.1784 & 0.2133\\ 
\multirow{2}{*}{30} & 2.4 (opt) & 23.5 & 15.74 & 0.1637 & 0.1828 \\ 
& 6 & 1.16 & 6.77 & 0.1711 & 0.1924\\ 
\multirow{2}{*}{100} & 2.4 (opt) & 23.4 & 15.46 & 0.1596 & 0.1812\\ 
& 6 & 0.38 & 2.67 & 0.1622 & 0.1866 \\ \hline
\end{tabular} 
\label{table:table2}
\end{table}

\begin{figure}
\centering
\subfigure [RWMH vs Mult-TMCMC (RWMH scale = 2.4)]{ \label{fig:K-S1_mix}
\includegraphics[trim= 0cm 9cm 0cm 9cm, clip=true, width=12cm,height=6cm]{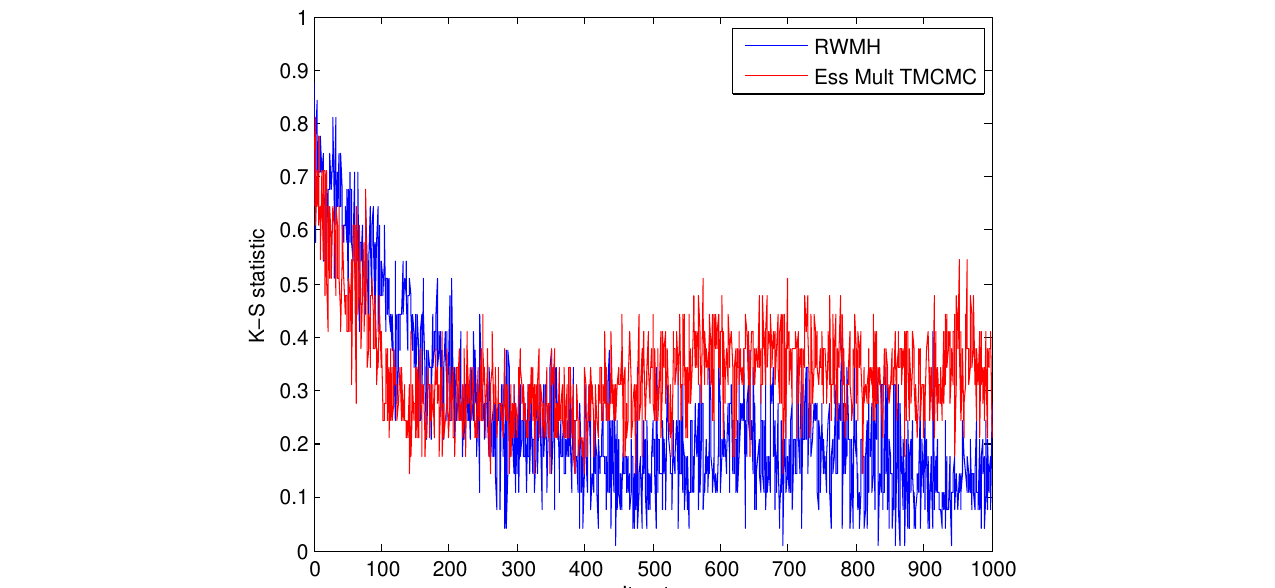}}\\
\subfigure [ RWMH vs Mult-TMCMC (RWMH scale = 6)]{ \label{fig:K-S2_mix}
\includegraphics[trim= 0cm 9cm 0cm 9cm, clip=true, width=12cm,height=6cm]{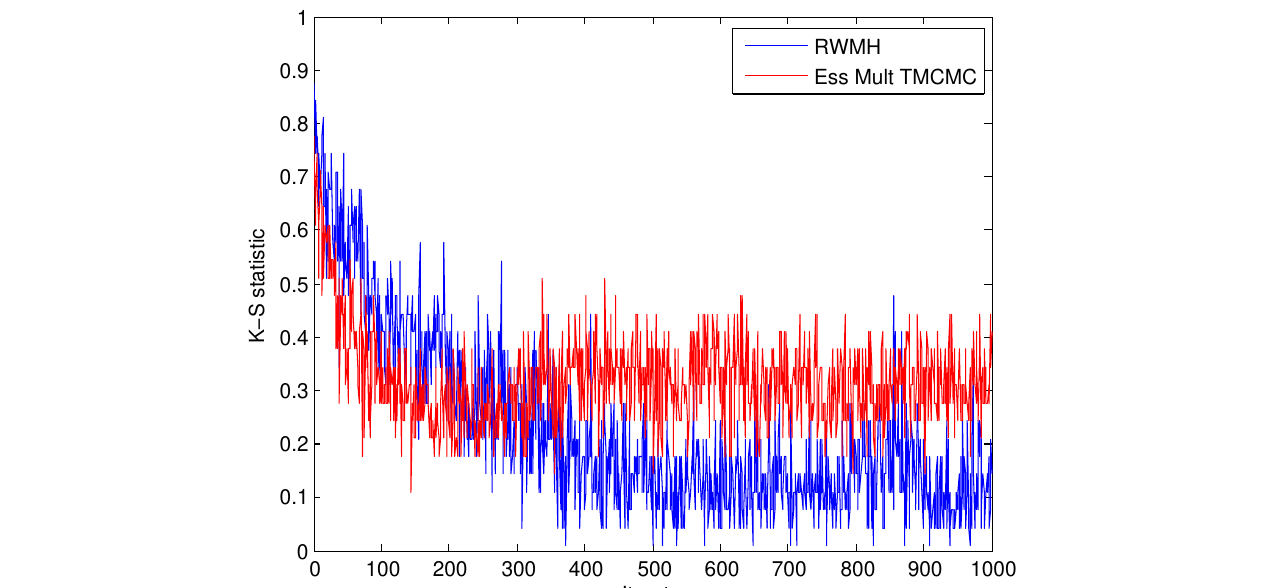}}
\caption{Comparisons between K-S distances associated with essentially Mult-TMCMC and RWMH for dimension = 30.}
\label{fig:figex2}
\end{figure}

\subsection{Performance comparison with the traditional mixture of additive
and multiplicative TMCMC}
\label{subsec:sim3}

Now we consider the traditional mixture chain of the form (\ref{eq:mc2}) 
with both additive and multiplicative moves. We assume that 
with probability $\frac{1}{2}$, we move by additive TMCMC and with probability $\frac{1}{2}$ 
by multiplicative TMCMC. The proposal mechanisms for additive and multiplicative TMCMC
remain the same as in Section \ref{subsec:sim2} associated with essentially fully multiplicative TMCMC.
%
%

\begin{table}[h]
\centering
\caption{Performance evaluation of RWMH and traditional Mixture TMCMC (Mix-TMCMC) 
chains for different dimensions. For the multiplicative TMCMC part, we consider
$\mu=0.35$ and $\sigma=1$.} 
\vspace{0.2 in}
\begin{tabular}{|p{0.4in}|c|c|c|c|c|}
\hline

\multirow{2}{*}{Dim} & \multirow{2}{*}{\backslashbox{Scaling}{Criteria}} & \multicolumn{2}{|c|}{$\begin{array}{c} Acceptance \\ rate ($\%$) \end{array} $}  & \multicolumn{2}{|c|}{\emph{Avg. K-S dist.}} \\ \cline{3-6}
& & RWMH & Mix TMCMC & RWMH & Mix TMCMC\\ \hline

\multirow{3}{*}{10} & 2.4 (opt) & 26.05 & 29.43 & 0.1652 & 0.1455 \\ 
& 6 & 1.19  & 11.26 & 0.1784 & 0.1576\\ 
\multirow{2}{*}{30} & 2.4 (opt) & 23.5 & 29.32 & 0.1637 & 0.1428 \\ 
& 6 & 1.16 & 16.33 & 0.1711 & 0.1529 \\ 
\multirow{2}{*}{100} & 2.4 (opt) & 23.4 & 29. 29 & 0.1596 & 0.1398\\ 
& 6 & 0.38 & 10.67 & 0.1622 & 0.1412 \\ \hline
\end{tabular} 
\label{table:table3}
\end{table}

Table \ref{table:table3} provides a comparison of the performances between
RWMH and our traditional mixture TMCMC kernel with respect to acceptance rate
and average K-S distance.
Note that although the acceptance rate for the mixture kernel in our experiments is around 0.293 for 
$\mu=0.35$ and $\sigma=1$ which is quite low compared to additive TMCMC, it is of course still
significantly higher than the optimal acceptance rate 0.234 for standard RWMH. To avoid any possible
confusion it is important to emphasize that this 
acceptance rate for mixture kernel is not analytically derived as the optimal acceptance rate, 
rather it is the rate corresponding to the optimal value of $\mu$, numerically obtained by varying over 
$\mu$ keeping $\sigma$ fixed at 1 and computing the K-S distance and then choosing that $\mu$ for which 
the empirical average K-S distance was found to be the minimum. However, the average K-S distance for 
the mixture kernel is smaller compared to both RWMH and additive TMCMC, implying faster convergence.
This improvement acts as a trade off for the low acceptance rate of the mixture kernel.

Figure \ref{fig:figex3} displays plots of K-S distances associated with RWMH and mixture TMCMC
in the case of a 30-dimensional normal target distribution. 
The plot shows much faster convergence of mixture TMCMC compared to RWMH.
From Figures \ref{fig:figex1_30} and \ref{fig:figex2}, it is also clear that mixture TMCMC converges
faster than even additive TMCMC and essentially fully multiplicative TMCMC.
In fact, mixture TMCMC seems to converge in just about 100 iterations.
This faster convergence may be attributed to the fact that the multiplicative steps allow the
chain to take longer jumps and hence explore the space faster, while on the other hand the additive
steps keep the acceptance rate high and enables the chain to move briskly. So, in other words,
mixture TMCMC shares the positives of both the additive and the multiplicative chains and is found
to outperform each of them individually.

\begin{figure}
\centering
\subfigure [RWMH vs Mix-TMCMC (scale = 2.4)]{ \label{fig:K-S1_mult}
\includegraphics[trim= 0cm 9cm 0cm 9cm, clip=true, width=12cm,height=6cm]{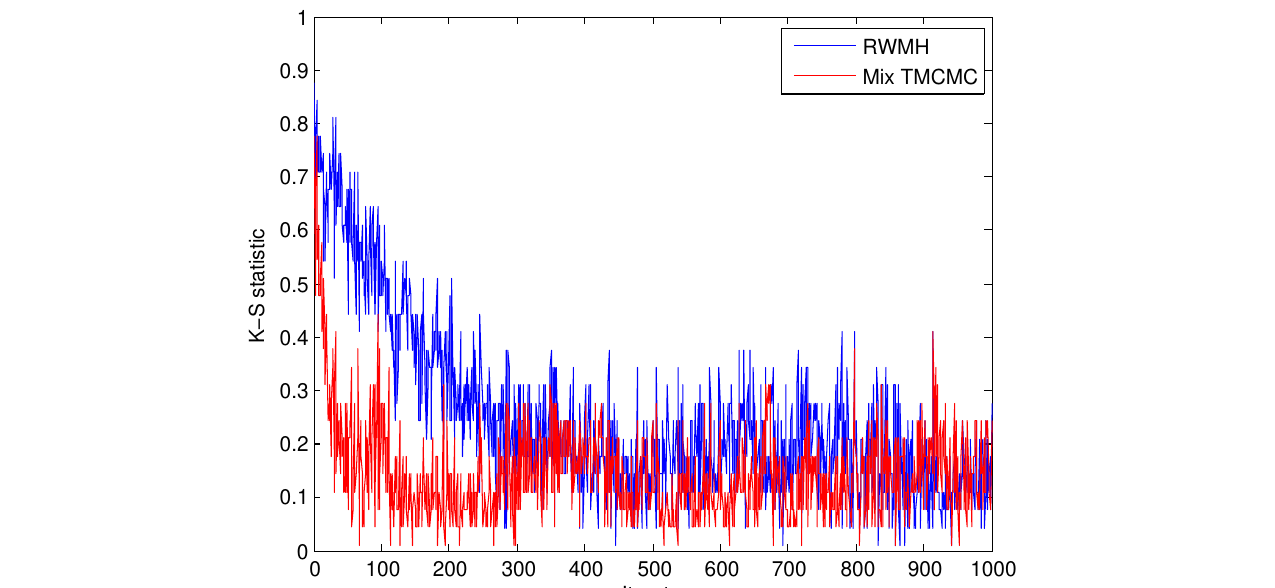}}\\
\subfigure [ RWMH vs Mix-TMCMC (scale = 6)]{ \label{fig:K-S2_mult}
\includegraphics[trim= 0cm 9cm 0cm 9cm, clip=true, width=12cm,height=6cm]{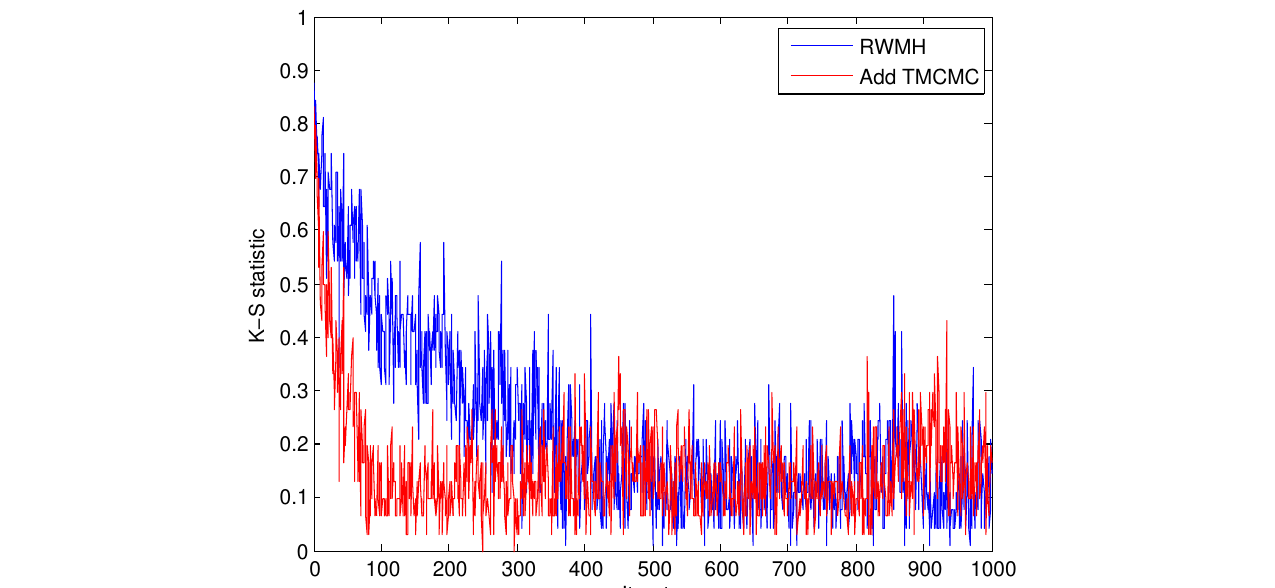}}
\caption{Comparisons between K-S distances associated with Mix-TMCMC and RWMH for dimension = 30.}
\label{fig:figex3}
\end{figure}

\section{Extensions of our geometric ergodicity results to target distributions that are not super-exponential}
\label{sec:non_super_exponential}

So far we have proved geometric ergodicity of additive and multiplicative TMCMC when the target density
$\pi$ is super-exponential. It is natural to ask if our results go through when the 
super-exponential assumption does not hold. 

\subsection{Target density as mixture}
\label{subsec:target_mixture}

Note that, if the target density $\pi$ can be represented as a mixture of the form
\begin{align}
\pi(x)&=\int f_1(x|\theta)f_2(\theta)d\theta,
\label{eq:non_super_exp_pi}
\end{align}
where $f_1(\cdot\vert\theta)$ is super-exponential for all $\theta$ and $f_2$ admits direct (exact) simulation,
then the Markov transition kernel 
\begin{align}
P(x,A)&=\int P(x,A|\theta)f_2(\theta)d\theta,
\label{eq:non_super_exp_P}
\end{align}
where $P(x,A|\theta)$ denotes either additive or multiplicative TMCMC based Markov transtition kernel conditional
on $\theta$, is geometrically ergodic for the target density $\pi$. The proof is essentially the same
as the proof presented in Appendix \ref{sec:P_geo} that the finite mixture Markov 
transition kernel (\ref{eq:mc1}) is geometrically ergodic for the
mixture representation (\ref{eq:mixture}); only the summations need to be replaced with integrals.
The kernel (\ref{eq:non_super_exp_P}) will be implemented by first directly simulating 
$\theta\sim f_2$; then given $\theta$, the transition mechanism $P(x,\cdot|\theta)$ has to
be implemented.

Two popular examples of multivariate densities admitting mixture forms are multivariate $t$
and multivariate Cauchy, both of which can be represented as univariate $Gamma$-distributed 
mixtures of multivariate normal distributions.

\subsection{Change-of-variable idea}
\label{subsec:change_of_variable}

The general situation has been addressed by \ctn{Johnson12} using a change-of-variable idea.
If $\pi_{\beta}$ is the multivariate target density of interest, then one can first simulate
a Markov chain having invariant density
\begin{align}
\pi_{\gamma}(\gamma)=\pi_{\beta}\left(h(\gamma)\right)\left|\mbox{det}~\nabla h(\gamma)\right|,
\label{eq:transformed_target}
\end{align}
where $h$ is a diffeomorphism. If $\pi_{\beta}$ is the density of the random vector $\beta$,
then $\pi_{\gamma}$ is the density of the random vector $\gamma=h^{-1}(\beta)$.
\ctn{Johnson12} obtain conditions on $h$ which make $\pi_{\gamma}$ super-exponentially light.
In more details, \ctn{Johnson12} define the following isotropic function $h:\mathbb R^d\mapsto\mathbb R^d$:
\begin{equation}
h(\gamma)=\left\{\begin{array}{cc}f(\|\gamma\|)\frac{\gamma}{\|\gamma\|}, & \gamma\neq \bzero\\
0 & \gamma=\bzero
\end{array}\right.
\label{eq:isotropy}
\end{equation}
for some function $f: (0,\infty)\mapsto (0,\infty)$.
\ctn{Johnson12} confine attention to isotropic diffeomorphisms, that is, to functions $h$ where 
both $h$ and $h^{-1}$ are continuously differentiable, with the further property that 
$\mbox{det}~\nabla h$ and  $\mbox{det}~\nabla h^{-1}$ are also continuously 
differentiable. In particular, they define $f:[0,\infty)\mapsto [0,\infty)$ as follows:
\begin{equation}
f(x)=\left\{\begin{array}{cc}x, & x<R\\
x+(x-R)^p, & x\geq R,
\end{array}\right.
\label{eq:diffeo}
\end{equation}
where $R\geq 0$ and $p>2$.

Theorem 2 of \ctn{Johnson12} shows that if $\pi_{\beta}$ is an exponentially light density
($\pi_{\beta}$ is exponentially light if 
$\underset{\|x\|\rightarrow\infty}{\lim\sup}~n(x)'\nabla\log\pi_{\beta}(x)<0$)
on $\mathbb R^d$, and
$h$ is defined by (\ref{eq:isotropy}) and (\ref{eq:diffeo}) then the transformed density $\pi_{\gamma}$
given by (\ref{eq:transformed_target}) is super-exponentially light.
Thus, this transformation transforms an exponential density to a super-exponential density.
Theorem 3 of \ctn{Johnson12} provided conditions under which sub-exponential densities
can be converted to exponential densities
($\pi_{\beta}$ is sub-exponentially light if 
$\underset{\|x\|\rightarrow\infty}{\lim\sup}~n(x)'\nabla\log\pi_{\beta}(x)=0$).
In particular, if $\pi_{\beta}$ is a sub-exponentially light density on $\mathbb R^d$, there exist
$\alpha>d$, $R<\infty$ such that
\[
\left(\frac{\beta}{\|\beta\|}\right)'\nabla\log\pi_{\beta}(\beta)\leq -\frac{\alpha}{\|\beta\|},\hspace{2mm}
\|\beta\|>R,
\]
then $h$ defined as (\ref{eq:isotropy}) with $f:[0,\infty)\mapsto [0,\infty)$ given by 
\begin{equation}
f(x)=\left\{\begin{array}{cc}e^{bx}-\frac{e}{3}, & x>\frac{1}{b}\\
x^3\frac{b^3e}{6}+x\frac{be}{2}, & x\leq \frac{1}{b},
\end{array}\right.
\label{eq:diffeo2}
\end{equation}
where $b>0$, ensures that the transformed density $\pi_{\gamma}$ of the form (\ref{eq:transformed_target}),
is super-exponentially light.

In other words, starting from a sub-exponential target density, one can achieve a super-exponential
density by first converting it to exponential using the transformation $h$ (given by (\ref{eq:isotropy})) 
with $f$ given by (\ref{eq:diffeo2}).
Then one can convert the obtained exponential density to super-exponential using the transformation $h$
and $f$ given by (\ref{eq:diffeo}).
As an example \ctn{Johnson12} show that the multivariate $t$ distribution of the form
\begin{equation}
\pi_{\beta}(t)=\frac{\Gamma\left(\frac{\nu+d}{2}\right)}
{\Gamma\left(\frac{\nu}{2}\right)\left(\nu\pi\right)^{d/2}\mbox{det}\left(\Sigma\right)}
\left[1+\frac{1}{\nu}\left(t-\mu\right)'\Sigma^{-1}\left(t-\mu\right)\right]^{-\left(\frac{\nu+d}{2}\right)},
\label{eq:multivariate_t}
\end{equation}
is sub-exponential. This can be converted to super-exponential by applying the aforementioned 
transformations in succession.

Hence, we can run our geometric ergodic TMCMC algorithms for the super-exponentially light $\pi_{\gamma}$, and
then transform the realizations $\{\gamma^{(k)};k=1,2,\ldots\}$ to 
$\{h(\gamma^{(k)});k=1,2,\ldots\}$. Then it easily follows (see Appendix A of \ctn{Johnson12}) 
that the transformed chain is also geometrically ergodic.

\subsubsection{Simulation studies comparing RWMH and additive TMCMC in the context of diffeomorphism based
simulation from Cauchy and $t$-distributions}
\label{subsubsec:diffeo_simstudy}

We now compare diffeomorphism-based RWMH and additive TMCMC algorithms with respect to K-S distance,
when the target distributions are $d$-dimensional Cauchy and $t$-distributions, the latter having
$\nu$ degrees of freedom. We assume that the location vectors and scale matrices are $\bmu=\bzero_d$ and 
$\bSigma=diag\{0.7\bone_d'\}+0.3\bone_d\bone_d'$, respectively, where 
$\bzero_d$ is a $d$-dimensional vector with all elements $0$, and $\bone_d$ is a $d$-dimensional
vector with each component 1. We choose $d=50$ for the illustrations.
For both RWMH and additive TMCMC we consider the scale of the proposal distribution to be 2.4.

Figure \ref{fig:diffeo_compare_50} compares the performances
of diffeomorphism based RWMH and diffeomorphism based Add-TMCMC with respect to the K-S distance when the 
target distributions are 50-variate Cauchy and 50-variate $t$ respectively,
with the aforementioned location vector and scale matrix. In both the cases Add-TMCMC quite significantly
outperforms RWMH.
Hence, the results are highly encouraging -- additive TMCMC significantly outperforms RWMH 
when the high-dimensional target density is not super-exponential, and is highly dependent. 
Since a mixture of additive and multiplicative TMCMC is demonstrably more efficient than additive TMCMC,
it is clear that the mixture will beat RWMH by a large margin. We have also carried out extensive simulation studies
comparing RWMH and Add-TMCMC when the target distributions are $50$-dimensional $i.i.d.$ Cauchy and $50$-dimensional
$i.i.d.$ $t$ with $10$ degrees of freedom, that is, with $\bmu=\bzero_d$ and $\bSigma=\bI_d$, the latter
standing for the identity matrix or order $d$, with $d=50$. We do not present
the results here due to lack of space, but Add-TMCMC outperformed RWMH at least as significantly
as in this reported dependent set-up.

As an aside, we also compare the gains of the diffeomorphism based approach over the usual, direct application
of RWMH and TMCMC to the target densities. 
Figure \ref{fig:RWMH_w_wo_diffeo_compare_50} compares the performances of diffeomorphism based RWMH and direct RWMH
when the targets are the above-defined 50-dimensional multivariate Cauchy and $t$ (with 10 degrees of freedom).
Likewise, Figure \ref{fig:TMCMC_w_wo_diffeo_compare_50} compares the performances of diffeomorphism based Add-TMCMC
and direct Add-TMCMC with the above 50-dimensional target densities. As is evident from the figures, the
diffeomorphism based approaches quite significantly outperform the direct approaches.

\begin{figure}
\centering
\subfigure [Multivariate Cauchy Target: Diffeomorphism based RWMH vs diffeomorphism based 
Add-TMCMC (scale = 2.4)]{ \label{fig:diffeo1:KS_50}
\includegraphics[width=10cm,height=6cm]
{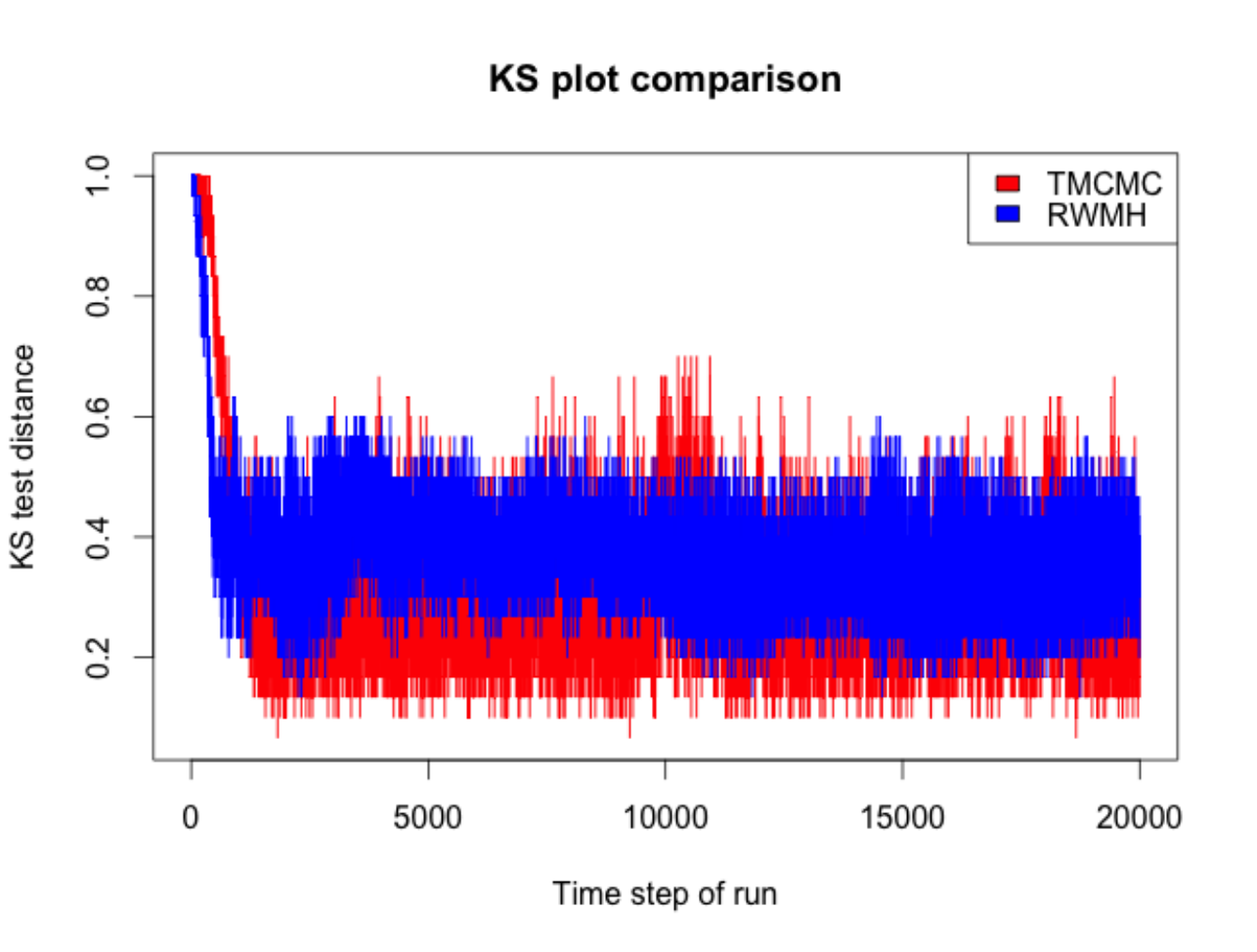}}\\
\subfigure [Multivariate-$t$ Target: Diffeomorphism based RWMH vs diffeomorphism based 
Add-TMCMC (scale = 2.4)]{ \label{fig:diffeo2:KS_50}
\includegraphics[width=10cm,height=6cm]
 {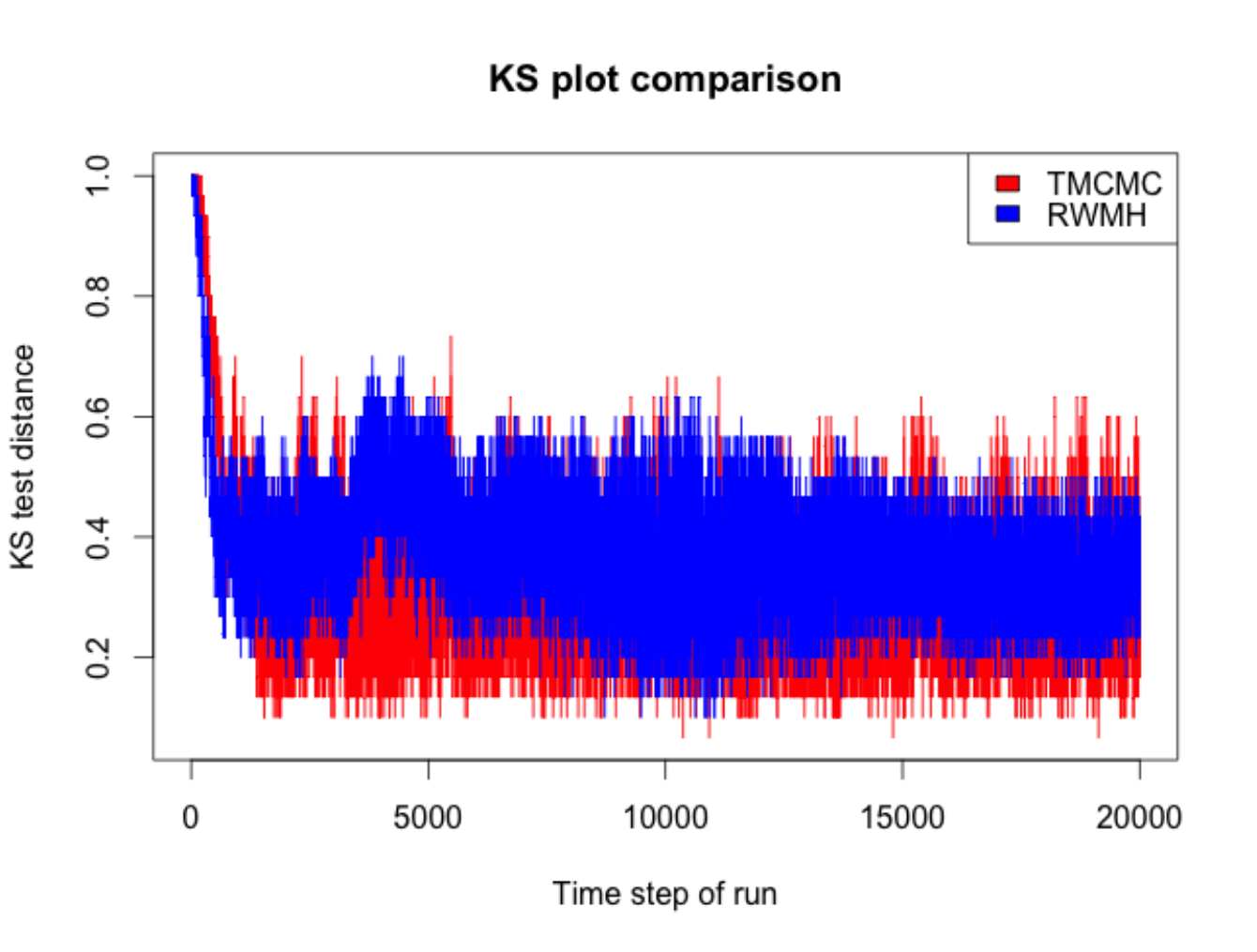}}
\caption{50-dimensional Cauchy and multivariate $t$ (10 degrees of freedom) targets: 
Comparisons between K-S distances associated with diffeomorphism based additive TMCMC and diffeomorphism based RWMH.}
\label{fig:diffeo_compare_50}
\end{figure}

\begin{figure}
\centering
\subfigure [Multivariate Cauchy Target: RWMH with and without diffeomorphism (scale = 2.4)]
{ \label{fig:RWMH_cauchy_w_and_wo_diffeo:KS_50}
\includegraphics[width=10cm,height=6cm]{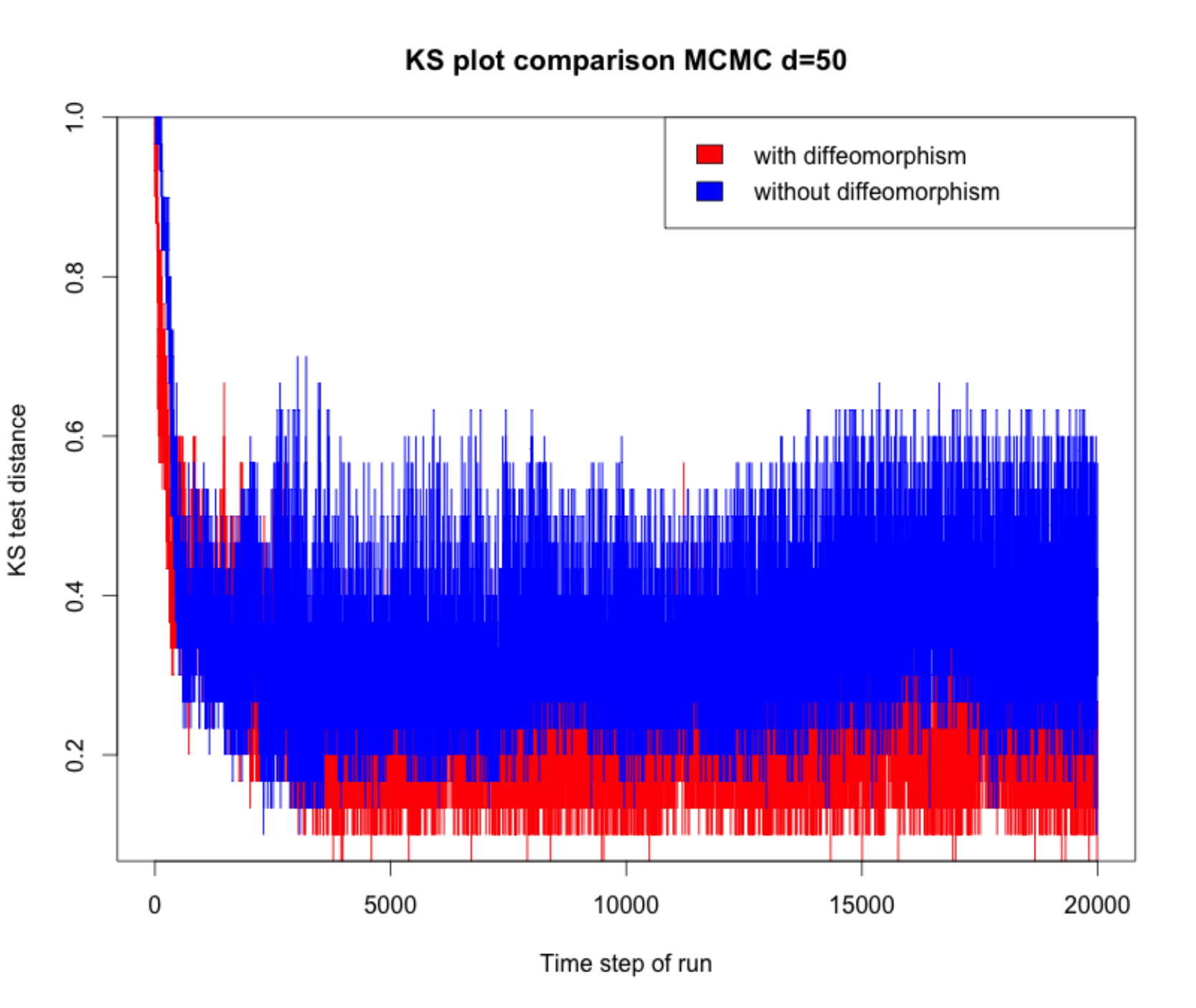}}\\
\subfigure [Multivariate-$t$ Target: RWMH with and without diffeomorphism (scale = 2.4)]
{ \label{fig:RWMH_t_w_and_wo_diffeo:KS_50}
\includegraphics[width=10cm,height=6cm]{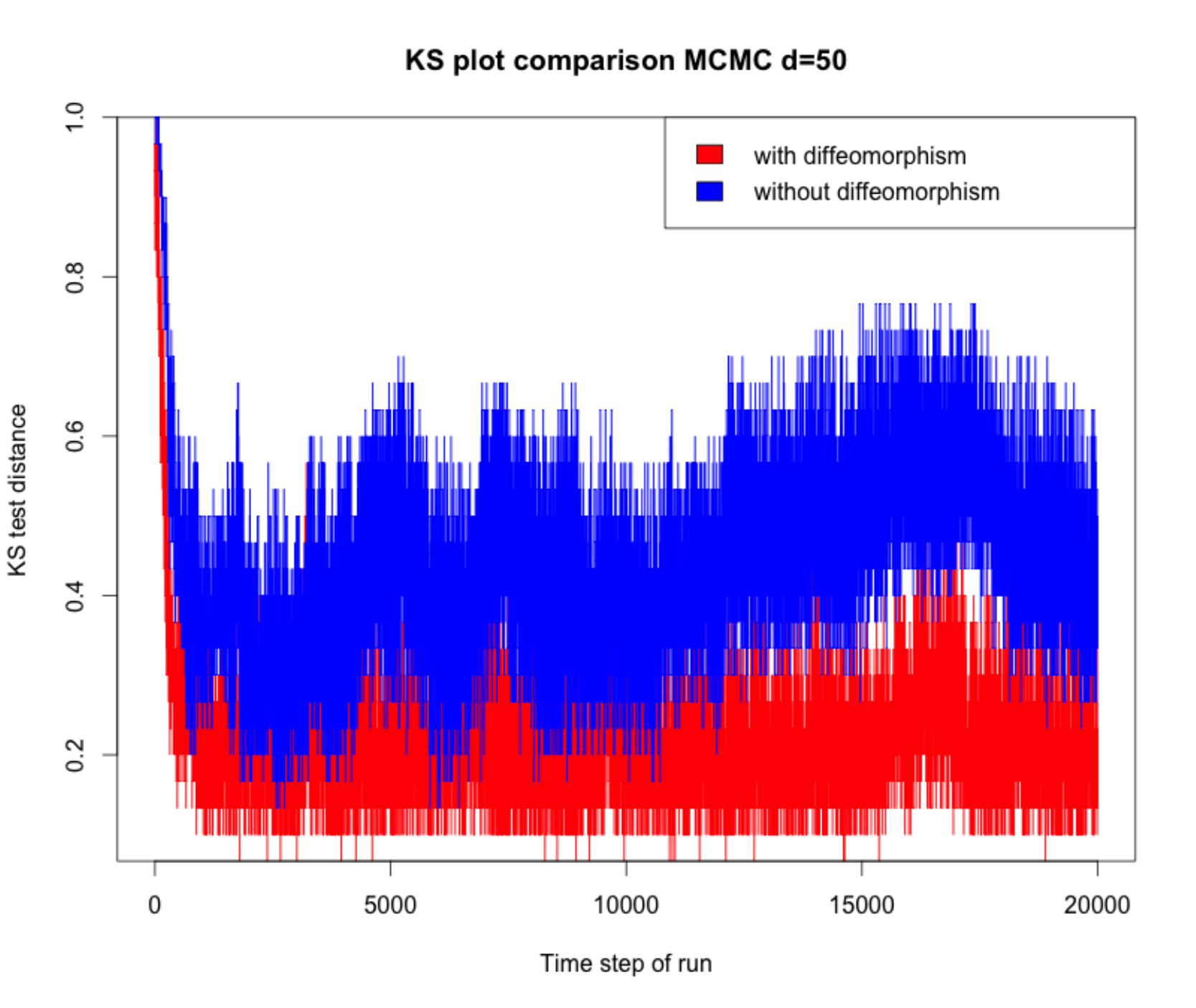}}
\caption{50-dimensional Cauchy and multivariate $t$ (10 degrees of freedom) targets: 
Comparisons between K-S distances associated with RWMH implemented with and without diffeomorphism.}
\label{fig:RWMH_w_wo_diffeo_compare_50}
\end{figure}

\begin{figure}
\centering
\subfigure [Multivariate Cauchy Target: Add-TMCMC with and without diffeomorphism (scale = 2.4)]
{ \label{fig:TMCMC_cauchy_w_and_wo_diffeo:KS_50}
\includegraphics[width=10cm,height=6cm]{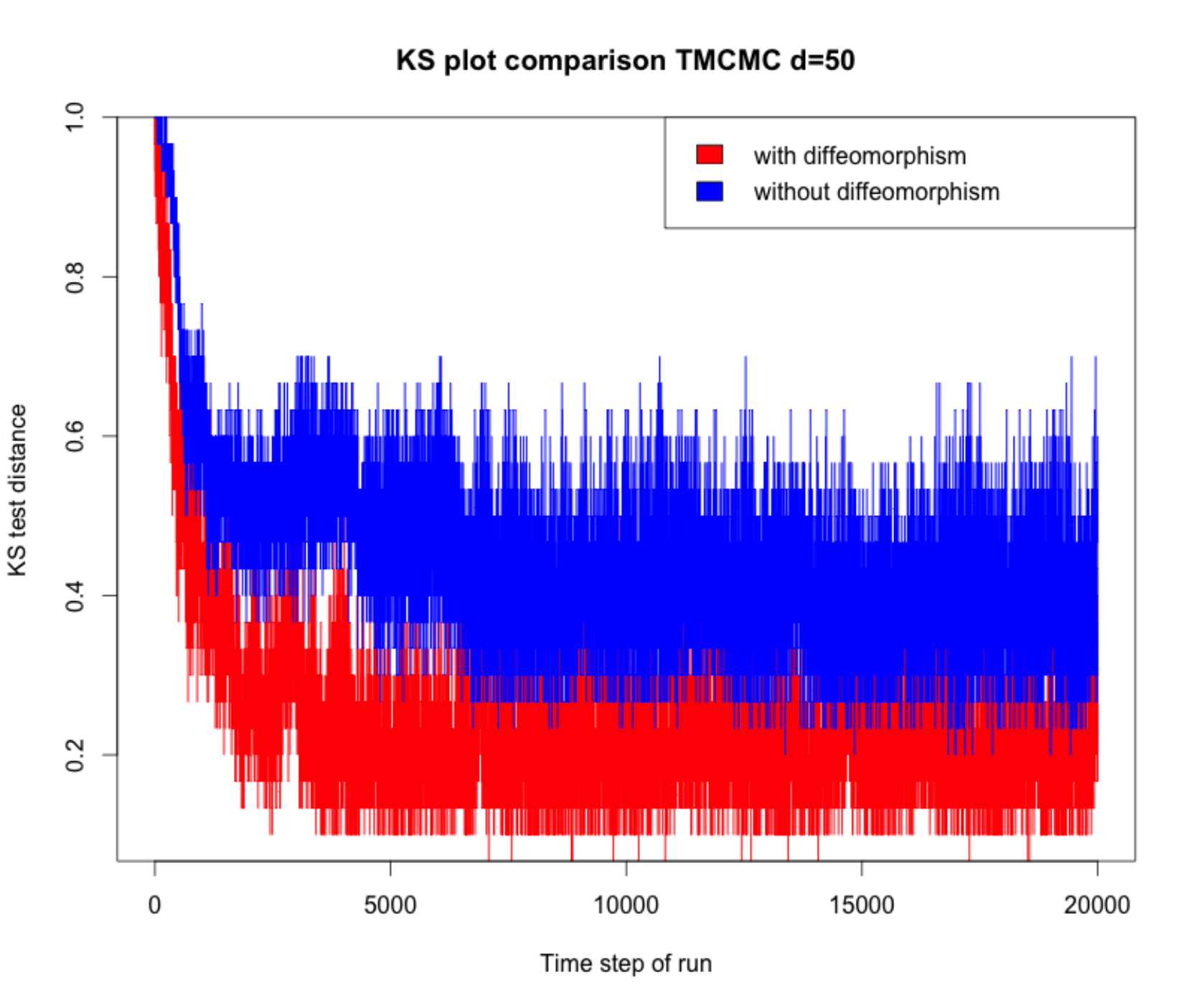}}\\
\subfigure [Multivariate-$t$ Target: Add-TMCMC with and without diffeomorphism (scale = 2.4)]
{ \label{fig:TMCMC_t_w_and_wo_diffeo:KS_50}
\includegraphics[width=10cm,height=6cm]{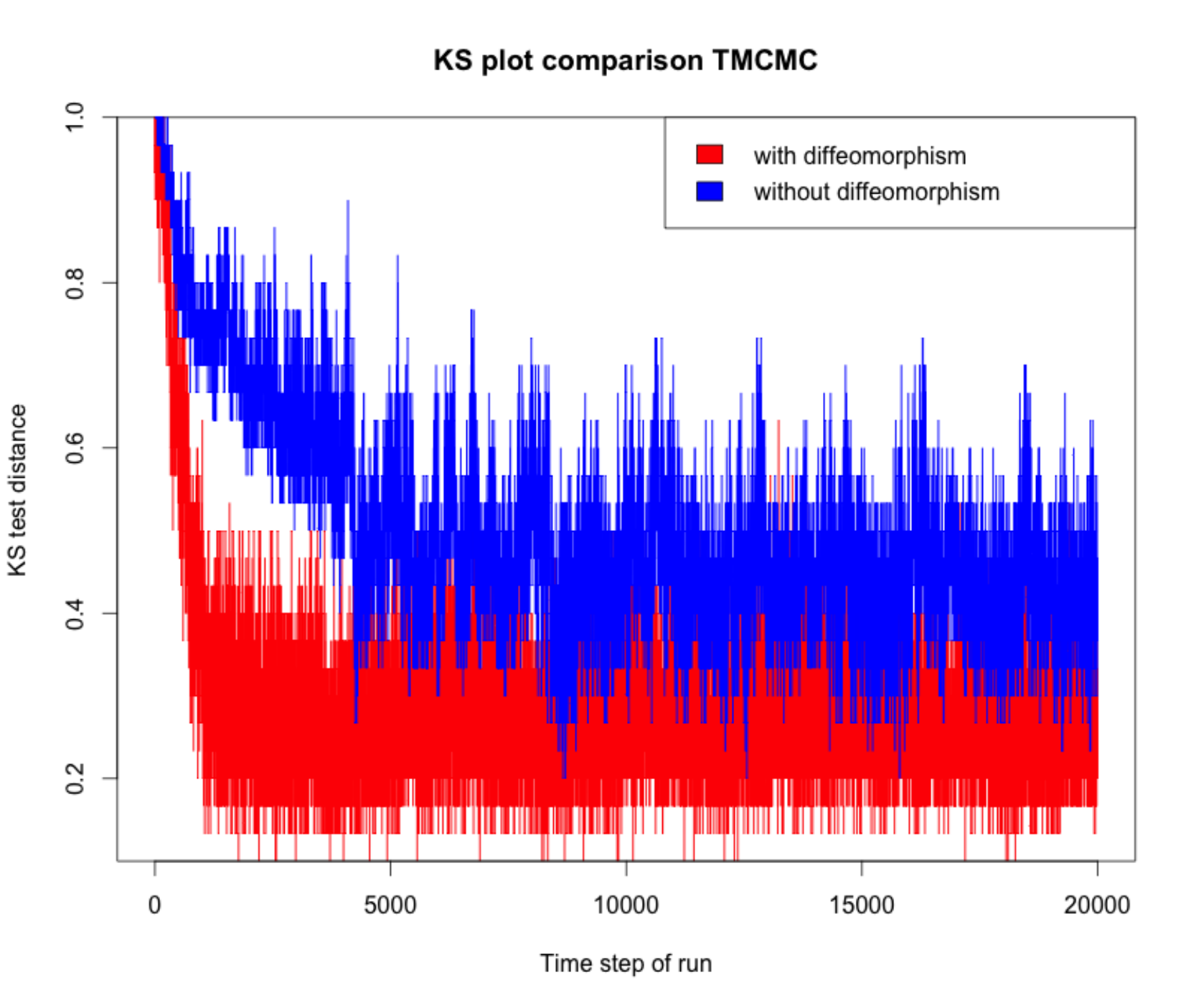}}
\caption{50-dimensional Cauchy and multivariate $t$ (10 degrees of freedom) targets: 
Comparisons between K-S distances associated with Add-TMCMC implemented with and without diffeomorphism.}
\label{fig:TMCMC_w_wo_diffeo_compare_50}
\end{figure}

\section{Concluding remarks}
\label{sec:conclusions}

We presented a comprehensive comparative study of geometric ergodicity and convergence behavior of various versions of TMCMC:
additive, ``essentially full" multiplicative and mixture TMCMC. Additive TMCMC is the easiest to implement and as observed in the
simulation study, has somewhat better convergence to the target distribution compared to RWMH. The essentially 
fully multiplicative 
TMCMC traverses the sample space more rapidly but we observed that it is relatively slow in convergence 
to the target density  
compared to the standard RWMH approach. The best convergence results are obtained for mixture TMCMC 
which combines the 
additive and the multiplicative moves in equal proportions.

Of considerable interest are situations when the high-dimensional target densities are not super-exponential
but can be handled by the diffeomorphism based approach. The relevant simulation studies detailed in 
Section \ref{subsubsec:diffeo_simstudy} demonstrate far superior convergence of additive TMCMC compared
to RWMH. Since these simulation studies are conducted assuming high dependence structure of the target densities,
the results are particularly encouraging and lead us to recommend TMCMC in general situations. Moreover,
it is to be noted that in these simulation studies we concern ourselves with only additive TMCMC. Since
a mixture of additive and multiplicative TMCMC is seen to be more efficient in comparison with additive TMCMC,
it is clear that such a mixture will outperform RWMH by even greater margins.

There are obviously some questions of further interest. We would definitely like to have quantitative rates of convergence for 
each of the three approaches to TMCMC. In this paper we considered the mixing proportion in mixture TMCMC to be $1/2$ 
and we also observed in our simulation study that extremal mixing proportions (which correspond to additive and 
essentially fully 
multiplicative approaches) lead to slower convergence compared to uniform mixing. But it would be worth noting how this rate 
of convergence changes with the change in mixing proportion. 

Optimal scaling of TMCMC methods is another area which is of considerable interest to us.
The optimal scaling for additive TMCMC has been studied for a broad 
class of multivariate target densities (\ctn{Dey13}), but the optimal scaling for mixture TMCMC and 
multiplicative or essentially fully 
multiplicative approaches are yet to be determined. The biggest challenge in dealing with this problem is that the generator functions 
for the associated time scaled diffusion process for these methods are hard to express in any simple analytic form. 

One area we are currently focussing on is defining adaptive versions of the TMCMC approach (additive and multiplicative) 
and comparing the performances (convergence criterion and acceptance rate in particular) among various adaptive schemes 
and also with the typical non adaptive algorithms we considered here. 

We are also trying to expand the scope of our approach beyond 
$\mathbb{R}^{d}$ by considering spheres and other Riemannian or Symplectic manifolds as the support of the target distributions and 
it would be interesting to investigate such properties like irreducibility, detailed balance and 
ergodicity properties of the TMCMC algorithms over such spaces.

\section*{Acknowledgment}
We are sincerely grateful to three anonymous reviewers 
whose comments led to a much improved version of our manuscript.

\section*{Appendix}
\begin{appendix}

\section{Minorization condition for multiplicative TMCMC}
\label{sec:minorization}

For the one-dimensional case, minorization conditions of multiplicative TMCMC has been established
by \ctn{Dutta12}. Here we generalize the results to arbitrary dimension.
For simplicity we assume
$p_i=q_i=1/3$ for $i=1,\ldots,d$. 
The following theorem establishes the minorization condition for multiplicative TMCMC.

\begin{theorem}
\label{theorem:minor_mult}
Let the target density $\pi$ be bounded and positive on compact sets. Then there exists a nonzero
measure $\nu$, a positive integer $m$, $\delta>0$, and a {\it small set} $E^*$ such that
\begin{equation}
\left\{P^{(2)}\right\}^m(x,\mathbb A)\geq\delta \nu(\mathbb A),\quad\forall x\in E^*\quad\mbox{and for all Borel sets}\ \ \mathbb A. 
\label{eq:minor1}
\end{equation}
\end{theorem}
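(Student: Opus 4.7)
The plan is to take $m = d$ and construct an explicit $d$-step trajectory that updates one coordinate per step, then lower-bound the resulting transition density against Lebesgue measure restricted to a suitable target set. Fix $E^* = \{x \in \mathbb{R}^d : a \leq |x_i| \leq b\text{ for all }i\}$ for some $0 < a < b < \infty$; this is a compact subset of $\mathbb{R}^d \setminus \mathcal{V}$ and is the standard candidate small set. Let $\mathbf{e}_k$ denote the move-type with $b_k = 1$, $b_j = 0$ for $j \neq k$; since $p_i = q_i = 1/3$, each coordinate of $b$ is uniform on $\{-1,0,+1\}$, so the probability of selecting $\mathbf{e}_k$ at any step is $3^{-d}$. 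If the proposal chooses $\mathbf{e}_k$ together with scalar $\epsilon_k$, then only the $k$-th coordinate is updated to $x_k\epsilon_k$, and the Jacobian is $|\epsilon_k|$. Thus, after $d$ steps with successive move-types $\mathbf{e}_1,\ldots,\mathbf{e}_d$ and scalars $\epsilon_1,\ldots,\epsilon_d$ (all accepted), the chain reaches the point $y = (x_1\epsilon_1,\ldots,x_d\epsilon_d)$.

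Next, fix a target set $F = \{y : a' \leq |y_i| \leq b'\}$ with $0 < a' \leq b' < a$, so that for any $x \in E^*$ and any $y \in F$ we have $|y_k/x_k| \in [a'/b,\,b'/a] \subset (0,1)$; in particular $\epsilon_k := y_k/x_k$ lies in a fixed compact subset of $[-1,1]$ bounded away from $0$, $+1$, $-1$ and independent of $x$. Restricting the sum over move-type sequences to the single sequence $(\mathbf{e}_1,\ldots,\mathbf{e}_d)$, and changing variables from $\epsilon_k$ to $y_k = x_k\epsilon_k$ (with $d\epsilon_k = dy_k/|x_k|$), gives, for any Borel $\mathbb{A}$,
\begin{equation*}
\{P^{(2)}\}^d(x,\mathbb{A}) \;\geq\; 3^{-d^2} \int_{\mathbb{A}\cap F} \Bigl[\prod_{k=1}^d \alpha(x^{(k-1)},\mathbf{e}_k,y_k/x_k)\,g^{(2)}(y_k/x_k)\,\tfrac{1}{|x_k|}\Bigr]\,dy_1\cdots dy_d,
\end{equation*}
where $x^{(k-1)}$ is the state after the first $k-1$ accepted steps and $\alpha(\cdot)$ denotes the acceptance probability at that step.

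Finally, I bound each factor from below uniformly over $x \in E^*$ and $y \in F$. All intermediate states $x^{(k-1)}$ lie in a fixed compact subset $\widetilde{E}$ of $\mathbb{R}^d \setminus \mathcal{V}$ determined by $E^*$ and $F$. Since $\pi$ is assumed bounded and positive on compact sets, the density ratios $\pi(T^{(2)}_{\mathbf{e}_k}(x^{(k-1)},\epsilon_k))/\pi(x^{(k-1)})$ are bounded below by some constant on $\widetilde{E}$; combined with $|J(\mathbf{e}_k,\epsilon_k)| = |\epsilon_k| \geq a'/b > 0$, this yields $\alpha(\cdot) \geq \alpha_0 > 0$. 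Similarly, since the ratios $y_k/x_k$ range over a compact set that can be chosen inside the support of $g^{(2)}$ (this is where we use the freedom to choose $a,b,a',b'$ so as to avoid the small neighborhoods of $0$, $\pm 1$ that $g^{(2)}$ may exclude), we have $g^{(2)}(y_k/x_k) \geq c_g > 0$; and $1/|x_k| \geq 1/b$. Collecting constants,
\begin{equation*}
\{P^{(2)}\}^d(x,\mathbb{A}) \;\geq\; 3^{-d^2}\,\alpha_0^d\,c_g^d\,b^{-d}\,\mathrm{Leb}(\mathbb{A}\cap F),
\end{equation*}
so we may set $\nu(\cdot) = \mathrm{Leb}(\,\cdot\,\cap F)/\mathrm{Leb}(F)$ and $\delta = 3^{-d^2}\alpha_0^d c_g^d b^{-d}\,\mathrm{Leb}(F) > 0$ to obtain (\ref{eq:minor1}). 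The main obstacle is the careful coupling of the sizes of $E^*$ and the auxiliary set $F$ with the support of $g^{(2)}$: one must verify that $[a'/b, b'/a]$ can be placed within the region where $g^{(2)}$ is bounded below, and simultaneously ensure that all intermediate states of the $d$-step path remain inside a fixed compact subset of $\mathbb{R}^d \setminus \mathcal{V}$ so that the lower bound on $\alpha_0$ is uniform; the remaining computations are routine.
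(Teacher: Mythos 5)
Your proposal is correct and shares the paper's central strategy: take $m=d$, restrict to the move-types that update a single coordinate per step, and use boundedness and positivity of $\pi$ on a fixed compact set (together with a lower bound on the Jacobian $|\epsilon|$) to bound the acceptance probabilities away from zero uniformly. Where you genuinely diverge is in the construction of the minorizing measure. The paper keeps everything in $\epsilon$-space: it considers all forward/backward coordinate combinations, restricts to $\mathbb S_{\eta}=\{\eta<|\epsilon_i|\leq 1\}$, and takes $\nu(\cdot)=G(\cdot\cap\mathbb S_{\eta})/G(\mathbb S_{\eta})$ built from the proposal measure $G$ applied to the preimage sets $\mathbb A_i$ of $\mathbb A\cap\mathbb C$. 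You instead perform the change of variables $\epsilon_k\mapsto y_k=x_k\epsilon_k$ and land on normalized Lebesgue measure on a fixed annular box $F$ disjoint in magnitude from $E^*$. Your route buys an $x$-independent minorizing measure in a transparent way (the paper's identification of $\cup_i\mathbb A_i$ with $\mathbb A^*$ glosses over the fact that the $\mathbb A_i$ depend on $x$), at the price of one extra hypothesis the paper does not need: you require $g^{(2)}$ to be bounded below by $c_g>0$ on the compact set of ratios $\{y_k/x_k: x\in E^*,\,y\in F\}$, whereas the paper only ever uses $G$ as a measure. You should state that hypothesis explicitly (it holds for the proposal densities actually used in the paper, but not for an arbitrary $q^{(2)}$), and you should also note that the ratio set contains both signs when $x_k$ and $y_k$ can have opposite signs, so $g^{(2)}$ must be bounded below on both the positive and negative parts of that range, or $F$ must be intersected with the orthants reachable from $E^*$. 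With those two points made explicit, the argument is complete.
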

\begin{proof}
Observe that, from $x=(x_1,\ldots,x_d)$ it is possible to move to any Borel set $\mathbb A$ 
in at least $d$ steps using  
those multiplicative TMCMC move types $b=(b_1,\ldots,b_d)$ which update only one coordinate at a time.
Hence, for our purpose it is sufficient to confine attention to these moves.

Let $E^*$ denote a compact subset of $\mathbb R^d$. Also, let $\mathbb C$ be a compact set containing $E^*$.
Let $\mathbb A^*=\mathbb A\cap \mathbb C$. For the simplicity of presentation we present the proof
of minorization for $d=2$.

Let $\mathbb A_1=\{(\epsilon_1,\epsilon_2):(x_1\epsilon_1,x_2\epsilon_2)\in\mathbb A^*\}$, 
$\mathbb A_2=\{(\epsilon_1,\epsilon_2):(x_1/\epsilon_1,x_2/\epsilon_2)\in\mathbb A^*\}$,
$\mathbb A_3=\{(\epsilon_1,\epsilon_2):(x_1\epsilon_1,x_2/\epsilon_2)\in\mathbb A^*\}$, and
$\mathbb A_4=\{(\epsilon_1,\epsilon_2):(x_1/\epsilon_1,x_2\epsilon_2)\in\mathbb A^*\}$.

For $x\in E^*$, we have
\begin{align}
&\left\{P^{(2)}\right\}^2(x,\mathbb A)\geq \left\{P^{(2)}\right\}^2(x,\mathbb A^*)\notag\\  
&\geq \frac{1}{3^4}
\int_{\mathbb A_1}
\min\left\{1,\frac{\pi(x_1\epsilon_1,x_2)|\epsilon_1|}{\pi(x_1,x_2)}\right\}
\times \min\left\{1,\frac{\pi(x_1\epsilon_1,x_2\epsilon_2)|\epsilon_2|}{\pi(x_1\epsilon_1,x_2)}\right\}
g^{(2)}(\epsilon_1)g^{(2)}(\epsilon_2)d\epsilon_1d\epsilon_2\notag\\
&+ \frac{1}{3^4}
\int_{\mathbb A_2}
\min\left\{1,\frac{\pi(x_1/\epsilon_1,x_2)|\epsilon|^{-1}}{\pi(x_1,x_2)}\right\}
\times \min\left\{1,\frac{\pi(x_1/\epsilon_1,x_2/\epsilon_2)|\epsilon_2|^{-1}}{\pi(x_1/\epsilon_1,x_2)}\right\}
g^{(2)}(\epsilon_1)g^{(2)}(\epsilon_2)d\epsilon_1d\epsilon_2\notag\\
&+ \frac{1}{3^4}
\int_{\mathbb A_3}
\min\left\{1,\frac{\pi(x_1\epsilon_1,x_2)|\epsilon_1|}{\pi(x_1,x_2)}\right\}
\times \min\left\{1,\frac{\pi(x_1\epsilon_1,x_2/\epsilon_2)|\epsilon_2|^{-1}}{\pi(x_1\epsilon_1,x_2)}\right\}
g^{(2)}(\epsilon_1)g^{(2)}(\epsilon_2)d\epsilon_1d\epsilon_2\notag\\
&+ \frac{1}{3^4}
\int_{\mathbb A_4}
\min\left\{1,\frac{\pi(x_1/\epsilon_1,x_2)|\epsilon_1|^{-1}}{\pi(x_1,x_2)}\right\}
\times \min\left\{1,\frac{\pi(x_1/\epsilon_1,x_2\epsilon_2)|\epsilon_2|}{\pi(x_1/\epsilon_1,x_2)}\right\}
g^{(2)}(\epsilon_1)g^{(2)}(\epsilon_2)d\epsilon_1d\epsilon_2.
\label{eq:minorization}
\end{align}

Let $r=\inf_{y\in\mathbb C}\pi(y)$ and $R=\sup_{y\in\mathbb C}\pi(y)$.
Also note that each integral on $\mathbb A_i$; $i=1,2,3,4$, can be split into 
$\mathbb A_i=\{\mathbb A_i\cap\mathbb S_{\eta}\}\cup\{\mathbb A_i\cap\mathbb S^c_{\eta}\}$, where 
$\mathbb S_{\eta}=\{(\epsilon_1,\epsilon_2):\eta<|\epsilon_1|\leq 1, \eta<|\epsilon_2|\leq 1\}$, for some $\eta>0$,
and $\mathbb S^c_{\eta}$ denotes the complement of $\mathbb S_{\eta}$. 
Let $G$ denote the probability measure corresponding
to the distribution $\epsilon_1,\epsilon_2\stackrel{i.i.d.}{\sim}g^{(2)}$. 

On $\mathbb A_i\cap\mathbb S^c_{\eta}$, for $i=1,3,4$, the corresponding integrands
have infimum zero; hence zero is the lower bound of the respective integrals on 
$\mathbb A_i\cap\mathbb S^c_{\eta}$, for $i=1,3,4$. 
On $\mathbb A_2\cap\mathbb S^c_{\eta}$, the 
integrand of the second integral has infimum equal to 1; hence, 
the corresponding integral is bounded below by
$\frac{1}{3^4}G(\mathbb A_2\cap\mathbb S^c_{\eta})$.
Note that $G(\mathbb A_2\cap\mathbb S^c_{\eta})$ can be made
arbitrarily small by choosing $\eta$ to be as small as desired.

On $\mathbb A_i\cap\mathbb S_{\eta}$, each of the integrals are bounded below by 
$\frac{\eta^2}{3^4}(\frac{r}{R})^2G(\mathbb A_i\cap\mathbb S_{\eta})$.
Hence,
\begin{align}
\left\{P^{(2)}\right\}^2(x,\mathbb A)&\geq \left\{P^{(2)}\right\}^2(x,\mathbb A^*)\notag\\  
&\geq \frac{\eta^2}{3^4}\left(\frac{r}{R}\right)^2\sum_{i=1}^4G(\mathbb A_i\cap\mathbb S_{\eta})\notag\\
&\geq \frac{\eta^2}{3^4}\left(\frac{r}{R}\right)^2G(\left\{\cup_{i=1}^4\mathbb A_i\right\}\cap\mathbb S_{\eta})\notag\\
&= \frac{\eta^2}{3^4}\left(\frac{r}{R}\right)^2G(\mathbb A^*\cap\mathbb S_{\eta}).\notag\\
&= \frac{\eta^2}{3^4}\left(\frac{r}{R}\right)^2G(\mathbb S_{\eta})
\times\frac{G(\mathbb A^*\cap\mathbb S_{\eta})}{G(\mathbb S_{\eta})}.\notag\\
&=\delta\nu(\mathbb A^*),
\end{align}
with 
\[
\delta=\frac{\eta^2}{3^4}\left(\frac{r}{R}\right)^2G(\mathbb S_{\eta})\quad\mbox{and}\quad 
\nu(\mathbb A^*)=\frac{G(\mathbb A^*\cap\mathbb S_{\eta})}{G(\mathbb S_{\eta})}.
\]
Hence, minorization holds for multiplicative TMCMC, and $E^*$ is the small set. 
The same ideas of the proof go through for any finite dimension $d$.
\end{proof}

We next show that vectors in the set 
\[
\mathcal V=\{(v_1,\ldots,v_d)\in\mathbb R^d: v_i=0~ \mbox{for at least one}~i\in\{1,\ldots,d\}\},
\]
can not be limit points of small sets. 
For our purpose we need a lemma
which can be seen as a generalization of Lemma 1 of \ctn{Dutta12} to arbitrary dimensions
and for vectors in $\mathcal V$.
\begin{lemma}
\label{lemma:minor_mult}
Fix $v=(v_1,\ldots,v_d)\in\mathcal V$. For $\{i_1,\ldots,i_k\}\subseteq\{1,\ldots,d\}$, where $k\leq d$, let
$v_{i_j}=0$, for $j=1,\ldots,k$.
Let $\{x_n\}$ be a sequence of positive (negative) numbers decreasing (increasing) to zero. 
Consider the sequence $v_n=(v_{1,n},\ldots,v_{d,n})'$, where $v_{j,n}=x_n$ for $j=i_1,\ldots,i_k$,
and $v_{j,n}=v_j$ for $j\in\{1,\ldots,d\}\backslash \{i_1,\ldots,i_k\}$. If $v_i=0$ for $i=1,\ldots,d$,
then $v_n=(x_n,\ldots,x_n)'$ may also be considered.
Then,
\begin{equation}
P^{(2)}(v_n,\mathbb A)\rightarrow 0, 
\label{eq:minor_mult}
\end{equation}
for all Borel sets $\mathbb A$ such that $\mathbb A\cap\{(v_1,\ldots,v_d)\in\mathbb R^d:v_{i_j}=0;~j=1,\ldots,k\}=\emptyset$.
\end{lemma}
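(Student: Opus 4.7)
The plan is to bound $P^{(2)}(v_n,\mathbb A)$ above by the one-step proposal probability $Q^{(2)}(v_n,\mathbb A)$, since only proposed moves that are accepted can contribute to $P^{(2)}$, and then show that $Q^{(2)}(v_n,\mathbb A)$ vanishes as $n\to\infty$. For this it is necessary to read the disjointness hypothesis on $\mathbb A$ in the mildly stronger sense that $\mathbb A$ is bounded, equivalently that its closure is a compact set disjoint from the closed subspace $\mathcal H=\{y\in\mathbb R^d:y_{i_j}=0,\,j=1,\ldots,k\}$, so that $\mathbb A$ is separated from $\mathcal H$ by some positive distance $\delta$. This strengthening is essentially automatic in the intended application, where $\mathbb A$ is a bounded Borel set against which a putative minorization inequality is being tested.

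First, I would split the proposal probability by move-type $b\in\{-1,0,+1\}^d$,
\[
Q^{(2)}(v_n,\mathbb A)=\sum_{b}\frac{1}{3^d}\int \mathbf 1_{\mathbb A}\bigl(T^{(2)}_b(v_n,\epsilon)\bigr)\,g^{(2)}(\epsilon)\,d\epsilon,
\]
and analyze when the $i_j$-th coordinate of the proposal, namely $x_n\epsilon^{b_{i_j}}$, can have magnitude at least $\delta$. If $b_{i_j}\in\{0,+1\}$, then since $|\epsilon|\leq 1$ we have $|x_n\epsilon^{b_{i_j}}|\leq|x_n|$, which is smaller than $\delta$ for every sufficiently large $n$. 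Consequently, landing in $\mathbb A$ (which requires $\max_j|y_{i_j}|\geq\delta$) forces $b_{i_j}=-1$ for some $j\in\{1,\ldots,k\}$, together with $|\epsilon|\leq|x_n|/\delta$.

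Second, summing over the at most $3^d$ move-types that satisfy the preceding constraint, I would conclude
\[
Q^{(2)}(v_n,\mathbb A)\leq \int_{|\epsilon|\leq|x_n|/\delta}g^{(2)}(\epsilon)\,d\epsilon,
\]
and the right-hand side tends to zero as $n\to\infty$ because the probability measure induced by $g^{(2)}$ is absolutely continuous with respect to Lebesgue measure while the length of the integration interval collapses to zero. Since $P^{(2)}(v_n,\mathbb A)\leq Q^{(2)}(v_n,\mathbb A)$, the desired conclusion follows.

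The main obstacle is not the estimate above but the precise formulation of the lemma. Taken literally, the hypothesis that $\mathbb A\cap\mathcal H=\emptyset$ is too weak: the set $\mathbb A=\mathcal H^c$ satisfies it but yields $Q^{(2)}(v_n,\mathbb A)=1$ for every $n$, since all coordinates of the proposal are nonzero almost surely whenever $x_n\neq 0$. The natural remedy is to restrict attention to bounded Borel sets, or equivalently to invoke inner regularity and pass to compact subsets; this suffices for the intended application of ruling out $v\in\mathcal V$ as a limit point of small sets, because a small set must carry a nontrivial measure $\nu$, and it is enough to exhibit one bounded Borel $\mathbb A\subset\mathcal H^c$ with $\nu(\mathbb A)>0$ for which $P^{(2)}(v_n,\mathbb A)\to 0$ to obtain the required contradiction.
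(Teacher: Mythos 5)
Your argument is essentially the paper's own: the paper also bounds $P^{(2)}(v_n,\cdot)$ by the proposal probability (``since the acceptance probability is bounded above by 1'') and observes that sending the near-zero coordinate $x_n$ to a point of magnitude at least $|y|>0$ forces the backward move $z=x_n/\epsilon$ with $|\epsilon|\leq |x_n|/|y|\rightarrow 0$, an event whose $g^{(2)}$-probability vanishes; the paper carries this out explicitly for $d=2$ on the half-spaces $(-\infty,y]\times\mathbb R$ and $[y,\infty)\times\mathbb R$, while you do it uniformly over move-types in general dimension. Your side remark about the hypothesis is a genuine and worthwhile catch: taken literally, $\mathbb A=\mathcal H^c$ satisfies $\mathbb A\cap\mathcal H=\emptyset$ yet $P^{(2)}(v_n,\mathcal H^c)=1$ for all $n$, so the conclusion really does require $\mathbb A$ to be at positive distance from $\mathcal H$ (boundedness of $\mathbb A$ is sufficient but not necessary --- the paper's own half-spaces are unbounded yet separated from the hyperplane); as you note, this costs nothing in the application, since the contradiction with the minorization inequality only needs one such set of positive $\nu$-measure, obtainable by inner regularity.
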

\begin{proof}
Without loss of generality we present the proof for $d=2$. Let us fix $v=(v_1,v_2)$, where $v_1=0$ and $v_2\in\mathbb R$.
Let $v_n=(x_n,v_2)$. Note that for moving from $x_n$ to $z\in\mathbb R$, where $|x_n|\leq |z|$ for all $n$, we must simulate
$\epsilon = x_n/z$ and take the backward move $z=x_n/\epsilon$. The move $z=x_n\epsilon$, with $\epsilon=z/x_n$
can not be valid in this case, since $x_n\rightarrow 0$ implies that for large $n$, $\epsilon\notin [-1,1]$.

Since the acceptance probability is bounded above by 1, we have, for $y<0$,
\begin{align}
P^{(2)}(v_n,(-\infty,y]\times (-\infty,\infty))
&\leq \frac{1}{3^2}\int_{x_n/y}^0g(\epsilon)d\epsilon\notag\\
&\rightarrow 0.
\label{eq:limit_point1}
\end{align}
If $y>0$, then
\begin{align}
P^{(2)}(v_n,[y,\infty)\times (-\infty,\infty))
&\leq \frac{1}{3^2}\int_0^{x_n/y}g(\epsilon)d\epsilon\notag\\
&\rightarrow 0.
\label{eq:limit_point2}
\end{align}
Hence, (\ref{eq:minor_mult}) holds when $d=2$. The proof clearly goes through for any dimension $d$. 

If $v=(0,0)$, we can consider $v_n=(x_n,0)'$ or $v_n=(x_n,x_n)'$. Then, in addition to
(\ref{eq:limit_point1}) and (\ref{eq:limit_point2}), which clearly hold, the following also hold true:
if $y<0$
\[
P^{(2)}(v_n,(-\infty,\infty)\times (-\infty,y])\rightarrow 0,
\]
and
\[
P^{(2)}(v_n,(-\infty,\infty)\times [y,\infty))\rightarrow 0,
\]
if $y>0$.
These imply that for dimension $d=2$,
\begin{equation}
P^{(2)}(v_n,\cdot)\rightarrow I_{\{\bzero\}}(\cdot).
\label{eq:limit_point_0}
\end{equation}
The above result (\ref{eq:limit_point_0}) clearly holds for any dimension $d$ for
$v=(0,0,\ldots,0)'$ and $v_n=x_n\bone$, where $\bone=(1,1,\ldots,1)'$ is the $d$-component
vector of ones.
\end{proof}

Now, if $v\in\mathcal V$ is a limit point of $E^*$, then there exists a sequence $v_n$ as in Lemma \ref{lemma:minor_mult},
converging to $v$.  
This, and Lemma \ref{lemma:minor_mult} imply that for any fixed integer $m>1$, and for any Borel set $\mathbb A$, 
\begin{align}
\left\{P^{(2)}\right\}^m(v_n,\mathbb A)
&=\int_{\mathbb R^d}\left\{P^{(2)}\right\}^{m-1}(z,\mathbb A)P^{(2)}(v_n,dz)\notag\\
&\rightarrow 0,
\label{eq:prob_lim1}
\end{align}
if $\mathbb A\cap\{(v_1,\ldots,v_d)\in\mathbb R^d:v_{i_j}=0;~j=1,\ldots,k\}=\emptyset$.

In particular, if $\bzero$ is a limit point of $E$, then 
for any fixed integer $m>1$, and for any Borel set $\mathbb A$, 
\begin{align}
\left\{P^{(2)}\right\}^m(x_n\bone,\mathbb A)
&\rightarrow I_{\{\bzero\}}(\mathbb A).
\label{eq:prob_lim2}
\end{align}
Both (\ref{eq:prob_lim1}) and (\ref{eq:prob_lim2})
contradict the minorization inequality (\ref{eq:minor1}).

Now consider the case of additive-multiplicative TMCMC. Let the coordinates with indices 
$\{j_1,j_2,\ldots,j_{\ell}\}\subset\{1,2,\ldots,d\}$
be given the multiplicative transformation and let the remaining coordinates be given the additive transformation.
Here, let $\mathcal V(j_1,\ldots,j_{\ell})=
\{(v_1,\ldots,v_d)'\in\mathbb R^d:v_j=0~\mbox{for at least one}~j\in\{j_1,j_2,\ldots,j_{\ell}\}\}$. Then
vectors $v\in \mathcal V(j_1,\ldots,j_{\ell})$ can not be limit points of small sets associated 
with additive-multiplicative TMCMC.
In particular, $\bzero$ can not be a limit point.
The proof is the same as in the case of multiplicative TMCMC, and hence omitted.

\section{Proof of geometric ergodicity of the Markov transition kernel 
$P=\pi(\mathbb N_0)P^{(1)}+\pi(\mathbb N^c_0)P^{(2)}$}
\label{sec:P_geo}

Let us first introduce an auxiliary random variable $Z$, with
\begin{align}
Pr(Z=1)&=\pi(\mathbb N_0)\quad\mbox{and}\quad  Pr(Z=2)=1-Pr(Z=1).\label{eq:pi_Z}
\end{align}
Note that for $i=1,2$,
\begin{equation}
P(x,A|Z=i)=P^{(i)}(x,A)\quad\mbox{and}\quad \pi(A|Z=i)=\pi_i.
\label{eq:pi_i}
\end{equation}
Also note that, since $P^{(i)}$ is geometrically ergodic when the target density
is $\pi_i$, we must have
\begin{equation}
\left\|\left\{P^{(i)}\right\}^n(x,\cdot)-\pi_i(\cdot)\right\|\leq M_i(x)\rho^n_i,
\label{eq:pi_geo}
\end{equation}
for $i=1,2$, for some $M_1(x),M_2(x)<\infty$ and $0<\rho_1,\rho_2<1$.

Now,
\begin{align}
&\|P^n(x,\cdot)-\pi(\cdot)\|_{TV}=\underset{A\in\mathcal B(\mathbb R^d)}
\sup\bigg |P^n(x,A)-\pi(A)\bigg |\notag\\
&= \underset{A\in\mathcal B(\mathbb R^d)}
\sup\bigg |P^n(x,A|Z=1)Pr(Z=1)+ P^n(x,A|Z=2)Pr(Z=2)\notag\\
&\quad\quad-\left(\pi(A|Z=1)Pr(Z=1)+\pi(A|Z=2)Pr(Z=2)\right)\bigg |\notag\\
&= \underset{A\in\mathcal B(\mathbb R^d)}
\sup\bigg |\left\{P^{(1)}\right\}^n(x,A)Pr(Z=1)+ \left\{P^{(2)}\right\}^n(x,A)Pr(Z=2)\notag\\
&\quad\quad-\left(\pi_1(A)Pr(Z=1)+\pi_2(A)Pr(Z=2)\right)\bigg |\quad\mbox{by (\ref{eq:pi_i})}\notag\\
&\leq Pr(Z=1)\left\|\left\{P^{(1)}\right\}^n(x,\cdot)-\pi_1(\cdot)\right\|
+Pr(Z=2)\left\|\left\{P^{(2)}\right\}^n(x,\cdot)-\pi_2(\cdot)\right\|\notag\\
&\leq Pr(Z=1)M_1(x)\rho^n_1+Pr(Z=2)M_2(x)\rho^n_2\quad\mbox{by (\ref{eq:pi_geo})}\notag\\
&\leq M(x)\rho^n,\notag
\end{align}
where $M(x)\geq\max\{M_1(x),M_2(x)\}$, and $\rho\geq\max\{\rho_1,\rho_2\}$.
Hence, $P$ is geometrically ergodic when the target density is $\pi$.

Note that the proof employed in Section \ref{subsec:usual_mixture_kernel} for showing 
geometric ergodicity of the alternative mixture Markov transition kernel $P^*$, is also valid
for showing geometric ergodicity of $P$, but the current proof (with slight modification;
replacing the summations with integrations) is appropriate for
proving geometric ergodicity of continuous mixture kernels of the form 
(\ref{eq:non_super_exp_P}) for continuous mixture target densities of the form
(\ref{eq:non_super_exp_pi}) since a single function $V$ need not be appropriate 
for (uncountably) infinite number of mixture components.

\section{Discussion on estimation of the mixing probability $\pi(\mathbb N_0)$}
\label{sec:implement_P}
In order to implement the Markov transition kernel $P$, for each $k=1,2,\ldots$, 
we are required to draw $u\sim U(0,1)$;
if $u<\pi(\mathbb N_0)$, we select $x^{(k)}_1$, else we select $x^{(k)}_2$. Note that 
$\pi(\mathbb N_0)$ is not known, and needs to be estimated numerically.
Direct estimation using TMCMC samples from $\pi$ will generally not be reliable, since
the region $\mathbb N_0$, being arbitrarily small, can be easily missed by any MCMC method.
However, this may be reliably estimated using importance sampling as follows.

Let $\pi(x)=c\ell(x)$, where $c=1/\int\ell(y)dy$ is the unknown normalizing constant.
Also, let $h(x)=|\mathbb N_0|^{-1}I_{\mathbb N_0}(x)$ be the uniform distribution
on $\mathbb N_0$, where $|\mathbb N_0|$ denotes the Lebesgue measure of the set $\mathbb N_0$.
We may use $h$ as the importance sampling density in the region $\mathbb N_0$. For the region $\mathbb N^c_0$
we may consider some thick-tailed importance sampling density $g(x)$, for example, a $d$-variate 
$t$-density, but adjusting the support to be $\mathbb N^c_0$.
Then
\begin{align}
\pi(\mathbb N_0) &=\frac{\int_{\mathbb N_0}\ell(x)dx}{\int\ell(x)dx}
=\frac{\int_{\mathbb N_0}\frac{\ell(x)}{h(x)}h(x)dx}
{\int_{\mathbb N_0}\frac{\ell(x)}{h(x)}h(x)dx+\int_{\mathbb N^c_0}\frac{\ell(x)}{g(x)}g(x)dx}\notag\\
&\approx\frac{\frac{1}{N_1}\sum_{j=1}^{N_1}\frac{\ell(x^{(j)})}{h(x^{(j)})}}
{\frac{1}{N_1}\sum_{j=1}^{N_1}\frac{\ell(x^{(j)})}{h(x^{(j)})}
+\frac{1}{N_2}\sum_{k=1}^{N_2}\frac{\ell(y^{(k)})}{g(y^{(k)})}}=\hat\pi(\mathbb N_0)~\mbox{(say)},\notag
\end{align}
where $\{x^{(j)};j=1,\ldots,N_1\}$ are $i.i.d.$ realizations drawn from the uniform distribution $h$ 
and $\{y^{(k)};k=1,\ldots,N_2\}$ are $i.i.d.$ or TMCMC realizations from $g$, depending on the complexity
of the form of $g$. 
The parameters of $g$ may be chosen by variational methods; see http://www.gatsby.ucl.ac.uk/vbayes/
for a vast repository of papers, softwares and links on variational methods.

Observe that even though we are proposing to estimate $\pi(\mathbb N_0)$ by $\hat\pi(\mathbb N_0)$, 
implementation of the mixture kernel $P$ with $\hat\pi(\mathbb N_0)$ as the mixing probability
is expected to be exactly the same as the mixture kernel $P$ with the true mixing probability $\pi(\mathbb N_0)$.
This is because even if $\hat\pi(\mathbb N_0)$ is only a reasonably accurate estimate of 
$\pi(\mathbb N_0)$, it is expected that
for any $u\sim U(0,1)$, $u<\pi(\mathbb N_0)$ if and only if $u<\hat\pi(\mathbb N_0)$.
For instance, if $\hat\pi(\mathbb N_0)=\pi(\mathbb N_0)+\eta$,
for some $\eta>0$, then $Pr\left(\pi(\mathbb N_0)<u<\hat\pi(\mathbb N_0)\right)=\eta$.
Even if $\eta$ is not extremely small, the above probability is still reasonably small,
for reasonably small values of $\eta$.
In other words, a very high degree of accuracy of the estimate $\hat\pi(\mathbb N_0)$ 
is not that important in this case.

\end{appendix}

\bibliographystyle{natbib}
\bibliography{irmcmc}

\end{document}